\tikzset{grow'=right} 
\tikzset{every tree node/.style={anchor=base west}} 
\makeatletter\@ifpackageloaded{mathpazo}\@tempswatrue\@tempswafalse
  \DeclareFontFamily{OT1}{pzc}{}
  \DeclareFontShape{OT1}{pzc}{m}{it}{<-> s * [1.15] pzcmi7t}{}
  \DeclareMathAlphabet{\mathpzc}{OT1}{pzc}{m}{it}
\makeatletter\@ifpackageloaded{biblatex}{%
  \usepackage{csquotes} 
  \bibliography{../../references}
  \renewbibmacro{in:}{%
    \ifentrytype{incollection}{\printtext{\bibstring{in}\intitlepunct}}{}}
  \renewbibmacro{publisher+location+date}{%
    \iflistundef{publisher}
      {\setunit*{\addcomma\space}}
      {\setunit*{\addcomma\space}}%
    \printlist{publisher}%
    \setunit*{\addcomma\space}%
    \printlist{location}%
    \setunit*{\addcomma\space}%
    \usebibmacro{date}%
    \newunit}
  \DeclareFieldFormat[article]{pages}{#1\isdot}
  \DeclareFieldFormat[article,incollection,inproceedings,unpublished]{title}{#1\isdot}
  \DeclareFieldFormat[thesis]{title}{\mkbibemph{#1\isdot}}
  \DeclareFieldFormat[unpublished]{date}{(#1)\isdot}
  \DeclareFieldFormat[unpublished]{note}{#1\nopunct} 
  \DeclareFieldFormat[article]{journaltitle}{\mkbibemph{#1\isdot}}
  
  \AtEveryBibitem{%
    \ifentrytype{book}{}{
      \clearname{editor}
    }
  }
  \newbibmacro*{bbx:parunit}{%
    \ifbibliography
      {\setunit{\bibpagerefpunct}\newblock
       \usebibmacro{pageref}%
       \clearlist{pageref}%
       \setunit{\adddot\par\nobreak}}
      {}
  }
  \renewbibmacro*{doi+eprint+url}{%
    \usebibmacro{bbx:parunit}
    \iftoggle{bbx:doi}
      {\printfield{doi}}
      {}%
    \iftoggle{bbx:eprint}
      {\usebibmacro{eprint}}
      {}%
    \iftoggle{bbx:url}
      {\usebibmacro{url+urldate}}
      {}
  }
  \renewbibmacro*{eprint}{%
    \usebibmacro{bbx:parunit}
    \iffieldundef{eprinttype}
      {\printfield{eprint}}
      {\printfield[eprint:\strfield{eprinttype}]{eprint}}
  }
  \renewbibmacro*{url+urldate}{%
    \usebibmacro{bbx:parunit}
    \printfield{url}%
    \iffieldundef{urlyear}
      {}
      {\setunit*{\addspace}%
       \printtext[urldate]{\printurldate}}
  }
}{}\makeatother
\declaretheorem[numberwithin=section,refname={theorem,theorems},Refname={Theorem,Theorems}]{theorem}
\declaretheorem[sibling=theorem,style=definition]{definition}
\declaretheorem[sibling=theorem,style=definition,name=Example]{example}
\declaretheorem[sibling=theorem,style=definition,name=Remark]{remark}
\declaretheorem[sibling=theorem,name=Lemma]{lemma}
\declaretheorem[sibling=theorem,name=Proposition]{proposition}
\declaretheorem[sibling=theorem,name=Corollary]{corollary}
\declaretheorem[sibling=theorem,name=Question]{question}
\declaretheorem[sibling=theorem,name=Problem]{problem}
\makeatletter\@ifpackageloaded{hyperref}{%
  \usepackage{xcolor}
  \definecolor{dark-red}{rgb}{0.4,0.15,0.15}
  \definecolor{dark-blue}{rgb}{0.15,0.15,0.4}
  \definecolor{medium-blue}{rgb}{0,0,0.5}
  \hypersetup{
    colorlinks,
    linkcolor={dark-red},
    citecolor={dark-blue},
    urlcolor={medium-blue}%
  }

}{}\makeatother
\newcommand{\infw}[1]{%
  \ifcat\noexpand#1\relax\bm{#1}
  \else\mathbf{#1}\fi}          
\newcommand{\OC}[1]{\overline{\mathcal{O}(#1)}}
\newcommand{\ans}[1]{\mathcal{#1}}
\newcommand{\rep}[1][]{\mathrm{rep}_{#1}}
\newcommand{\val}[1][]{\mathrm{val}_{#1}}
\newcommand{\supp}{\mathrm{supp}}
\newcommand{\Lang}[2][]{\mathcal{L}_{#2}\ifthenelse{\isempty{#1}}{}{(#1)}}
\newcommand{\N}{\mathbb{N}}
\newcommand{\keywords}[1]{\par\noindent{\footnotesize{\em Keywords\/}: #1}}
\title{Automatic winning shifts}
\author[,1,2,3]{Jarkko Peltomäki\footnote{Corresponding author.\\E-mail addresses:
\href{mailto:r@turambar.org}{r@turambar.org} (J. Peltomäki), \href{mailto:vosalo@utu.fi}{vosalo@utu.fi} (V. Salo).}}
\affil[1]{The Turku Collegium for Science, Medicine and Technology TCSMT, University of Turku, Turku, Finland}
\affil[2]{Turku Centre for Computer Science TUCS, Turku, Finland}
\affil[3]{University of Turku, Department of Mathematics and Statistics, Turku, Finland}
\author[3]{Ville Salo}
\date{}
\begin{document}
\maketitle
\vspace{1em}
\noindent
\hrulefill
\begin{abstract}
  \vspace{-0.5em}
  \noindent
  To each one-dimensional subshift $X$, we may associate a winning shift $W(X)$ which arises from a combinatorial game
  played on the language of $X$. Previously it has been studied what properties of $X$ does $W(X)$ inherit. For
  example, $X$ and $W(X)$ have the same factor complexity and if $X$ is a sofic subshift, then $W(X)$ is also sofic. In
  this paper, we develop a notion of automaticity for $W(X)$, that is, we propose what it means that a vector
  representation of $W(X)$ is accepted by a finite automaton.

  Let $S$ be an abstract numeration system such that addition with respect to $S$ is a rational relation. Let $X$ be a
  subshift generated by an $S$-automatic word. We prove that as long as there is a bound on the number of nonzero
  symbols in configurations of $W(X)$ (which follows from $X$ having sublinear factor complexity), then $W(X)$ is
  accepted by a finite automaton, which can be effectively constructed from the description of $X$. We provide an
  explicit automaton when $X$ is generated by certain automatic words such as the Thue-Morse word.
  \vspace{1em}
  \keywords{winning shift, combinatorial game, abstract numeration system, automatic sequence, regular language, sofic shift}
  \vspace{-1em}
\end{abstract}
\hrulefill

\section{Introduction}
Consider a \emph{target set} $X$ of words of a common length $n$ written over an alphabet $A$. Let
$\alpha_1 \dotsm \alpha_n$ be a \emph{choice sequence} of integers in $\{0, 1, \ldots, |A| - 1\}$. Given the choice
sequence, Alice and Bob play a game as follows. On round $j$, $1 \leq j \leq n$, Alice chooses a subset $A_j$ of $A$ of
size $\alpha_j + 1$. Then Bob picks a letter $a_j$ from $A_j$. After $n$ rounds, a word $a_1 \dotsm a_n$ is built. If
this word is in $X$, then Alice wins, and Bob wins otherwise.

Let $W(X)$ be the set of choice sequences for which Alice has a winning strategy. This set is called the \emph{winning
set} of $X$. For example, if $X = \{000, 110, 111\}$ and the choice sequence is $101$, then Bob has a winning strategy.
Indeed on the first turn Alice must allow Bob to choose from $A_1 = \{0, 1\}$ (since this is the only subset of size
$2$ of the alphabet), and Bob may pick the letter $0$. Alice must then select the subset $A_1 = \{0\}$ since no word in
$X$ begins with $01$. On the final round Bob again has two choices, and he may select $1$. This results in the word
$001$ which is not in $X$. Hence $101 \notin W(X)$. It is straightforward to verify that $W(X) = \{000, 001, 100\}$.

More generally, the game can be allowed to have infinitely many turns, and the winning set makes sense for sets $X$
containing words of different lengths by setting
\begin{equation*}
  W(X) = \bigcup_{n \in \N \cup \{\N\}} W(X \cap A^n).
\end{equation*}
The winning set was introduced in the paper \cite{2014:playing_with_subshifts} by the second author and I. Törmä. The
paper \cite{2014:playing_with_subshifts} contains rigorous definitions of the preceding concepts and proofs of the
basic properties of winning sets. A key property of $W(X)$ is that it is \emph{hereditary} (or downward-closed): if $u$
and $v$ are words of equal length satisfying $u \leq v$ and $v \in W(X)$, then $u \in W(X)$ (here $\leq$ is the
coordinatewise order induced by the natural order on $\N$). Most importantly, if $X$ is a subshift, then $W(X)$, now
called the \emph{winning shift} of $X$, is also a subshift. As a terminological point, in
\cite{2014:playing_with_subshifts} the term ``winning shift'' and the notation $W(X)$ refers to a slightly different
subshift in the case of a nonbinary alphabet, and our $W(X)$ is denoted by $\tilde W(X)$.

Several properties of $X$ are inherited by $W(X)$. For example, if $X$ is a regular language, then $W(X)$ is also
regular. Surprisingly factor complexity is preserved: there are equally many words of length $n$ in $X$ and $W(X)$ for
all $n$. In a sense, the mapping $X \mapsto W(X)$ is a factor complexity preserving map that reorganizes $X$ to a
hereditary set. This property was used in \cite{2019:on_winning_shifts_of_marked_uniform_substitutions} to rederive
results of \cite{1998:on_uniform_d0l_words} that allow to determine the factor complexity of fixed points of marked
uniform substitutions. Additionally, it was shown that $W(X)$ has a substitutive structure when $X$ is a subshift
generated by a marked uniform substitution. In
\cite{2020:descriptional_complexity_of_winning_sets_of_regular_languages}, Marcus and Törmä study the descriptional
complexity of $W(X)$ when $X$ is regular. A concept equivalent to winning sets has been introduced in the context of
set systems in \cite{2002:shattering_news}. In \cite{2021:trees_in_positive_entropy_subshifts}, winning shifts are
utilized to show that positive entropy implies that a subshift contains a ``steadily branching binary tree'' (which in
the case of a binary alphabet implies positive independence entropy).

The aim of this paper is to study the structure of the winning shift $W(X)$ of a subshift $X$ generated by an automatic
word. Automatic words are a well-studied class of words which are generated by finite state machines. In fact, the
above-mentioned fixed points of marked uniform substitutions are automatic words.

Our setup is as follows. We consider representations of integers in an abstract numeration system $\ans{S}$ and define
$\ans{S}$-automatic words as words obtained by feeding representations of integers to finite automata with output. We
then develop the notion of an $\ans{S}$-codable set of infinite words over $\N$ as a set of words whose supports,
represented in $\ans{S}$, can be recognized by a finite automaton. Our main result states that if the addition relation
of $\ans{S}$ is rational and $X$ is a subshift generated by an $\ans{S}$-automatic word with sublinear factor
complexity, then its winning shift $W(X)$ is $\ans{S}$-codable. We obtain as a corollary that if $\ans{S}$ is a Pisot
numeration system and $X$ is a subshift generated by an $\ans{S}$-automatic word, then $W(X)$ is $\ans{S}$-codable.
Pisot-automatic words include the well-known $k$-automatic words as well as Fibonacci-automatic words (such as the
Fibonacci word).

We also discuss the necessity of the assumptions of our main result. For the result, it is crucial that $W(X)$ is
countable. We give an example due to J.\ Cassaigne of a subshift $X$ generated by an $\ans{S}$-automatic word with
superlinear factor complexity for which $W(X)$ is uncountable, and we exhibit a subshift $X$ with sublinear factor
complexity such that $W(X)$ is uncountable.

The proof of the main result is constructive, so in principle an automaton for the winning shift can be found
algorithmically. We determine the automaton explicitly for the winning shifts of the subshifts generated by the
following $2$-automatic words: the Thue-Morse word, the regular paperfolding word, the Rudin-Shapiro word, and the
period-doubling word.

This paper also includes some general results on the $\ans{S}$-codability of winning shifts of sofic shifts, as well as
basic robustness properties of the class of $\ans{S}$-codable subshifts, in particular closure under topological
conjugacy.

\section{Preliminaries}

We assume that the reader is familiar with basic formal language theory. We point the reader to
\cite{1997:handbook_of_formal_languages_vol1} for information on automata and regular languages and
\cite{2002:algebraic_combinatorics_on_words} on combinatorics on words. As a terminological point, ``regular'' and
``rational'' both refer to regular languages, but we use the word ``rational'' to emphasize that a language encodes a
relation.

In this paper, $\N = \{0, 1, \ldots\}$, and we index words from $0$ unless stated otherwise. Let $A$ be an alphabet
and $A^\N$ be the set of right-infinite words over $A$. When $\infw{x} = x_0 x_1 \dotsm$ with $x_i \in A$, then with
$\infw{x}[i,j]$, $i \leq j$, we refer to the factor $x_i \dotsm x_j$ of $\infw{x}$. We use the shorthand $\infw{x}[i]$
for $\infw{x}[i,i]$. The set of factors, the \emph{language} of $\infw{x}$ is denoted by $\Lang[\infw{x}]{}$. The
\emph{factor complexity function} $\rho_{\infw{x}}$ counts the number of distinct factors of length $n$ in $\infw{x}$,
that is, $\rho_{\infw{x}}(n) = |\{\infw{x}[p,p+n-1] : p \geq 0\}|$. If $\rho_{\infw{x}}(n) = \mathcal{O}(n)$, then we
say that $\infw{x}$ has \emph{sublinear factor complexity}. Otherwise we say that $\infw{x}$ has \emph{superlinear
factor complexity}.

A \emph{subshift} $X$ is a shift-invariant subset of $A^\N$ which is compact in the product topology on $A^\N$ with the
discrete topology on a finite set $A$. We also consider subshifts in $\N^\N$, and they are defined analogously by
letting $\N$ to have the discrete topology. Compactness implies that the alphabet must actually be finite, that is, up
to renaming of letters, we have $X \subseteq A^\N$ for a finite $A$. Indeed, otherwise the open cover
$\{\{x_0 x_1 \dotsm \in X : x_0 = n\} : n \in \N\}$ of $X$ has no finite subcover. A subshift $X$ is determined by its
\emph{language} $\Lang[X]{}$ which is the set of factors of all words in $X$. By $\OC{\infw{x}}$ we mean the closure of
the orbit of $\infw{x}$ under the shift map. If $X = \OC{\infw{x}}$, then we say that $X$ is \emph{generated} by
$\infw{x}$. A subshift $X$ is \emph{transitive} if it is generated by $\infw{x}$ for some $\infw{x} \in X$.

Suppose that $A$ has a total order $<$. A subset $X$ of $A^n$ with $n \in \N \cup \{\N\}$ is \emph{hereditary} if
$v \in X$ and $u \leq v$ coordinatewise imply $u \in X$ for all $v \in X$ and all $u \in A^n$. If $X$ is a subshift,
then the smallest hereditary subshift containing $X$ is called the \emph{hereditary closure} of $X$. In the earlier
papers
\cite{2014:playing_with_subshifts,2019:on_winning_shifts_of_marked_uniform_substitutions,2020:descriptional_complexity_of_winning_sets_of_regular_languages},
the word downward-closed was used, but we opt to use hereditary which seems more common in symbolic dynamics
\cite{2007:independence_in_topological_and_c-dynamics,2013:topological_entropy_and_distributional_chaos_in_hereditary}.

Each word in the winning set $W(X)$ of $X$ corresponds to at least one winning strategy for Alice. It is useful to
think of the winning strategies as strategy trees.  When
\begin{equation*}
  X = \{0010, 0011, 0100, 0101, 0110, 1001, 1010, 1011, 1100, 1101\},
\end{equation*}
then
\begin{equation*}
  W(X) = \{0000, 0001, 0010, 0100, 0101, 1000, 1001, 1010, 1100, 1101\}.
\end{equation*}
Alice's winning strategy for the choice sequence $1101$ is depicted as the strategy tree of
\autoref{fig:thue-morse_strategy}. The interpretation is that on the first turn Alice has to offer two choices for Bob,
so the tree branches. The same branching happens on the second round. The letters immediately following the second
branching dictate which letter Alice should force Bob to choose on the third round. As the choice sequence ends with
$1$, the tree then branches once more. The paths from the root to the leaves form all possible words played when Alice
uses this strategy, and indeed they all belong to $X$.

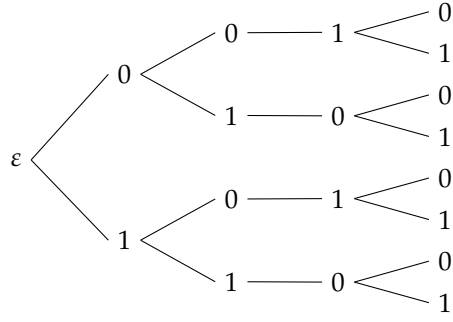
\begin{figure}
  \centering
  \centering
  \begin{tikzpicture}
    \tikzset{level distance=40pt}
    \Tree [.$\varepsilon$ [.$0$ [.$0$ [.$1$ [.$0$ ] [.$1$ ] ] ] [.$1$ [.$0$ [.$0$ ] [.$1$ ] ] ] ] [.$1$ [.$0$ [.$1$ [.$0$ ] [.$1$ ] ]
    ] [.$1$ [.$0$ [.$0$ ] [.$1$ ] ] ] ] ]
  \end{tikzpicture}
  \caption{Alice's winning strategy depicted as a strategy tree.}\label{fig:thue-morse_strategy}
\end{figure}

\section{Numeration Systems and \texorpdfstring{$\ans{S}$}{S}-automatic Words}

\subsection{Numeration Systems}
Let $L$ be an infinite language with a fixed total order $\prec$ of order type $\omega$, and let
$\ans{S} = (L, \prec)$. We call the pair $\ans{S}$ an \emph{abstract numeration system} which we abbreviate as
\emph{ANS}. Usually it is required in an ANS $\ans{S}$ that $L$ is regular and $\prec$ is the radix order induced by a
total order on the alphabet meaning short words come before long ones and words of the same length are ordered
lexicographically. Our definition is slightly more general since our main results work for any order $\prec$ under the
orthogonal (but morally stronger) assumption of ``addability''; see below for the definition. See
\cite{2001:numeration_systems_on_a_regular_language}, \cite[Ch.~3]{2010:combinatorics_automata_and_number_theory}, and
the recent paper \cite{2022:regular_sequences_and_synchronized_sequences_in_abstract} and its references for more on
ANS. The order $\prec$ naturally yields a bijection $\theta\colon L \to \N$. For $w \in L$ and $n \in \N$, we set
$\val[\ans{S}](w) = \theta(w)$ and $\rep[\ans{S}](n) = \theta^{-1}(n)$. We omit the subscripts $\ans{S}$ whenever they
are clear from context.

A common example of an ANS is a positional numeration system. Let $(U_i)$ be a strictly increasing sequence of positive
integers with $U_0 = 1$. Represent an integer $n$ greedily as $n = d_k U_k + \dotsm + d_0 U_0$ and set
$\rep(n) = d_k \dotsm d_0$. Then $(\rep(\N), \prec)$ is an ANS where $\prec$ is induced by the natural order on
$\N$. The usual base-$k$ representation of an integer is obtained by letting $U_i = k^i$ for all $i$. More generally if
$(U_i)$ satisfies a linear recurrence such that the characteristic polynomial of the recurrence is the minimal
polynomial of a Pisot number (a number whose conjugates $\alpha$ satisfy $|\alpha| < 1$), then we obtain a Pisot
numeration system. If $(U_i)$ is the sequence of Fibonacci numbers, then we obtain the Fibonacci numeration system
which is an example of a Pisot numeration system. Pisot numeration systems enjoy many good properties; see, e.g.,
\cite[Ch.~2.4]{2014:formal_languages_automata_and_numeration_systems_2} and
\cite[Ch.~2]{2010:combinatorics_automata_and_number_theory} and their references.

Next we extend the functions $\val[\ans{S}]$ and $\rep[\ans{S}]$ to tuples of words and integers. Let $A$ be an
alphabet and
\begin{equation*}
  \begin{bmatrix}
    w_1 \\
    \vdots \\
    w_d
  \end{bmatrix}
\end{equation*}
be an element of $A^* \times \dotsm \times A^*$ for a positive integer $d$. The words $w_1$, $\ldots$, $w_d$ might be
of distinct lengths and we cannot feed them to an automaton in parallel, so we pad them to equal length as follows. If
$\#$ is a letter that does not belong to $A$, then we set
\begin{equation*}
  \begin{bmatrix}
   w_1 \\
   \vdots \\
   w_d
  \end{bmatrix}^{\#}
  =
  \begin{bmatrix}
    \#^{M - |w_1|} w_1 \\
    \vdots \\
    \#^{M - |w_d|} w_d
  \end{bmatrix}
\end{equation*}
where $M = \max\{|w_1|, \ldots, |w_d|\}$. Thus by padding, we may work on tuples of words as if they are words over the
alphabet $(A \cup \{\#\})^d$.

Let $\ans{S}$ be an ANS, $\#$ be a letter not appearing in the language $L$ of $\ans{S}$, $d$ be a positive integer,
and $\mathbf{L} = (L^d)^{\#}$. We extend $\rep[\ans{S}]$ and $\val[\ans{S}]$ to elements of $\N^d$ and $\mathbf{L}$ as
follows:
\begin{equation*}
  \rep[\ans{S}]\colon \N^d \to \mathbf{L},
  \begin{bmatrix}
    n_1 \\
    \vdots \\
    n_d
  \end{bmatrix}
  \mapsto
  \begin{bmatrix}
    \rep[\ans{S}](n_1) \\
    \vdots \\
    \rep[\ans{S}](n_d)
  \end{bmatrix}^{\#}
\end{equation*}
and
\begin{equation*}
  \val[\ans{S}]\colon \mathbf{L} \to \N^d,
  \begin{bmatrix}
    w_1 \\
    \vdots \\
    w_d
  \end{bmatrix}
  \mapsto
  \begin{bmatrix}
    \val[\ans{S}](\tau_\#(w_1)) \\
    \vdots \\
    \val[\ans{S}](\tau_\#(w_d))
  \end{bmatrix}
\end{equation*}
where $\tau_{\#}$ is a substitution erasing the letter $\#$. Notice that if $L$ is regular, then $(L^d)^\#$ is also
regular.

The following definition is central to this paper. It defines what it means that a set of integers or integer tuples
is accepted by a finite automaton.

\begin{definition}
  Let $\ans{S}$ be an ANS. A subset $Y$ of $\N^d$ is \emph{$\ans{S}$-recognizable} if $\rep[\ans{S}](Y)$ is regular. If
  $Y \subseteq \N^{\leq d} = \N \cup \N^2 \cup \dotsm \cup \N^d$, we say that $Y$ is $\ans{S}$-recognizable if each
  $Y \cap \N^i$, $i = 1, \ldots, d$, is $\ans{S}$-recognizable.
\end{definition}

In the usual base-$k$ numeration system $\ans{B}$, the addition relation is rational meaning that the set
$\{(x, y, x+y) : x, y \in \N\}$ is $\ans{B}$-recognizable. This also holds for Pisot numeration systems
\cite{1992:representations_of_numbers_and_finite_automata}, but it is not a property of a general ANS $\ans{S}$.
Indeed, even a constant multiple $tY$ of an $\ans{S}$-recognizable set $Y$ is not always $\ans{S}$-recognizable; see
\cite[Section~3.3.3]{2010:combinatorics_automata_and_number_theory}. We give the following definitions.

\begin{definition}
  An ANS $\ans{S}$ is \emph{regular} if the set $\N$ is $\ans{S}$-recognizable. 
  An ANS $\ans{S}$ is \emph{comparable} if the set $\{(x, y) : x \leq y \}$ is $\ans{S}$-recognizable. 
  An ANS $\ans{S}$ is \emph{addable} if the set $\{(x, y, x+y) : x, y \in \N\}$ is $\ans{S}$-recognizable. 
\end{definition}

\begin{lemma}\label{lem:addable_implications}
  If an ANS is addable, then it is comparable. If an ANS is comparable, then it is regular.
\end{lemma}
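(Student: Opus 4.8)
The plan is to obtain each target relation from the given one using only projections, permutations of coordinates, and intersections of languages, operations under which $\ans{S}$-recognizability is preserved; the only delicate point is the $\#$-padding used to encode tuples.

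\emph{Comparable implies regular.} Suppose $\rep[\ans{S}](\{(x,y) : x \le y\})$ is regular over $(A \cup \{\#\})^2$, where $A$ is the alphabet of $L$. The track-swapping letter homomorphism $[\begin{smallmatrix}a\\b\end{smallmatrix}] \mapsto [\begin{smallmatrix}b\\a\end{smallmatrix}]$ maps this language to $\rep[\ans{S}](\{(x,y) : y \le x\})$, which is therefore regular as well, and intersecting the two languages yields exactly $\rep[\ans{S}](\{(x,x) : x \in \N\})$, because a padded pair of words determines the pair of integers it encodes uniquely, so such a pair lies in both languages precisely when its two values coincide. On the diagonal the two words have equal length, so no $\#$ is inserted and $\rep[\ans{S}](\{(x,x) : x \in \N\}) = \{[\begin{smallmatrix}w\\w\end{smallmatrix}] : w \in L\}$. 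Projecting onto the first track via $[\begin{smallmatrix}a\\b\end{smallmatrix}] \mapsto a$ gives $L = \rep[\ans{S}](\N)$, which is thus regular.

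\emph{Addable implies comparable.} The identity $x \le y \iff \exists z \in \N,\ x + z = y$ exhibits $\{(x,y) : x \le y\}$ as the image of the addition relation $R = \{(x, z, x+z) : x, z \in \N\}$ under deletion of the middle coordinate. Deleting the middle track of an automaton for $\rep[\ans{S}](R)$ produces a regular language $K$ over $(A \cup \{\#\})^2$, but $K$ is not literally $\rep[\ans{S}](\{(x,y) : x \le y\})$: inside a triple the surviving words $\rep[\ans{S}](x)$ and $\rep[\ans{S}](y)$ are padded to the length of $\max(|\rep[\ans{S}](x)|, |\rep[\ans{S}](z)|, |\rep[\ans{S}](y)|)$, which may be strictly larger than $\max(|\rep[\ans{S}](x)|, |\rep[\ans{S}](y)|)$ (note that $\prec$ need not be the radix order, so word lengths are not controlled by values). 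To correct this, set $\bar\# = [\begin{smallmatrix}\#\\\#\end{smallmatrix}]$ and pass to $K' = \{\mathbf{w} : \bar\#^{\,k}\mathbf{w} \in K \text{ for some } k \ge 0\} \cap \{\mathbf{w} : \mathbf{w} \text{ does not begin with } \bar\#\}$. This is a left quotient of $K$ by the regular language $\bar\#^{\,*}$ followed by an intersection with a regular language, hence regular, and one checks that $K' = \rep[\ans{S}](\{(x,y) : x \le y\})$: since for $x \le y$ the witness $z$ with $x+z=y$ is unique, every such pair contributes its canonical padded encoding to $K'$, while intersecting with strings that do not begin with $\bar\#$ removes precisely the over-padded ones. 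Hence $\ans{S}$ is comparable. (The same normalization applied to $\rep[\ans{S}](R)$ after constraining the middle track to encode $0$ shows directly that an addable ANS is regular.)

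The appeals to closure of regular languages under homomorphic image, left quotient, and intersection are routine. The one step that demands care, and the main obstacle, is the padding bookkeeping in the second implication: one must verify that the normalized projection $K'$ loses no pair $(x,y)$ with $x \le y$, which is guaranteed by uniqueness of the witness $z$, and gains no spurious tuple, which is guaranteed by discarding the strings beginning with $\bar\#$.
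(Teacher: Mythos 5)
Your proof is correct and follows essentially the same route as the paper, which simply notes that, up to easy padding issues, comparability follows from addability by projecting to the first and third components and regularity follows from comparability by projecting to the first component. Your write-up just makes the padding bookkeeping explicit (via the left quotient by $\bar\#^*$, and the diagonal-intersection trick in the first implication), details the paper leaves to the reader.
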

\begin{proof}
  Up to (easy) padding issues, comparability follows from addability by projecting to the first and third component,
  and regularity follows from comparability by projecting to the first component.
\end{proof}

The literature contains many examples of ANS that are not Pisot and not addable, but we are unaware of any ANS using
the radix order which is not Pisot and is addable.

\subsection{\texorpdfstring{$\ans{S}$}{S}-automatic Words}
We only give the definition of $\ans{S}$-automatic words. For a more comprehensive introduction to the subject, we
refer the reader to \cite{2003:automatic_sequences}.

A \emph{deterministic finite automaton with output}, or DFAO, is a finite automaton with an output function $\tau$
associated with the set of states. When a word $w$ is fed to the automaton and state $q$ is reached, the output of the
automaton with input $w$ is defined as $\tau(q)$.

\begin{definition}
  Let $\ans{S}$ be an ANS. An infinite word $\infw{x}$ is \emph{$\ans{S}$-automatic} if there exists a DFAO
  $\mathcal{A}$ such that $\infw{x}[n]$ equals the output of $\mathcal{A}$ with input $\rep[S](n)$ for all $n \geq 0$.
\end{definition}

\section{Weakly \texorpdfstring{$\ans{S}$}{S}-codable and \texorpdfstring{$\ans{S}$}{S}-codable Sets}\label{sec:s_codable}
In this section, we show how to represent sequences in $\N^\N$ so that their supports can be recognized by a finite
automaton.

Consider an infinite word $\infw{x}$ in $\N^\N$. If $\infw{x} = x_0 x_1 \dotsm$, $x_i \in \N$, then we define
\begin{equation*}
  \sum \infw{x} = \sum_{i \in \N} x_i.
\end{equation*}
The \emph{support} $\supp(\infw{x})$ of $\infw{x}$ is the set $\{n \in \N : x_n \neq 0\}$.

\begin{definition}
  Let $\infw{x}$ in $\N^\N$ be such that $\sum \infw{x} = d < \infty$. Let $\nu(\infw{x})$ be the unique vector
  $(n_1, \ldots, n_d)$ in $\N^d$ such that $n_i \leq n_{i+1}$ for all $i$ and $x_j = |\{k : n_k = j\}|$ for all
  $j \in \N$.
\end{definition}

For example, if $\infw{x} = 1002010^\omega$, then $\nu(\infw{x}) = (0, 3, 3, 5)$. In other words, we repeat the indices
in the support as many times as indicated by the letters: at index $0$ we have $1$, so the index $0$ is repeated once;
at index $3$ we have $2$, so the index $3$ is repeated twice; etc. Clearly $\infw{x}$ can be uniquely reconstructed
from $\nu(\infw{x})$: $\infw{x}$ is the sum of the characteristic functions of singleton sets $\{i\}$ where $i$ takes
on the values in $\nu(\infw{x})$ (with repetitions).

\begin{definition}
  Let $Y$ be a subset of $\N^\N$. Its \emph{coding dimension} is the smallest integer $d \in \N$
  such that $\sum \infw{y} \leq d$ for all $\infw{y} \in Y$, if such a $d$ exists.
\end{definition}

\begin{definition}
  Let $\ans{S}$ be an ANS. A subset $Y$ of $\N^\N$ is \emph{weakly $\ans{S}$-codable} if for all nonnegative integers
  $k$ the set
  \begin{equation*}
    \left\{\nu(\infw{y}) : \infw{y} \in Y, \sum \infw{y} \leq k \right\}
  \end{equation*}
  is $\ans{S}$-recognizable. The subset $Y$ is \emph{$\ans{S}$-codable} if it is weakly $\ans{S}$-codable and has
  finite coding dimension.
\end{definition}

Notice that if $Y$ is $\ans{S}$-codable, then $\nu(Y)$ is $\ans{S}$-recognizable since the union of finitely many
regular languages is regular. Notice also that weak $\ans{S}$-codability considers only words with finite support. Thus
if $Y$ consists of infinite words over $\{0, 1\}$ having infinitely many occurrences of $1$, then $Y$ is trivially
weakly $\ans{S}$-codable for any $\ans{S}$ since the sets in the above definition are all empty or contain an empty
vector. The notion makes most sense for subshifts where configurations with finite sum are dense. For example all
hereditary subshifts have this property.

$\ans{S}$-codability and weak $\ans{S}$-codability are relatively robust notions, see \autoref{sec:Robust}.

\section{Weakly \texorpdfstring{$\ans{S}$}{S}-codable Winning Shifts}
In this section, we introduce the necessary results and prove that the winning shift of an $\ans{S}$-automatic word is
weakly $\ans{S}$-codable when $\ans{S}$ is addable; see \autoref{thm:ws_weakly_codable}.

First we introduce another way to represent words in $\N^\N$. This representation allows more convenient proofs and
leads to the same $\ans{S}$-codability properties as in \autoref{sec:s_codable}.

Let $\infw{x} \in \N^\N$, and say $\infw{x}$ has finite support. View the support $\supp(\infw{x})$ as a vector
$(n_1, \ldots, n_k)$ with $n_i < n_{i+1}$ for all $i$, and denote this vector by $s(\infw{x})$. Let moreover
$e(\infw{x})$ be the word obtained from $\infw{x}$ by erasing all letters $0$ (this notion naturally extends to words
with infinite supports).

\begin{definition}
  Let $v$ be a word over $\N_{>0}$ and $Y$ be a subset of $\N^\N$. Define
  \begin{equation*}
    Q_v(Y) = \{\infw{y} \in Y : e(\infw{y}) = v\} \quad \text{and} \quad P_v(Y) = s(Q_v(Y)).
  \end{equation*}
\end{definition}

For example, the set $P_{111}(Y)$ is the set of $3$-tuples $(n_1, n_2, n_3)$ such that $n_1 < n_2 < n_3$ and there
exists $\infw{y}$ in $Y$ with letters $1$ at positions $n_1$, $n_2$, and $n_3$ while other positions of $\infw{y}$
equal $0$.

\begin{lemma}\label{lem:equivalence}
  Let $\ans{S}$ be an ANS and $Y$ be a subset of $\N^\N$. Then $Y$ is weakly $\ans{S}$-codable if and only if $P_v(Y)$
  is $\ans{S}$-recognizable for all words $v \in \N_{>0}^*$.
\end{lemma}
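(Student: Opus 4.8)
The plan is to prove the two implications by translating between the $\nu$-representation and the $(s,e)$-representation of a word $\infw{x} \in \N^\N$ with finite support. The key observation is that these two representations carry the same information but organized differently: $\nu(\infw{x})$ lists support indices with multiplicities given by the letter values, arranged in a single weakly increasing tuple, whereas the pair $(s(\infw{x}), e(\infw{x}))$ lists the distinct support indices in a strictly increasing tuple together with the word $v = e(\infw{x})$ of nonzero letters. Fixing $v = v_1 \dotsm v_k \in \N_{>0}^*$ pins down both $\sum \infw{x} = v_1 + \dotsm + v_k =: d$ and the way a strictly increasing $k$-tuple $(n_1, \ldots, n_k)$ inflates into the weakly increasing $d$-tuple $\nu(\infw{x}) = (n_1^{v_1}, n_2^{v_2}, \ldots, n_k^{v_k})$, where $n_i^{v_i}$ means $n_i$ repeated $v_i$ times. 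This inflation is a bijection between strictly increasing $k$-tuples and weakly increasing $d$-tuples whose run-length profile is exactly $(v_1, \ldots, v_k)$.

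First I would prove the forward direction. Assume $Y$ is weakly $\ans{S}$-codable and fix $v \in \N_{>0}^*$ with $d = \sum_i v_i$. By hypothesis, the set $N_d = \{\nu(\infw{y}) : \infw{y} \in Y,\ \sum \infw{y} \le d\}$ is $\ans{S}$-recognizable, i.e.\ $\rep[\ans{S}](N_d)$ is regular. Among the elements of $N_d$, the ones coming from words with $\sum \infw{y} = d$ \emph{and} $e(\infw{y}) = v$ are precisely those weakly increasing $d$-tuples whose maximal constant runs have lengths $v_1, \ldots, v_k$ in order; this is a first-order definable / automaton-checkable condition on the padded representation (equality of consecutive components within a block, strict inequality across blocks), so intersecting with it keeps the language regular. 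Applying the inflation bijection backwards is then the deflation map $(m_1, \ldots, m_d) \mapsto (m_1, m_{1+v_1}, m_{1+v_1+v_2}, \ldots)$, i.e.\ selecting a fixed subset of the $d$ coordinates; projecting a regular language of padded tuples onto a fixed subset of coordinates yields a regular language (standard, up to the padding bookkeeping the paper has already set up). Hence $\rep[\ans{S}](P_v(Y))$ is regular, so $P_v(Y)$ is $\ans{S}$-recognizable.

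For the converse, assume $P_v(Y)$ is $\ans{S}$-recognizable for every $v \in \N_{>0}^*$. Fix $k \in \N$. Only finitely many words $v \in \N_{>0}^*$ satisfy $\sum_i v_i \le k$, say $v^{(1)}, \ldots, v^{(r)}$, and the set $\{\nu(\infw{y}) : \infw{y} \in Y,\ \sum \infw{y} \le k\}$ decomposes as the finite union, over these $v^{(j)}$, of the images of $P_{v^{(j)}}(Y)$ under the corresponding inflation map $\iota_{v^{(j)}}$ (duplicating the $i$-th coordinate $v^{(j)}_i$ times). Each inflation map, on representations, is a "coordinate-duplication" transduction — each output component is a copy of a specified input component — and it sends regular languages of padded tuples to regular languages (again a routine automaton construction handling the padding). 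A finite union of regular languages is regular, so the displayed set is $\ans{S}$-recognizable for every $k$; thus $Y$ is weakly $\ans{S}$-codable.

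The main obstacle is purely bookkeeping: verifying that the coordinate-selection and coordinate-duplication maps genuinely preserve regularity \emph{after} the $\#$-padding, since padding a shorter word with $\#$'s on the left means the digit positions of different components need not be aligned with the "same" integer position, and duplicating or dropping a component can change which component is longest and hence the padding length $M$. I would handle this by working with the padded alphabet $(L \cup \{\#\})^{\text{(arity)}}$ as in the preliminaries and noting that synchronizing, reordering, copying, or deleting tracks of a multi-tape automaton while re-normalizing the $\#$-prefixes is a standard effective construction; none of it uses addability, only that $L$ is regular. I would also remark that the construction is effective, matching the paper's constructive emphasis, and that no hypothesis on the order $\prec$ is needed here beyond $\ans{S}$ being an ANS.
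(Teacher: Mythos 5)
Your proof is correct and takes essentially the same route as the paper's: for the forward direction, filter the recognizable $\nu$-set by the run-length profile determined by $v$ and then project one coordinate per block, and for the converse, inflate each $P_v(Y)$ by coordinate duplication and take the finite union over the finitely many $v$ with bounded sum. One small remark: the ``strict inequality across blocks'' test is automaton-checkable here only because $\nu$-tuples are weakly increasing by definition, so it reduces to mere inequality of (padded) representation words --- the paper glosses over the same point --- and in fact the argument needs neither addability nor even regularity of $L$.
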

\begin{proof}
  Suppose that $Y$ is weakly $\ans{S}$-codable. Let $v \in \N_{>0}^*$, and suppose that the sum of letters of $v$
  equals $k$. Let $Y_k = \left\{\nu(\infw{y}) : \infw{y} \in Y, \sum \infw{y} = k \right\}$. By assumption, there
  exists an automaton $\mathcal{A}$ accepting $\rep(Y_k)$. Based on the letters of $v$, we modify $\mathcal{A}$ as
  follows. If the first letter of $v$ is $i$, then we accept only those words whose $i$ first components are equal. If
  the second letter of $v$ equals $j$, then we accept only the words whose components $i+1$, $\ldots$, $i+j$ are equal
  and greater than $i$. We repeat this for each letter and obtain an automaton $\mathcal{A}'$. Then we project the
  words accepted by $\mathcal{A}'$ in a suitable way: we project the first $i$ components to one component, the next
  $j$ components to one component, and so on. The resulting language is regular and it equals $\rep(P_v(Y))$.

  When $\rep(P_v(Y))$ is regular for a given $v \in \N_{>0}^*$, we can reverse the projection made in the previous
  paragraph to obtain a regular language $\mathcal{L}_v$. We have
  \begin{equation*}
    \rep(Y_k) = \bigcup_{\substack{v \in \N_{>0}^* \\ \sum v = k}} \mathcal{L}_v,
  \end{equation*}
  so $\rep(Y_k)$ is regular as the union is finite. It follows that $Y$ is weakly $\ans{S}$-codable.
\end{proof}

Notice that it follows from \autoref{lem:equivalence} that $Y$ is $\ans{S}$-codable if and only if $P_v(Y)$ is
$\ans{S}$-recognizable for all words $v \in \N_{>0}^*$ and $P_v(Y)$ is nonempty for only finitely many $v$.

Our aim is to show that when $X$ is a subshift generated by an $\ans{S}$-automatic word, then $P_v(W(X))$ can be
described by a formula expressed in first order logic. When $\ans{S}$ is addable, the existence of such a formula
implies that $P_v(W(X))$ is $\ans{S}$-recognizable. By \autoref{lem:equivalence}, this means that $W(X)$ is weakly
$\ans{S}$-codable; see \autoref{thm:ws_weakly_codable}. Proving the existence of the logical formula for a general $v$
leads to a proof with complicated notation. We thus opt to prove the existence when $v = 111$ and $X$ is a binary
subshift. The proof has the flavor of the general proof, and it should convince the reader that the main ideas can be
carried out more generally.

Next we begin defining the formulas. We advise the reader to begin reading the proof of \autoref{prp:case_111} before
attempting to grasp the meaning of these formulas.

We now fix $v = 111$. Let $\infw{x}$ be an infinite binary word. Let $n_{000}$, $n_{001}$, $\ldots$, $n_{111}$ be
variables so that $n_{def}$ is one of these variables when $d, e, f \in \{0, 1\}$ are given. Let $a$, $b$, $c$ be free
variables. Define a formula $\varphi_{def}$ as follows:
\begin{equation*}
  \varphi_{def} = (\infw{x}[n_{def} + a] = d \land \infw{x}[n_{def} + b] = e \land \infw{x}[n_{def} + c] = f).
\end{equation*}
Moreover, we define formulas $\varphi_0(i)$, $\varphi_1(i)$, and $\varphi_2(i)$:
\begin{alignat*}{4}
  \varphi_0(i) &=\; &(&\infw{x}[n_{000} + i] = \infw{x}[n_{001} + i] = \infw{x}[n_{010} + i] = \infw{x}[n_{011} + i] = \\
               &    & &\infw{x}[n_{100} + i] = \infw{x}[n_{101} + i] = \infw{x}[n_{110} + i] = \infw{x}[n_{111} + i]), \\
  \varphi_1(i) &=\; &(&\infw{x}[n_{000} + i] = \infw{x}[n_{001} + i] = \infw{x}[n_{010} + i] = \infw{x}[n_{011} + i] \land \\
               &    & &\infw{x}[n_{100} + i] = \infw{x}[n_{101} + i] = \infw{x}[n_{110} + i] = \infw{x}[n_{111} + i]), \\
  \varphi_2(i) &=\; &(&\infw{x}[n_{000} + i] = \infw{x}[n_{001} + i] \land \infw{x}[n_{010} + i] = \infw{x}[n_{011} + i] \land \\
               &    & &\infw{x}[n_{100} + i] = \infw{x}[n_{101} + i] \land \infw{x}[n_{110} + i] = \infw{x}[n_{111} + i]).
\end{alignat*}
Let finally
\begin{align*}
  \psi(a, b, c) = ( (a < b < c) \land (\exists n_{000}, n_{001}, ..., n_{111}\colon & (\forall d, e, f \in \{0, 1\}\colon \varphi_{def}) \land \\
                                                                              &(\forall i \in [0, a)\colon \varphi_0(i)) \land \\
                                                                              &(\forall i \in (a, b)\colon \varphi_1(i)) \land \\
                                                                              &(\forall i \in (b, c)\colon \varphi_2(i)))).
\end{align*}

\begin{proposition}\label{prp:case_111}
  Let $X$ be the orbit closure of $\infw{x}$ in $\{0, 1\}^\N$. Then
  \begin{equation*}
    P_{111}(W(X)) = \{ (a, b, c) \in \N^3 : \psi(a, b, c)\}.
  \end{equation*}
\end{proposition}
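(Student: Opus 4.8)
The plan is to prove the two inclusions separately, with the main work being an analysis of what a winning strategy tree for Alice looks like when the choice sequence is exactly $111$ and the target words all lie in the language of $X$. Recall that a choice sequence $111$ means Alice must branch on all three turns, so her strategy tree is a complete binary tree of depth $3$ whose eight leaves spell out eight words $w_{def}$, $d,e,f\in\{0,1\}$, arranged so that $w_{def}$ is the leaf reached by Bob's choices $d$, then $e$, then $f$. The branching structure forces exactly the following agreement pattern on the $w_{def}$: all eight agree on the first coordinate (before the first branch), the four words with a fixed value of $d$ agree on the second coordinate among themselves (before the second branch within that subtree), and the two words with fixed $d,e$ agree on the third coordinate among themselves (before the third branch). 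Writing $n_{def}$ for the starting position in $\infw{x}$ of an occurrence of $w_{def}$, and $(a,b,c)$ for the three branching positions with $a<b<c$, these agreement constraints are exactly $\varphi_0(i)$ for $i\in[0,a)$, $\varphi_1(i)$ for $i\in(a,b)$, $\varphi_2(i)$ for $i\in(b,c)$, while the values at the branching positions themselves are encoded by $\varphi_{def}$ (coordinate $a$ reads $d$, coordinate $b$ reads $e$, coordinate $c$ reads $f$). Conversely, $(a,b,c)\in P_{111}(W(X))$ means precisely that the configuration with ones at positions $a,b,c$ and zeros elsewhere lies in $W(X)$, i.e.\ $X\cap\{0,1\}^{c+1}$ has a word of length $c+1$ for which the choice sequence $1^a01^{b-a-1}01^{c-b-1}01^{\ldots}$, padded appropriately, is winning for Alice — but since $W(X)$ is a subshift and membership of a finite-support configuration depends only on the winning set of finite words, this reduces to: there are eight factors of $\infw{x}$ of a common length $> c$ whose pairwise longest common prefixes realize the tree described above.

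For the inclusion $P_{111}(W(X))\subseteq\{(a,b,c):\psi(a,b,c)\}$, I would take $(a,b,c)\in P_{111}(W(X))$, fix a winning strategy tree for the choice sequence corresponding to this support, and read off the eight leaf-words as factors of words in $X$. Since $X=\OC{\infw{x}}$, every factor of a word in $X$ is a factor of $\infw{x}$, so each leaf-word $w_{def}$ occurs in $\infw{x}$ starting at some position $n_{def}$; I would then verify that the tree's branching structure gives exactly $\varphi_{def}$, $\varphi_0$, $\varphi_1$, $\varphi_2$ as above, so $\psi(a,b,c)$ holds. A small point to handle carefully: the leaf-words in the strategy tree need only have a common length that is at least $c+1$, and one must check that $c+1$ suffices (positions beyond $c$ are irrelevant to membership of the support $\{a,b,c\}$ in $W(X)$), which follows because appending zeros to a winning choice sequence keeps it winning for a subshift target. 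For the reverse inclusion, given $a<b<c$ and positions $n_{def}$ witnessing $\psi(a,b,c)$, I would build the strategy tree directly: the root offers $\{0,1\}$ and the subtrees are determined by the $n_{def}$; the agreement formulas $\varphi_0,\varphi_1,\varphi_2$ guarantee that the partial words agree up to each branching point, so the tree is well-defined, and $\varphi_{def}$ guarantees the leaves spell the eight factors $\infw{x}[n_{def},n_{def}+c]$, all of which are factors of $\infw{x}$ hence of words in $X$. This exhibits a winning strategy for the choice sequence with branches at $a,b,c$, so $(a,b,c)\in P_{111}(W(X))$.

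The step I expect to be the main obstacle is the precise translation between "Alice has a winning strategy for choice sequence $111\ldots$" and the combinatorial tree picture — in particular making rigorous that the relevant data is exactly a depth-$3$ complete binary tree with the stated longest-common-prefix constraints, and that only finitely many positions (namely up to $c$) matter. This is essentially unwinding the definition of $W(X)$ from \cite{2014:playing_with_subshifts} together with the fact that $W(X)$ is a hereditary subshift, so finite-support configurations of $W(X)$ correspond to winning choice sequences over finite windows; once that dictionary is set up, both inclusions are bookkeeping. A secondary technical nuisance is that some of the $n_{def}$ may coincide (the eight leaf-words need not be distinct, and even the positions may repeat), but the formulas are written with non-strict agreement and the argument goes through unchanged, so I would simply remark on this rather than belabor it.
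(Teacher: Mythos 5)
Your proposal is correct and takes essentially the same route as the paper's proof: unwind the definition of $W(X)$ into eight words of length $c+1$ forming a strategy tree that branches exactly at depths $a,b,c$, use $X=\OC{\infw{x}}$ to locate occurrences $n_{def}$ of these words in $\infw{x}$, and observe that the required common-prefix structure is precisely what $\varphi_{def},\varphi_0,\varphi_1,\varphi_2$ express, with the converse obtained by reading the formulas back as a strategy tree. (Two harmless slips: the relevant choice sequence is $0^a10^{b-a-1}10^{c-b-1}10^\omega$, with the roles of $0$ and $1$ opposite to what you wrote inline, and the eight leaf words are automatically distinct since they differ at positions $a,b,c$.)
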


\begin{proof}
  Suppose that there is a word $\infw{y}$ in the winning shift of $X$ such that it has letter $1$ at positions $a$,
  $b$, $c$ with $a < b < c$ and letter $0$ elsewhere so that $s(\infw{y}) = (a, b, c) \in P_{111}(W(X))$. By the
  definition of the winning shift, there exist words $w_{000}, w_{001}, \ldots, w_{111} \in \mathcal{L}(X)$, all of
  length $c+1$, such that the rooted partial binary tree formed by these words has outdegree $2$ at nodes on depths
  $a$, $b$, $c$, and outdegree $1$ elsewhere. This tree is the strategy tree for Alice's winning strategy. Since $X$ is
  over the alphabet $\{0,1\}$, at every branching we have one branch starting with $0$ and one starting with $1$. By
  ordering the words suitably, we may assume that
  \begin{equation*}
    w_{def} = u \cdot d \cdot u_d \cdot e \cdot u_{de} \cdot f
  \end{equation*}
  for all $d, e, f \in \{0, 1\}$, and for some words $u, u_d, u_{de}$, where $|u| = a$, $|u_d| = b-a-1$,
  $|u_{de}| = c-b-1$, and, as indicated by the subscripts, $u$ is the same for all $w_{def}$, $u_d$ only depends on $d$
  and $u_{de}$ only depends on $d$ and $e$.

  By the assumption that $X = \OC{\infw{x}}$, there exist $n_{def} \in \N$ such that
  $\infw{x}[n_{def}, n_{def+c}] = w_{def}$ for all $d, e, f \in \{0, 1\}$. These $n_{def}$ satisfy all of the
  $\forall$-statements in $\psi$. The first statement
  \begin{equation*}
    \forall d, e, f \in \{0, 1\}\colon (\infw{x}[n_{def} + a] = d \land \infw{x}[n_{def} + b] = e \land \infw{x}[n_{def} + c] = f)
  \end{equation*}
  holds because $w_{def}[a] = d \implies \infw{x}[n_{def} + a] = d$, so $\infw{x}[n_{def} + a] = d$ holds for all $d$,
  $e$, $f$. Similarly $\infw{x}[n_{def} + b] = e$ and $\infw{x}[n_{def} + c] = f$ hold by the definition of the words $w_{def}$.

  The second statement
  \begin{alignat*}{4}
    \forall i \in [0, a)&\colon &(&\infw{x}[n_{000} + i] = \infw{x}[n_{001} + i] = \infw{x}[n_{010} + i] = \infw{x}[n_{011} + i] = \\
                        &      & &\infw{x}[n_{100} + i] = \infw{x}[n_{101} + i] = \infw{x}[n_{110} + i] = \infw{x}[n_{111} + i])
  \end{alignat*}
  holds because $\infw{x}[n_{def} + i] = w_{def}[i] = u[i]$ for all $i \in [0, a)$ and $d,e,f \in \{0,1\}$, so all
  eight values $\infw{x}[n_{def} + i]$ are the same. The last two $\forall$-statements correspond to $u_d$ and $u_{de}$
  and are justified similarly.

  Conversely, if the $\forall$-statements hold for some choices of the $n_{def}$, then the words
  $\infw{x}[n_{def}, n_{def+c}]$ are of the desired form $w_{def} = u \cdot d \cdot u_d \cdot e \cdot u_{de} \cdot f$,
  and this proves that $(a, b, c) \in P_{111}(W(X))$.
\end{proof}

The following corollary can be deduced from the previous proposition using standard results concerning
$\ans{S}$-automatic sequences, but we provide the proof in our special case. See the discussion after
\autoref{thm:formula} for the full story.

\begin{corollary}\label{cor:words}
  Let $\ans{S}$ be an addable ANS, and let $\infw{x}$ in $\{0,1\}^\N$ be $\ans{S}$-automatic. Then
  $P_{111}(W(\OC{\infw{x}}))$ is $\ans{S}$-recognizable.
\end{corollary}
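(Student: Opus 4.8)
The plan is to combine Proposition~\ref{prp:case_111} with the logical characterization of $\ans{S}$-automatic words. By the proposition, $P_{111}(W(\OC{\infw{x}})) = \{(a,b,c) \in \N^3 : \psi(a,b,c)\}$, where $\psi$ is a first-order formula built from the predicates $\infw{x}[n] = d$, equalities and inequalities between integer variables, addition of variables, and bounded/unbounded quantifiers over $\N$. The key observation is that $\psi$ is a formula in the first-order theory $\langle \N, +, <, 0, 1, (V_a)\rangle$-style structure associated to $\infw{x}$, i.e.\ a formula in Presburger arithmetic enriched with the letter predicates of $\infw{x}$. It is a standard fact (going back to B\"uchi and made effective for $\ans{S}$-automatic words; see the references on automatic sequences cited in the paper) that if $\infw{x}$ is $\ans{S}$-automatic and $\ans{S}$ is addable, then any subset of $\N^d$ defined by such a formula is $\ans{S}$-recognizable, with an automaton effectively constructible from a DFAO for $\infw{x}$ and an automaton for the addition relation of $\ans{S}$.

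The steps I would carry out are as follows. First, note that addability of $\ans{S}$ gives an automaton for $\{(x,y,x+y)\}$, hence (by Lemma~\ref{lem:addable_implications}) automata for the comparison relation $\{(x,y) : x \le y\}$ and for $\N$ itself; together with closure of regular relations under Boolean operations and projection (which handles existential quantification, and, via complementation, universal quantification), this yields that every Presburger-definable relation is $\ans{S}$-recognizable. Second, incorporate the letter predicates: since $\infw{x}$ is $\ans{S}$-automatic via some DFAO $\mathcal{A}$, for each letter $d \in \{0,1\}$ the set $\{n : \infw{x}[n] = d\}$ is $\ans{S}$-recognizable (it is accepted by the automaton obtained from $\mathcal{A}$ by making final exactly the states with output $d$); expressions of the form $\infw{x}[n_{def} + a] = d$ are then $\ans{S}$-recognizable relations in the variables involved, because they are obtained by composing this recognizable set with the (recognizable) addition relation. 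Third, observe that the bounded quantifiers $\forall i \in [0,a)$, $\forall i \in (a,b)$, $\forall i \in (b,c)$ appearing in $\psi$ are syntactic sugar: $\forall i\,(0 \le i < a \implies \varphi_0(i))$, etc., so they reduce to ordinary universal quantification together with the comparison predicate. Fourth, process the formula $\psi$ bottom-up: the atomic pieces $\varphi_{def}$, $\varphi_0(i)$, $\varphi_1(i)$, $\varphi_2(i)$ are recognizable as finite conjunctions of recognizable relations; closing under conjunction, the bounded universal quantifiers, the block existential quantifier $\exists n_{000},\dots,n_{111}$, and the final conjunction with $a<b<c$ preserves recognizability at each stage. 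The resulting automaton accepts exactly $\rep[\ans{S}]$ of the triples $(a,b,c)$ satisfying $\psi$, which by Proposition~\ref{prp:case_111} is $\rep[\ans{S}](P_{111}(W(\OC{\infw{x}})))$, so this set is $\ans{S}$-recognizable.

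The only real subtlety — the step I expect to be the main point requiring care — is the handling of universal quantification, equivalently complementation of $\ans{S}$-recognizable relations. Over a genuinely positional numeration system with a well-behaved (e.g.\ prefix-closed, or at least regularity-preserving) representation language, complementation of regular relations is routine; but for a general addable ANS one must be mildly careful about padding with the symbol $\#$ and about whether the representation language $L$ and its padded powers $(L^d)^\#$ are regular and closed under the relevant operations. The paper's framework already ensures $L$ is regular and $(L^d)^\#$ is regular (noted just after the definition of $\rep[\ans{S}]$ on tuples), and addability is exactly the hypothesis that makes $+$ recognizable; so the argument goes through, but the bookkeeping of leading $\#$'s across projections and complementations is where one must be precise. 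A secondary minor point is effectivity: each closure operation (product, projection, determinization, complementation) is effective, so an automaton for $P_{111}(W(\OC{\infw{x}}))$ can in fact be constructed algorithmically from a DFAO for $\infw{x}$ and an automaton for the addition relation of $\ans{S}$; this matches the constructive claim in the introduction.
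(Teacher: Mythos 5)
Your proposal is correct and follows essentially the same route as the paper's proof: build $\ans{S}$-automata for the atomic predicates by composing the DFAO for $\infw{x}$ with the addition relation, use comparability (from addability) for the range conditions and $a<b<c$, eliminate the existential block by projection and the bounded universal quantifiers by the complement--project--complement trick, all while keeping the padded tuple language regular. The paper simply carries out this Büchi--Bruyère-style argument explicitly in the special case rather than citing the general theorem, which is exactly what you sketch.
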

\begin{proof}
  Let $\ans{S} = (L, \prec)$. We explain the steps for constructing a finite automaton that accepts
  $\rep(P_{111}(W(\OC{\infw{x}})))$. Let $\mathcal{A}$ be a DFAO for $\infw{x}$. Let
  $(a, b, c) \in P_{111}(W(\OC{\infw{x}}))$. By the definition of an ANS, the condition $a < b < c$ is equivalent with
  \begin{equation*}
    \rep(a) \prec \rep(b) \prec \rep(c) \land \rep(a) \neq \rep(b) \land \rep(b) \neq \rep(c).
  \end{equation*}
  and this can be checked at the end. The check is possible because addability implies that $\ans{S}$ is comparable
  (i.e., the relation $\prec$ is rational) by \autoref{lem:addable_implications}.

  Consider a tuple
  \begin{equation*}
    (w_a, w_b, w_c, w_{000}, w_{001}, \ldots, w_{111})
  \end{equation*}
  and define $a = \val(w_a)$, $b = \val(w_b)$, $c = \val(w_c)$, and $n_{def} = \val(w_{def})$ for
  $d, e, f \in \{0, 1\}$. Since regular languages over a Cartesian product alphabet are closed under projection, we can
  eliminate existentially quantified words, and thus it is enough to show that the language of tuples such that $a$,
  $b$, $c$, $n_{def}$ satisfy $\psi$ is regular. Notice that $L$ is regular by addability, so we can check that all of
  the words are in $L$ by intersecting with the Cartesian product language $L^{11}$ (with padding).

  Using addability, we can construct an automaton $A_{d,e,f,0}$ such that $A_{d,e,f,0}$ accepts the word
  $(w_a,w_b,w_c,w_{000},...,w_{111})$ over $\{0,1,\#\}^{11}$ if and only if
  \begin{equation*}
    \infw{x}[n_{def} + a] = d.
  \end{equation*}
  Namely, $A_{d,e,f,0}$ is accepted by $\mathcal{A} \circ (+) \circ \pi_{0, 3+4d+2e+f}$ by taking the final states to
  be those where $\mathcal{A}$ outputs the symbol $d$, where $(+)\colon L^2 \to L$ is the sum transduction
  corresponding to $\ans{S}$ and $\pi_{i_1, \ldots,i_k}$ is the projection to components $i_1$, $\ldots$, $i_k$ of the
  alphabet. Similarly, we can construct automata $A_{d,e,f,1}$ and $A_{d,e,f,2}$ that accept when
  $\infw{x}[n_{def} + b] = e$ and when $\infw{x}[n_{def} + a] = d$, respectively.

  Now, let $L_0$ be the (finite) intersection of the languages of $A_{d,e,f,i}$ for $d,e,f \in \{0,1\}$,
  $i \in \{0,1,2\}$. The language $L_0$ is regular by closure properties of regular languages. Clearly $L_0$ contains
  precisely those tuples $(w_a,w_b,w_c,w_{000},\ldots,w_{111})$ such that the corresponding numbers $a$, $b$, $c$,
  $n_{def}$, $d,e,f \in \{0,1\}$, satisfy the first $\forall$-statement of $\psi$.

  The other three $\forall$-statements are actual $\forall$-quantifications (rather than shorthands for a finite
  conjunction). We explain how to handle the middle one
  \begin{alignat*}{4}
    \forall i \in (a, b)&\colon &(&\infw{x}[n_{000} + i] = \infw{x}[n_{001} + i] = \infw{x}[n_{010} + i] = \infw{x}[n_{011} + i] \land \\
                        &       & &\infw{x}[n_{100} + i] = \infw{x}[n_{101} + i] = \infw{x}[n_{110} + i] = \infw{x}[n_{111} + i]),
  \end{alignat*}
  the others being similar. We first define an auxiliary automaton that accepts those $12$-tuples
  $(w_a,w_b,w_c,w_{000},...,w_{111},w_i)$ such that the formula holds with $i = \val(w_i)$ and then $\forall$-eliminate
  $w_i$. To accept such $12$-tuples, observe again that we can compute the value $\infw{x}[n_{000} + i]$ by an
  automaton (just like in the definition of $A_{d,e,f,j}$). Thus, we can also perform a comparison of these finitely
  many values, and compute the conjunction $\land$ of these values at the end of the computation. Then
  $(w_a,w_b,w_c,w_{000},...,w_{111},w_i)$ is accepted in the correct situations whenever $\rep(w_i) \in (a, b)$.

  Since we want to eliminate a $\forall$-quantifier, we want the automaton to accept all other values of $i$. For this,
  we use the rationality of $\prec$. The automaton can simply check whether $w_i \in L$ and
  $w_a \prec w_i \prec w_b, w_i \neq w_a, w_i \neq w_b$ holds, and accept if this is not the case. Elimination of
  $\forall$-quantifiers is dual to elimination of $\exists$-quantifiers. Simply complement, project to all but the
  $12$th coordinate, and complement again.

  Now, let $L_0$, $L_1$, $L_2$, and $L_3$ be the four regular sublanguages of $L^{11}$ corresponding to the
  $\forall$-quantifiers. Then $\pi_{0,1,2}(L_0 \cap L_1 \cap L_2 \cap L_3)$ is the desired regular language.
\end{proof}

As indicated above, \autoref{prp:case_111} and the previous corollary are straightforward to generalize. The assumption
of a binary alphabet has no effect on the formulas in \autoref{prp:case_111}, but there are some minor changes in the
proof ($d$, $e$, $f$ should range over some cardinality two subsets of the alphabet). To cover a general word $v$, one
should draw a directed tree where all nodes on depth $i$ have out-degree $v_i+1$. These nodes can be indexed naturally
by words $u$ such that $u \leq v$ (with elementwise comparison), and using a variable $n_u$ for each such $u$ one can
then easily program the analogue $\varphi_v$ of the formula $\varphi_{def}$ by writing out in first-order logic that
the words starting at the positions $n_u$ form a strategy tree.

We obtain the following result.

\begin{theorem}\label{thm:formula}
  If $X$ is a subshift generated by $\infw{x}$ and $v$ is a word of length $k$ in $\N_{>0}^*$, then there exists a
  first-order formula $\psi(n_1, \ldots, n_k)$ on $\ans{S}$-recognizable predicates such that
  \begin{equation*}
    P_v(W(X)) = \{ (n_1, \ldots, n_k) \in \N^k : \psi(n_1, \ldots, n_k)\}.
  \end{equation*}
\end{theorem}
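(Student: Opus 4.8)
The plan is to generalize the construction of $\psi$ from the special case $v = 111$ (Proposition~\ref{prp:case_111}) to an arbitrary word $v = v_0 \dotsm v_{k-1} \in \N_{>0}^*$ over an arbitrary finite alphabet $A$. The key combinatorial object is the strategy tree: if $\infw{y} \in W(X)$ has $e(\infw{y}) = v$, then Alice's winning strategy on the choice sequence determined by the support positions $(n_1, \ldots, n_k) = s(\infw{y})$ is a rooted tree of depth $n_k+1$ in which every node at depth $n_j$ has out-degree $v_{j-1}+1 \geq 2$ and every other node has out-degree $1$, and moreover every root-to-leaf path spells a word of $\Lang[X]{}$. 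The branches at a branching node of out-degree $v_{j-1}+1$ are labeled by $v_{j-1}+1$ distinct letters of $A$; fixing once and for all, for each $j$, a size-$(v_{j-1}+1)$ subset $A_j \subseteq A$ (possible since $v_{j-1}+1 \leq |A|$ whenever $P_v(W(X))$ is nonempty), we may index the leaves of the tree — equivalently the words $w$ appearing as root-to-leaf paths — by tuples $\bm{u} = (u_1, \ldots, u_k) \in A_1 \times \dotsm \times A_k$, where $u_j$ records which branch was taken at the $j$-th branching. These are exactly the indices ``$u$ with $u \le v$'' mentioned in the excerpt.

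First I would introduce, for each such index tuple $\bm{u}$, a variable $n_{\bm{u}}$ meant to hold an occurrence position in $\infw{x}$ of the word $w_{\bm{u}}$ spelled by the corresponding path; since $X = \OC{\infw{x}}$, membership $w_{\bm{u}} \in \Lang[X]{}$ is equivalent to the existence of such an occurrence. Then I would write down three families of subformulas mirroring $\varphi_{def}$, $\varphi_0(i)$, $\varphi_1(i)$, $\varphi_2(i)$. The first records the branch labels: for each $\bm{u}$ and each $j \in \{1, \ldots, k\}$ we require $\infw{x}[n_{\bm{u}} + n_j] = (u_j)$, where $n_j$ is the $j$-th support position (one of the free variables). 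The second records the ``coherence'' of the tree below/between branchings: for each index $i \in \N$ with $n_j < i < n_{j+1}$ (with the conventions $n_0 = -1$, $n_{k+1} = \infty$), the letter $\infw{x}[n_{\bm{u}} + i]$ must depend only on the first $j$ coordinates $(u_1, \ldots, u_j)$ of $\bm{u}$ — i.e.\ $\infw{x}[n_{\bm{u}} + i] = \infw{x}[n_{\bm{u}'} + i]$ whenever $\bm{u}$ and $\bm{u}'$ agree on their first $j$ coordinates. This is the exact analogue of the nested equalities in $\varphi_0, \varphi_1, \varphi_2$ (the case $v=111$, binary, has $j$-agreement classes of sizes $8, 4, 2$ on the three segments $[0,a)$, $(a,b)$, $(b,c)$). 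Finally $\psi(n_1, \ldots, n_k)$ asserts $n_1 < \dotsm < n_k$ together with $\exists (n_{\bm{u}})_{\bm u}$ such that the branch-label conjunction and the $k+1$ segment-coherence $\forall$-statements all hold.

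The correctness argument then copies the two directions of Proposition~\ref{prp:case_111} verbatim at the level of ideas. If $(n_1, \ldots, n_k) \in P_v(W(X))$, take a winning strategy tree, order its leaves by index tuples as above, extract the words $w_{\bm u}$, observe they have the block structure $w_{\bm u} = t_0 \cdot (u_1) \cdot t_{u_1} \cdot (u_2) \cdot t_{u_1 u_2} \cdots (u_k) \cdot t_{u_1 \dots u_k}$ with the $t$-blocks depending only on the indicated prefixes, pick occurrence positions $n_{\bm u}$ in $\infw{x}$ (possible since $X=\OC{\infw x}$), and check that all the $\forall$-statements hold precisely because the relevant letters of $\infw{x}$ coincide with entries of a shared $t$-block. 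Conversely, given witnesses $n_{\bm u}$ satisfying $\psi$, the words $\infw{x}[n_{\bm u}, n_{\bm u}+n_k]$ have exactly the required block structure, hence assemble into a valid strategy tree for the choice sequence with branch sizes $v_j+1$, which witnesses $(n_1,\ldots,n_k) \in P_v(W(X))$. One subtlety worth a sentence: when $v_{j-1}+1 > |A|$ for some $j$, no strategy tree exists, $P_v(W(X)) = \emptyset$, and one simply takes $\psi$ to be a contradiction such as $n_1 \neq n_1$; this is an $\ans{S}$-recognizable predicate trivially.

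The ingredients the formula is built from are: the order relation $<$ on $\N$; the addition relation (occurring as $n_{\bm u} + n_j$ and $n_{\bm u} + i$), which is $\ans{S}$-recognizable by the \emph{addable} hypothesis on $\ans{S}$; and the predicates $\infw{x}[m] = a$ for $a \in A$, which are $\ans{S}$-recognizable because $\infw{x}$ is $\ans{S}$-automatic (the acceptor is the DFAO for $\infw x$ with final states those outputting $a$). Since first-order formulas over $\ans{S}$-recognizable predicates again define $\ans{S}$-recognizable sets when $\ans{S}$ is addable — by closure of regular languages over product alphabets under Boolean operations and projection, exactly as carried out by hand in Corollary~\ref{cor:words} — this establishes the theorem and, in combination with Lemma~\ref{lem:equivalence}, will yield Theorem~\ref{thm:ws_weakly_codable}. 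The only real obstacle is bookkeeping: indexing the tree leaves cleanly by tuples $\bm u \le v$, getting the segment boundaries and $j$-agreement equivalence relations right, and handling the degenerate $v_{j-1}+1 > |A|$ case; there is no new mathematical difficulty beyond what Proposition~\ref{prp:case_111} already contains, which is presumably why the authors chose to prove the representative case in detail and only sketch the general $v$.
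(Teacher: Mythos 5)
Your overall architecture is the one the paper intends: index the leaves of the strategy tree by tuples $u \leq v$, introduce one position variable $n_{\bm u}$ per leaf, and write out in first-order logic that the factors of $\infw{x}$ starting at the positions $n_{\bm u}$ assemble into a strategy tree, with your segment-wise coherence conditions (``$\infw{x}[n_{\bm u}+i]$ depends only on the first $j$ coordinates of $\bm u$ when $n_j < i < n_{j+1}$'') exactly generalizing $\varphi_0$, $\varphi_1$, $\varphi_2$, and with the two directions of the correctness argument copied from \autoref{prp:case_111}. That part is fine, as is your observation that the needed predicates (addition, order, $\infw{x}[m]=a$) are $\ans{S}$-recognizable.

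There is, however, a genuine gap in how you encode the branchings over a non-binary alphabet. You fix, once and for all and per depth $j$, a subset $A_j \subseteq A$ with $|A_j| = v_{j-1}+1$ and then require $\infw{x}[n_{\bm u} + n_j] = u_j$, i.e.\ you force the letters Alice offers at the $j$-th branching to be literally the elements of $A_j$. Since the formula speaks about the fixed word $\infw{x}$, no relabelling is available, and this makes $\psi$ strictly stronger than membership in $P_v(W(X))$ whenever $v_{j-1}+1 < |A|$: already for $k=1$, $v=1$, $A = \{0,1,2\}$, if every right special factor of length $n_1$ in $\Lang[X]{}$ extends only by the letters $1$ and $2$, then $(n_1) \in P_1(W(X))$ while your formula with $A_1 = \{0,1\}$ is false, so your forward inclusion fails. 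Worse, the letter sets used at branchings may differ from node to node at the same depth (the pair offered after branch $u_1$ need not equal the pair offered after branch $u_1'$), so no global choice of the $A_j$ can work. The repair is precisely the point the paper flags with ``$d$, $e$, $f$ should range over some cardinality two subsets of the alphabet'': treat $u_j$ as a purely formal index and replace the conjunct $\infw{x}[n_{\bm u}+n_j] = u_j$ by conditions on $\infw{x}$ itself, namely that $\infw{x}[n_{\bm u}+n_j]$ depends only on $(u_1,\ldots,u_j)$ and that, for each fixed $(u_1,\ldots,u_{j-1})$, the $v_{j-1}+1$ values obtained as $u_j$ varies are pairwise distinct (equivalently, take a finite disjunction over all assignments of letter sets to the branching nodes). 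These are still equalities and inequalities between letters of $\infw{x}$ at positions given by sums of variables, hence $\ans{S}$-recognizable predicates, and with this change your argument goes through; note that the Walnut predicates of \autoref{sec:automata} encode branchings in exactly this way, via $\infw{x}[m_1+i] \neq \infw{x}[m_2+i]$ rather than hard-coded letters. In the binary case your construction coincides with \autoref{prp:case_111} and is correct as written.
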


The conclusion of $\ans{S}$-recognizability of \autoref{cor:words} is obtained whenever a predicate is formed from
$\ans{S}$-recognizable predicates using the logical connectives $\land$, $\lor$, $\lnot$, $\implies$, $\iff$ and the
quantifiers $\forall$, $\exists$ on variables describing elements of $\N$. In the case of the usual base-$b$ numeration
system, this is the famous B\"{u}chi-Bruy\`{e}re Theorem (see \cite{1994:logic_and_p-recognizable_sets_of_integers} for
a proof). The extension to general ANS is straightforward and is sketched in
\cite[Sect.~7.3]{2022:regular_sequences_and_synchronized_sequences_in_abstract}. The important point here is that
addability and existence of an automaton for $\infw{x}$ ensures that the predicates $\varphi_{def}$, $\varphi_0$,
$\varphi_1$, and $\varphi_2$ are $\ans{S}$-recognizable.

\begin{theorem}\label{thm:ws_weakly_codable}
  Let $\ans{S}$ be an addable ANS. Let $\infw{x}$ be an $\ans{S}$-automatic word and $X$ its orbit closure. Then the
  winning shift $W(X)$ of $X$ is weakly $\ans{S}$-codable.
\end{theorem}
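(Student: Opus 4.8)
The plan is to combine the characterization of weak $\ans{S}$-codability from \autoref{lem:equivalence} with the logical description of $P_v(W(X))$ provided by \autoref{thm:formula}. By \autoref{lem:equivalence}, it suffices to show that $P_v(W(X))$ is $\ans{S}$-recognizable for every word $v \in \N_{>0}^*$. By \autoref{thm:formula}, for each such $v$ of length $k$ there is a first-order formula $\psi(n_1, \ldots, n_k)$, built from $\ans{S}$-recognizable predicates using Boolean connectives and quantifiers over $\N$, such that $P_v(W(X)) = \{(n_1, \ldots, n_k) : \psi(n_1, \ldots, n_k)\}$. The remaining task is to argue that the solution set of such a formula is $\ans{S}$-recognizable, which is exactly where the addability hypothesis on $\ans{S}$ enters.

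First I would spell out why the atomic predicates are $\ans{S}$-recognizable. The formula $\psi$ of \autoref{thm:formula} is assembled from predicates of the form $\infw{x}[n_u + i] = a$ together with order comparisons between variables. Fix a DFAO $\mathcal{A}$ computing $\infw{x}$ from $\ans{S}$-representations. Addability gives a rational sum transduction $(+)\colon L^2 \to L$, so the relation $\{(m, i, j) : m + i = j\}$ is $\ans{S}$-recognizable; composing this with $\mathcal{A}$ (restricted to the states whose output is $a$) and projecting appropriately shows that $\{(n_u, i) : \infw{x}[n_u + i] = a\}$ is $\ans{S}$-recognizable, just as in the construction of the automata $A_{d,e,f,j}$ in the proof of \autoref{cor:words}. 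By \autoref{lem:addable_implications}, addability also yields comparability, so the order predicates $n_i < n_j$ are $\ans{S}$-recognizable as well.

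Next I would invoke closure of the class of $\ans{S}$-recognizable relations under the first-order operations. A relation defined from $\ans{S}$-recognizable predicates by $\land$, $\lor$, $\lnot$, and quantification over $\N$ is again $\ans{S}$-recognizable: intersection, union and complement correspond to the analogous automaton constructions over the product (padded) alphabet, and an existential quantifier is handled by projecting away a coordinate, while $\forall$ is dual (complement, project, complement). This is precisely the generalization of the B\"uchi–Bruy\`ere theorem to addable ANS referred to after \autoref{thm:formula}. Applying this to $\psi$ shows that $P_v(W(X))$ is $\ans{S}$-recognizable for every $v \in \N_{>0}^*$. By \autoref{lem:equivalence} this is equivalent to $W(X)$ being weakly $\ans{S}$-codable, completing the proof.

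The main obstacle, or rather the only place requiring care, is ensuring that the logical machinery really applies in the generality of an arbitrary addable ANS rather than just the base-$b$ case: one must check that projection along the padding letter $\#$ behaves well (i.e.\ that $(L^d)^\#$ and its projections remain regular, which was noted in \autoref{sec:s_codable}) and that quantifier elimination over $\N$ does not secretly require properties beyond addability and regularity of $L$. All of these follow from addability via \autoref{lem:addable_implications} and the remarks preceding this theorem, so no genuinely new difficulty arises; the proof is essentially a bookkeeping assembly of \autoref{lem:equivalence}, \autoref{thm:formula}, and the closure properties illustrated concretely in \autoref{cor:words}.
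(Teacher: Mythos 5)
Your proposal is correct and follows the paper's own proof exactly: reduce via \autoref{lem:equivalence} to the $\ans{S}$-recognizability of each $P_v(W(X))$, apply \autoref{thm:formula} to obtain the first-order formula over $\ans{S}$-recognizable predicates, and conclude by the closure properties (the extension of the B\"uchi--Bruy\`ere theorem to addable ANS) as illustrated in \autoref{cor:words}. Your additional detail on the atomic predicates and quantifier elimination just makes explicit what the paper leaves as a reference to that corollary and the surrounding discussion.
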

\begin{proof}
  By \autoref{lem:equivalence}, it suffices to prove that $P_v(W(X))$ is $\ans{S}$-recognizable for all
  $v \in \N_{>0}^*$. By \autoref{thm:formula}, the set $P_v(W(X))$ is defined by a first-order formula $\psi$ on
  $\ans{S}$-recognizable predicates. Similar to \autoref{cor:words}, it follows from the theory of $\ans{S}$-automatic
  sequences with addable $\ans{S}$ that $\rep(P_v(W(X)))$ is regular.
\end{proof}

\section{\texorpdfstring{$\ans{S}$}{S}-codable Winning Shifts}\label{sec:Concrete}
In this section, we consider conditions that ensure that a winning shift of a subshift generated by an
$\ans{S}$-automatic word is $\ans{S}$-codable. A factor $w$ of a word $\infw{x}$ is \emph{right special} if
$wa, wb \in \Lang[\infw{x}]{}$ for distinct letters $a$ and $b$.


\begin{proposition}\label{prp:minimal_linear_fcd}
  If $X$ is a transitive subshift with sublinear factor complexity, then $W(X)$ is countable and has finite coding
  dimension.
\end{proposition}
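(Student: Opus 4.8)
The plan is to show that the nonzero symbols in every configuration of $W(X)$ are confined to a window of bounded length; once this is established, finite coding dimension and countability both follow at once. Recall from the introduction that $W(X)$ is a hereditary subshift over the finite alphabet $\{0,1,\dots,q-1\}$, where $q=|A|$, and that $W(X)$ has the same factor complexity as $X$, hence sublinear factor complexity. So I would fix a constant $C$ with $\rho_{W(X)}(n)=\rho_X(n)\le Cn$ for all $n\ge 1$.

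The heart of the argument is the claim that the set $\Gamma$ of all differences $p'-p$ for which $p<p'$ and some $\infw{y}\in W(X)$ has $y_p\ne 0$ and $y_{p'}\ne 0$ is finite, and this is where I expect the real content to lie. Heredity is the key tool: if $g\in\Gamma$ is witnessed by $\infw{y}$, then the configuration with symbol $1$ at positions $p$ and $p+g$ and $0$ elsewhere lies below $\infw{y}$ coordinatewise, so it belongs to $W(X)$; hence $1\,0^{g-1}\,1$ is a factor of $W(X)$, and by shift-invariance so is every word $0^{j}\,1\,0^{g-1}\,1\,0^{k}$. If $\Gamma$ were infinite, I would take its $M$ smallest elements $g_1<\dots<g_M$ and put $n=2g_M$; then for $1\le i\le M$ and $0\le j\le n-g_i-1$ the words $0^{j}\,1\,0^{g_i-1}\,1\,0^{\,n-g_i-1-j}$ are pairwise distinct length-$n$ factors of $W(X)$ (a different $i$ changes the distance between the two ones, a different $j$ changes their placement), and there are $\sum_{i=1}^{M}(n-g_i)\ge Mg_M$ of them. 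This forces $Mg_M\le\rho_{W(X)}(2g_M)\le 2Cg_M$, i.e.\ $M\le 2C$, which is absurd since $M$ was arbitrary. The main obstacle is precisely making this counting argument work — turning "far-apart nonzero symbols" into many short two-symbol patterns whose shifts overpopulate a window, contradicting sublinear complexity.

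With the claim in hand, set $G=\max\Gamma$ (with $\max\emptyset=0$). For any $\infw{y}\in W(X)$, any two positions carrying a nonzero symbol are at distance at most $G$, so $\supp(\infw{y})$ is contained in an interval of length $G+1$; in particular $\sum\infw{y}\le (q-1)(G+1)$, so $W(X)$ has coding dimension at most $(q-1)(G+1)$, which is finite. For countability, each $\infw{y}\in W(X)$ is, after an appropriate shift, determined by its restriction to a fixed window of length $G+1$ (it is $0$ outside), and there are at most $q^{G+1}$ possibilities; since each such word accounts for only countably many shifts, $W(X)$ is countable. I would note in passing that the argument really only uses that $W(X)$ is hereditary and has sublinear factor complexity, so transitivity of $X$ enters only in guaranteeing the latter in the intended applications.
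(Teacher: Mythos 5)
Your argument has a genuine gap, and the intermediate statement you aim for is in fact false. The problem is the step ``by shift-invariance so is every word $0^{j}1\,0^{g-1}1\,0^{k}$'': shift-invariance of $W(X)$ only allows deleting a prefix of a configuration, never prepending zeros. For $0^{j}1\,0^{g-1}1\,0^{k}$ to be a factor of $W(X)$ you need a configuration whose relevant pair of nonzero symbols sits at positions $p$ and $p+g$ with $p\ge j$, i.e.\ a strategy tree whose first branching occurs at depth at least $j$; the definition of $\Gamma$ gives no control over $p$. Without the free parameter $j$, each $g_i$ is only guaranteed to contribute one factor of length $n=2g_M$ containing two nonzero symbols at distance $g_i$, so the count only gives $M\le \rho_{W(X)}(2g_M)\le 2Cg_M$, which is no contradiction since the $g_i$ are distinct positive integers and hence $g_M\ge M$ automatically. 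Worse, the lemma you are trying to prove (that $\Gamma$ is finite, equivalently that supports lie in a window of bounded length) fails under exactly your hypotheses: for the Thue--Morse subshift $X$ (transitive, linear complexity), $W(X)$ contains $1\,0^{c-2}1\,0^{\omega}$ for every $c\ge 2$ --- the Thue--Morse language is closed under complementation, so it has right special factors of every length beginning with $0$ and with $1$, and Alice can branch at rounds $1$ and $c$ --- so the gaps are unbounded even though the coding dimension is $3$; compare the characterization in \autoref{sec:automata}. That characterization also shows, for instance, that $1001$ is a factor of this $W(X)$ while $01001$ is not, which refutes the padding step directly.

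Your closing remark that the argument only uses heredity and sublinear factor complexity of $W(X)$ is also untenable: \autoref{lem:hereditary_low_complexity} exhibits a hereditary subshift with sublinear complexity and infinite coding dimension, so no proof working only from those two properties can exist. The paper's proof uses the game structure of $W(X)$ and the subshift $X$ itself, and it bounds the \emph{number} of nonzero symbols rather than their spread: a configuration of $W(X)$ with sum $B$ (which by heredity may be taken over $\{0,1\}$) gives a strategy tree with $B$ branchings, whose last branching level consists of $2^{B-1}$ distinct right special words of a common length in $\mathcal{L}(X)$; by transitivity these all occur in the language of a single word of $X$, so unbounded sums would force an unbounded number of right special factors per length, contradicting sublinear complexity via Cassaigne's theorem. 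Some ingredient of this kind, tying configurations of $W(X)$ back to right special factors of $X$, is needed; counting inside the hereditary shift $W(X)$ alone cannot work.
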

\begin{proof}
  Let $X$ be a transitive subshift with sublinear factor complexity. Let $s_{\infw{x}}(n)$ be the number of right
  special factors of length $n$ in $\Lang[\infw{x}]{}$ for $\infw{x} \in X$. Assume that $W(X)$ has infinite coding
  dimension. Since $W(X)$ is hereditary, there exists a strictly increasing sequence $(B_i)$ of positive integers such
  that there exists $\infw{x}_i \in W(X)$ with $\sum \infw{x}_i = B_i$ for all $i$. Moreover, we may assume that each
  $\infw{x}_i$ contains only the letters $0$ and $1$. Each strategy tree corresponding to $\infw{x}_i$ branches exactly
  $B_i$ times. Clearly the final branching of a strategy tree corresponds to $2^{B_i - 1}$ right special words of a
  common length in the language of $X$. Since $(B_i)$ is unbounded, it follows by transitivity that
  $s_{\infw{y}}(n)$ is unbounded for some $\infw{y} \in X$. A famous result of Cassaigne
  \cite[Thm.~1]{1996:special_factors_of_sequences_with_linear_subword}, \cite[Thm.~4.9.3]{2010:combinatorics_automata_and_number_theory}
  states that the infinite word $\infw{y}$ has sublinear factor complexity if and only $s_{\infw{y}}(n)$ is bounded.
  Therefore $\infw{y}$ has superlinear factor complexity, and we conclude that $X$ has superlinear factor complexity as
  well. If $W(X)$ has finite coding dimension, then $W(X)$ is clearly countable. The claim follows.
\end{proof}

Putting \autoref{prp:minimal_linear_fcd} and \autoref{thm:ws_weakly_codable} together gives the following result which
is the main result of the paper.

\begin{theorem}\label{thm:main}
  Let $\ans{S}$ be an addable ANS. Suppose that $X$ is a subshift generated by an $\ans{S}$-automatic word
  having sublinear complexity. Then $W(X)$ is $\ans{S}$-codable.
\end{theorem}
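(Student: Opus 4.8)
The plan is to derive \autoref{thm:main} as an immediate consequence of the two results that precede it, namely \autoref{thm:ws_weakly_codable} and \autoref{prp:minimal_linear_fcd}. Recall that by definition a subset $Y$ of $\N^\N$ is $\ans{S}$-codable if and only if it is weakly $\ans{S}$-codable \emph{and} has finite coding dimension; so the proof decomposes into verifying these two ingredients separately and then invoking the definition.

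First I would observe that since $\ans{S}$ is addable and $X$ is generated by an $\ans{S}$-automatic word, \autoref{thm:ws_weakly_codable} applies verbatim and tells us that $W(X)$ is weakly $\ans{S}$-codable. Note that this step uses only the hypotheses that $\ans{S}$ is addable and that $X$ is the orbit closure of an $\ans{S}$-automatic word; the sublinear complexity assumption is not needed here. Second, I would use the sublinear-complexity hypothesis together with \autoref{prp:minimal_linear_fcd} to conclude that $W(X)$ has finite coding dimension. There is a small mismatch to bridge: \autoref{prp:minimal_linear_fcd} is stated for \emph{transitive} subshifts, whereas \autoref{thm:main} says merely that $X$ is \emph{generated} by an $\ans{S}$-automatic word. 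But a subshift generated by a single infinite word $\infw{x}$, i.e. $X = \OC{\infw{x}}$, is transitive by the definition given in the Preliminaries (it is generated by $\infw{x} \in X$), and moreover $X$ inherits the sublinear factor complexity of $\infw{x}$ since $\rho_X = \rho_{\infw{x}}$. So the hypotheses of \autoref{prp:minimal_linear_fcd} are met, and it yields that $W(X)$ has finite coding dimension (and is in fact countable).

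Combining the two: $W(X)$ is weakly $\ans{S}$-codable and has finite coding dimension, hence $W(X)$ is $\ans{S}$-codable by definition, which is exactly the assertion of \autoref{thm:main}.

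Since both of the cited results do the substantive work, I do not expect a genuine obstacle here — the only thing requiring a moment's care is making explicit that ``generated by a word'' implies ``transitive with the same factor complexity as the generating word,'' so that \autoref{prp:minimal_linear_fcd} can be applied. If one wanted to be maximally careful, one could also remark that $\ans{S}$-automatic words themselves need not have sublinear complexity in general (so the extra hypothesis in \autoref{thm:main} is not redundant), but that is a side remark rather than part of the proof. A one-line proof in the paper of the form ``Combine \autoref{prp:minimal_linear_fcd} and \autoref{thm:ws_weakly_codable}'' is what I would write.
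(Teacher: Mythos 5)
Your proof is correct and matches the paper's argument exactly: the paper states \autoref{thm:main} with the one-line justification that it follows by combining \autoref{prp:minimal_linear_fcd} and \autoref{thm:ws_weakly_codable}, which is precisely your decomposition. Your extra remarks (transitivity of $\OC{\infw{x}}$ and the equality of factor complexities) are harmless elaborations of what the paper leaves implicit.
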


We have the following immediate corollary for Pisot numeration systems. Notice that the usual base-$b$ numeration
system is a Pisot numeration system.

\begin{corollary}\label{cor:pisot}
  Let $\ans{S}$ be a Pisot numeration system. If $X$ is a subshift generated by an $\ans{S}$-automatic word,
  then $W(X)$ is $\ans{S}$-codable.
\end{corollary}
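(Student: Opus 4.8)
The plan is to deduce \autoref{cor:pisot} directly from the main result \autoref{thm:main}, which applies to an \emph{addable} ANS together with a generating word of \emph{sublinear} factor complexity; so the task is only to verify that a Pisot numeration system meets both requirements and then to quote \autoref{thm:main}. For addability I would invoke Frougny's theorem that in any Pisot numeration system addition is realised by a finite transducer \cite{1992:representations_of_numbers_and_finite_automata}: this is precisely the statement that $\{(x,y,x+y) : x, y \in \N\}$ is $\ans{S}$-recognizable, i.e.\ that $\ans{S}$ is addable, and it is already recorded in the discussion preceding \autoref{lem:addable_implications}. By \autoref{lem:addable_implications} the system is then also comparable and regular, which is everything about $\ans{S}$ that the proof of \autoref{thm:main} uses, via \autoref{thm:formula} and \autoref{cor:words}.

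The only point with genuine content is that an $\ans{S}$-automatic word $\infw{x}$ with $\ans{S}$ Pisot has sublinear factor complexity, in fact $\rho_{\infw{x}}(n) = \mathcal{O}(n)$. For base-$k$ systems this is classical: a $k$-automatic word is a letter-to-letter image of a fixed point of a $k$-uniform substitution, such fixed points have linear factor complexity, and a coding does not increase this bound \cite{2003:automatic_sequences}. For a general Pisot system the same holds by way of the substitutive description of $\ans{S}$-automatic words: $\infw{x}$ is a coding of a fixed point of a substitution compatible with $\ans{S}$ whose incidence matrix is of Pisot type, so that all growing letters expand at a single exponential rate and the complexity of the fixed point, hence of its coding $\infw{x}$, is $\mathcal{O}(n)$; see \cite{2010:combinatorics_automata_and_number_theory} and its references. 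Since $X = \OC{\infw{x}}$ has the same language, and therefore the same factor complexity, as $\infw{x}$, the subshift $X$ has sublinear factor complexity.

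Both hypotheses of \autoref{thm:main} are now in hand, so that theorem yields at once that $W(X)$ is $\ans{S}$-codable, which is the claim; specialising to base-$b$ recovers the headline case covering the Thue--Morse word and the other examples treated later. The one step that is not bookkeeping is the sublinearity of Pisot-automatic words: if a self-contained argument were wanted instead of a citation, the hard part would be to make the substitutive structure of an arbitrary Pisot-automatic word explicit and to bound the complexity of the resulting fixed point, for instance through Pansiot's classification of the complexity of morphic words. Everything else --- the reduction to \autoref{thm:main}, the use of addability, and passing from $\infw{x}$ to its orbit closure --- is routine.
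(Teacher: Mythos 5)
Your proposal is correct and takes essentially the same route as the paper: reduce to \autoref{thm:main} by noting that Pisot numeration systems are addable (Frougny's theorem, exactly the citation the paper uses) and that the generating $\ans{S}$-automatic word has sublinear factor complexity. The only difference is how sublinearity is justified: the paper simply cites the theorem that all Parry-automatic words have sublinear complexity together with the fact that Pisot systems are Parry \cite{2019:automatic_sequences_based_on_parry_or_bertrand}, whereas you sketch the underlying substitutive/Pansiot-style argument, which is the content of that cited result rather than a genuinely different proof.
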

\begin{proof}
  Let $\infw{x}$ be an $\ans{S}$-automatic word. The subshift generated by $\infw{x}$ is transitive by definition. By
  \cite[Thm.~3.4]{2019:automatic_sequences_based_on_parry_or_bertrand} all Parry-automatic words have sublinear factor
  complexity. Since all Pisot numeration systems are Parry numeration systems
  \cite[Remark~3]{2019:automatic_sequences_based_on_parry_or_bertrand}, it follows that $\infw{x}$ has sublinear factor
  complexity. It is a well-know fact that Pisot numeration systems are addable
  \cite{1992:representations_of_numbers_and_finite_automata}, so the claim follows from \autoref{thm:main}.
\end{proof}

We note that we cannot prove a result analogous to \autoref{cor:pisot} for Parry-automatic
words since addition is not always rational in a Parry numeration system. See
\cite[Ex.~3]{1997:on_the_sequentiality_of_the_successor_function}.

\begin{lemma}\label{lem:hereditary_low_complexity}
  There exists a hereditary subshift $X$ with sublinear factor complexity such that $X$ has infinite coding dimension.
\end{lemma}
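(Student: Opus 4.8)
The plan is to take $X$ to be the hereditary closure of the orbit closure of a single, extremely sparse binary word, and then to bound its factor complexity by a direct count. Concretely, I would let $\infw{c}\in\{0,1\}^{\N}$ be the characteristic word of $\{2^{2^{k}}:k\geq 0\}$, so $\infw{c}[i]=1$ precisely when $i=2^{2^{k}}$ for some $k$, and let $X$ be the hereditary closure of $\OC{\infw{c}}$. This $X$ is a hereditary subshift: the hereditary closure of a subshift $Y$ is $\{\infw{u}:\infw{u}\leq\infw{v}\text{ for some }\infw{v}\in Y\}$, which is shift-invariant and, since the order $\leq$ on $\{0,1\}^{\N}$ is topologically closed and $Y$ is compact, also closed. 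The coding dimension of $X$ is infinite for a trivial reason: $\infw{c}\in X$ and $\sum\infw{c}=\infty$, so no $d\in\N$ satisfies $\sum\infw{y}\leq d$ for all $\infw{y}\in X$. Everything therefore reduces to showing that $X$ has sublinear factor complexity, i.e.\ $\rho_{X}(n)=\mathcal{O}(n)$, where $\rho_{X}(n)=|\Lang[X]{}\cap\{0,1\}^{n}|$.

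For this I would first observe that $\Lang[X]{}=\{w:w\leq\infw{c}[p,p+|w|-1]\text{ for some }p\geq 0\}$, so a factor of $X$ of length $n$ is exactly a coordinatewise subword of some length-$n$ window of $\infw{c}$, and a window containing $t$ ones produces exactly $2^{t}$ such subwords. I then split the windows $\infw{c}[p,p+n-1]$ by their number of ones. Those with at most one $1$ are the $n+1$ words $0^{n}$ and $0^{a}10^{n-1-a}$, so they account for at most $n+1$ factors. Suppose a window has at least two ones; then its two smallest ones sit at consecutive points $2^{2^{k_{0}}}$ and $2^{2^{k_{0}+1}}=\bigl(2^{2^{k_{0}}}\bigr)^{2}$ of the sequence, and both lying in a window of length $n$ forces $\bigl(2^{2^{k_{0}}}\bigr)^{2}\leq n+2^{2^{k_{0}}}$, hence $2^{2^{k_{0}}}<\sqrt{2n}$; likewise the largest one of the window, say $2^{2^{k_{0}+m}}$ when the window carries $m+1$ ones, satisfies $2^{2^{k_{0}+m}}\leq 2n$, so $2^{m+1}=\mathcal{O}\bigl(2^{-k_{0}}\log n\bigr)$. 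The windows whose smallest one is $2^{2^{k_{0}}}$ are indexed by a start position $p\in(2^{2^{k_{0}-1}},2^{2^{k_{0}}}]$, so there are at most $2^{2^{k_{0}}}$ of them, each yielding at most $2^{m+1}$ factors; the total contribution of a fixed $k_{0}$ is thus $\mathcal{O}\bigl(2^{2^{k_{0}}-k_{0}}\log n\bigr)$. Since this quantity grows doubly exponentially in $k_{0}$, its sum over the finitely many relevant $k_{0}$ is of the order of the single largest term, for which $2^{2^{k_{0}}}=\mathcal{O}(\sqrt{n})$; hence the windows with at least two ones contribute $\mathcal{O}(\sqrt{n})$, and $\rho_{X}(n)\leq(n+1)+\mathcal{O}(\sqrt{n})=\mathcal{O}(n)$.

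The only delicate point is that final summation, and it is the one place the growth rate of the sequence is used. With an exponentially growing gap sequence — say $1$'s at the powers of $2$ — the same count gives every scale $k_{0}$ a contribution of order $n$ while there are $\Theta(\log n)$ scales, so one only obtains $\mathcal{O}(n\log n)$, which is not sublinear: the windows carrying near-maximally many ones then occur at all scales instead of being concentrated at the top. The doubly-exponential choice $2^{2^{k}}$ (any $(a_{k})$ with $a_{k+1}\geq a_{k}^{2}$ would do) makes the contribution of each scale decay doubly exponentially below the top scale, which is exactly what keeps the total $\mathcal{O}(n)$. One could alternatively verify sublinearity through Cassaigne's criterion, as in \autoref{prp:minimal_linear_fcd}: in the hereditary closure of $\OC{\infw{c}}$ the only right-special factor of length $m$ is $0^{m}$ unless $m$ falls in the short interval just below some $2^{2^{k}}$, and these exceptional lengths, together with the sizes of the corresponding right-special sets, are sparse enough that $\sum_{m<n}$ of the number of right-special factors of length $m$ is $n+o(n)$; the direct count above, however, already settles the claim.
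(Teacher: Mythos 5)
Your proposal is correct and follows essentially the same route as the paper: take the hereditary closure of the orbit closure of a sufficiently sparse binary word (infinite coding dimension is then immediate), and verify sublinear factor complexity by a direct count in which each length-$n$ window with $t$ ones contributes at most $2^{t}$ dominated factors, with windows containing at most one $1$ giving the main $n+1$ term. The only real difference is quantitative bookkeeping: the paper picks gaps so extreme that $m_{j-1}=\mathcal{O}(\log n_j)$ and can afford the crude bound $(2^{j}-2)m_{j-1}=\mathcal{O}(\log^{2}n)$ for all early windows at once, whereas your finer stratification by the scale of the smallest $1$ in the window shows that the much denser support $\{2^{2^{k}}\}$ (indeed any $a_{k+1}\geq a_{k}^{2}$) already suffices.
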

\begin{proof}
  Let $(n_i)$ be a sequence of positive integers, and set $m_j = \sum_{i=1}^j (n_i + 1)$ for $j \geq 1$. In order to
  make the arguments below work out, we choose the sequence $(n_i)$ to satisfy $m_{j-1} = \mathcal{O}(\log n_j)$. Let
  $\infw{x} = \prod_{i=1}^\infty 0^{n_i}1$, and let $X$ be the hereditary closure of the subshift generated by
  $\infw{x}$. It is clear that $X$ has infinite coding dimension, so it suffices to show that $X$ has sublinear factor
  complexity.

  Suppose that $n$ is such that $n_j + 2 \leq n \leq n_{j+1} + 1$ for some $j \geq 2$. When $p \geq m_{j-1}$, the
  factor $\infw{x}[p, p+n-1]$ contains at most one letter $1$, so these positions contribute a total of $n+1$ to the
  factor complexity of $\infw{x}$. Moreover, downgrading a letter (using hereditarity) does not increase the number of
  factors. Therefore we need to show that the contribution from the earlier positions, including downgrading, is
  $\mathcal{O}(n)$. Before downgrading, we have at most $m_{j-1}$ distinct factors of length $n$. Since $n \leq n_{j+1}
  + 1$, each early factor contains at most $j - 1$ letters $1$. Downgrading each letter independently thus produces at
  most $\smash[t]{\sum_{i = 1}^{j-1} 2^i = 2^j - 2}$ new factors. Now $\log \log n_j = \Omega(\log m_{j-1})$. Since
  $m_j$ clearly grows at least exponentially, we find that $\log m_{j-1} = \Omega(j)$, so
  $j = \mathcal{O}(\log \log n_j)$. As $n_j < n$, we thus have $j = \mathcal{O}(\log \log n)$. Since
  $\smash[t]{n + 1 + (2^j - 2)m_{j-1} = n + 1 + \mathcal{O}(\log^2 n)}$, we thus have $\mathcal{O}(n)$ factors of
  length $n$.
\end{proof}

A simple argument using the pumping lemma shows that the word in the proof of \autoref{lem:hereditary_low_complexity}
is not Pisot-automatic.

Next we provide an example of a substitutive subshift associated with a regular ANS which does not have finite coding
dimension. This example was suggested to us by J.\ Cassaigne. We show below that the associated ANS is not addable, so
we cannot deduce that the conclusion of \autoref{thm:main} fails without the assumption of sublinear factor complexity.

Let $\sigma$ be the substitution defined by $a \mapsto abab$, $b \mapsto b$. Let
\begin{equation*}
  \infw{z} = ababbababbbababbababbbbababbababbbababbababbbbbababbababbbababbababbbb
  \dotsm
\end{equation*}
be its infinite fixed point and $Z = \OC{\infw{z}}$. It follows from
\cite[Thm.~4.7.66]{2010:combinatorics_automata_and_number_theory} that the factor complexity of $\infw{z}$ is in
$\Theta(n^2)$. (For a direct proof of the lower bound, consider words of the form $uab^nav$ where $b^n$ covers the
central position, and observe that one may choose the positions of the two $a$'s freely to obtain $m^2$ words of length
$2m$.) Our aim is to prove the following result.

\begin{proposition}\label{prp:z_icd}
  The winning shift $W(Z)$ of $Z$ has infinite coding dimension.
\end{proposition}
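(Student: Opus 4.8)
The plan is to recast the statement as the existence of arbitrarily large strategy trees inside $\infw z$ and to build these by induction, using the self-similar structure of $\infw z$ around its long blocks of $b$'s. First the reduction: since $W(Z)$ is a hereditary subshift, it has infinite coding dimension if and only if for every $k\in\N$ it contains a configuration with at least $k$ nonzero symbols, and by heredity such a configuration may be taken in $\{0,1\}^\N$. By the definition of the winning shift (compare the proof of \autoref{prp:case_111}), a configuration over $\{0,1\}$ whose $1$s lie exactly at positions $p_1<\dots<p_k$ belongs to $W(Z)$ precisely when $\Lang[Z]{}=\Lang[\infw z]{}$ contains $2^k$ words of common length $p_k+1$ forming a rooted binary word tree that has out-degree $2$ at depths $p_1,\dots,p_k$ and out-degree $1$ at every other depth, i.e.\ a strategy tree with $k$ branchings. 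So it suffices to produce, for each $k$, positions $p_1<\dots<p_k$ and such a tree inside $\infw z$.

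Next I would record the structure of $\infw z$. Iterating $\sigma$ shows that $\infw z$ is the concatenation of the blocks $abab$, separated by blocks of $b$'s of lengths $\nu_2(1)+1,\nu_2(2)+1,\nu_2(3)+1,\dots$, where $\nu_2$ is the $2$-adic valuation; equivalently $\infw z[i]=a$ exactly when $i=3n-s_2(n)$ for some $n$, with $s_2$ the binary digit sum. Consequently $\infw z$ has no factor $aa$, its maximal runs of $b$'s attain every length $m\ge 1$, and — the key point — immediately to the right of a maximal run of length $\mu\ge 2$ the word $\infw z$ agrees with a prefix of $\infw z$ for the next $\Theta(2^{\mu})$ letters: reading outward one meets separator lengths $\nu_2(1)+1,\nu_2(2)+1,\dots$ irrespective of which length-$\mu$ run was chosen, until a run of length $>\mu$ first appears, after about $2^{\mu-2}$ blocks; symmetrically to the left, up to reflection. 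This self-similarity is precisely the mechanism behind the $\Theta(n^2)$ factor complexity noted above, and it is what drives the construction.

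The construction proceeds by induction on $k$, with inductive claim that for all sufficiently large $M$ a strategy tree with $k$ branchings embeds in a window whose first letters form a prefix of $\infw z$ and whose length is controlled in terms of $M$ and $k$ (growing like $2^{\Theta(k)}$, as it must, since the $2^k$ distinct leaves force $\rho_{\infw z}(p_k+1)\ge 2^k$). For the inductive step I would take a sufficiently long maximal $b$-run and let $p_1$ be the position just past its first $t$ letters: the length-$t$ prefix of $b$'s is right special, because it can be continued by $a$ (taking the run to have length exactly $t$) or by $b$ (taking it longer), and after the branching letter each of the two continuations re-enters a copy of a prefix of $\infw z$ — one of length $\Theta(2^{t})$ on the $a$-side, one reached through the remainder of the longer run on the $b$-side — which I arrange to line up in depth. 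Applying the induction hypothesis inside these two copies supplies the remaining $k-1$ branchings, the discrepancy between how many $b$'s are consumed on the two sides before the copies begin being absorbed by the forced segments that a strategy tree may freely insert between consecutive branchings.

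The hard part will be the alignment in the inductive step. One must pick the two occurrences of the chosen run so that their left contexts agree far enough back for all $2^k$ root-to-leaf words to be genuine factors of $\infw z$; control the difference between the $b$-lengths consumed on the two sides and cancel it against the lengths of the inter-branch forced segments; and, most importantly, ensure that the prefixes of $\infw z$ re-entered on the two sides lead to \emph{distinct} continuations, so that the $2^k$ leaves are pairwise different — if the two sides merely re-synchronize, the leaf count collapses and only a bounded number of branchings can be obtained this way. Making this rigorous amounts to a quantitative analysis of how the separator sequence $(\nu_2(j))_{j\ge 1}$ determines the two-sided context of a $b$-run of prescribed length; granting it, the induction goes through. (Alternatively, one can carry the construction to the limit to produce a single configuration of $W(Z)$ with infinitely many $1$s, which yields infinite coding dimension at once, and in fact shows that $W(Z)$ is uncountable.)
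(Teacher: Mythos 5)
Your reduction is fine (infinite coding dimension of the hereditary subshift $W(Z)$ is equivalent to the existence of strategy trees in $\Lang[Z]{}$ with arbitrarily many branchings), and your description of the ruler-sequence structure of $\infw{z}$ is essentially correct. But the proof stops exactly where the mathematical content begins: the inductive step is only sketched, and you yourself defer its core ("the hard part will be the alignment\dots granting it, the induction goes through"). What has to be established is that after the first branching the two branches admit continuations in $\Lang[Z]{}$ whose \emph{subsequent branchings occur at identical depths} (the choice sequence is global, so all $2^j$ branches must branch simultaneously at every later level), and that every root-to-leaf concatenation is genuinely a factor of $\infw{z}$; your plan to "absorb the discrepancy into the forced segments" is precisely the quantitative claim that is never proved, and nothing in the proposal controls it. (Incidentally, the worry that the two sides might "re-synchronize" and collapse the leaf count is a red herring: at each branching the two children carry distinct letters, so the $2^k$ leaves are automatically distinct words; the real issue is depth alignment, which you do not resolve.) As written, this is an outline of a strategy, not a proof.

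For contrast, the paper avoids the alignment problem altogether. It first proves a general lemma: if every sufficiently long right special factor $w$ of $X$ has an extension $wa$ that is again right special, and right special words are dense in $X$, then $W(X)$ has infinite coding dimension — one extends all leaves of a given strategy tree to right special factors of a common length and branches them simultaneously, adding one branching per step. It then characterizes the right special factors of $Z$ explicitly (those ending in $ab^k$ are exactly the suffixes of $b^kp_kb^k$, where $p_1=a$, $p_{k+1}=p_kb^kp_k$), from which both hypotheses are immediate: appending $b$ to a right special factor yields a right special factor (since $b^kp_kb^{k+1}$ is a suffix of the right special word $b^{k+1}p_{k+1}b^{k+1}$), and $\infw{z}$ has arbitrarily long right special prefixes. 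If you want to salvage your approach, the cleanest fix is to replace the two-sided inductive embedding by this kind of "extend every leaf to a right special word of a common length, then branch" argument, which is exactly what your self-similarity observations would feed into once the right special factors are pinned down.
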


For this, we need information on right special factors in $\Lang[Z]{}$ and the following general lemma.

\begin{lemma}\label{lem:rs_conditions}
  Let $X$ be a subshift that satisfies the following properties:
  \begin{enumerate}[(i)]
    \item Whenever $w$ is a long enough right special factor in $\Lang[X]{}$, there exists a letter $a$ such that $wa$ is
          right special in $\Lang[X]{}$.
    \item Right special words are dense in $X$ meaning that each word in $X$ is a limit of a sequence of right special
          words of $\Lang[X]{}$.
  \end{enumerate}
  Then $W(X)$ has infinite coding dimension.
\end{lemma}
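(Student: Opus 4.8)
\textbf{Proof plan for Lemma~\ref{lem:rs_conditions}.}
The plan is to build, for each $k$, a configuration in $W(X)$ whose sum is at least $k$, i.e.\ a strategy tree that branches at least $k$ times. Recall that a word $\infw{y} \in \{0,1\}^\N$ with ones exactly at positions $p_1 < \dots < p_k$ lies in $W(X)$ precisely when there is a rooted partial binary tree, with branching (out-degree $2$) at depths $p_1,\dots,p_k$ and out-degree $1$ elsewhere, all of whose root-to-leaf paths spell words of $\Lang[X]{}$. So it suffices to produce, for every $k$, a right special word $w$ of $\Lang[X]{}$ together with $k$ nested branchings inside the subtree it determines. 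I would do this by first using hypothesis (i) repeatedly to manufacture a single right special word that ``remembers'' $k$ earlier branch points.

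First I would iterate (i): starting from any sufficiently long right special word, (i) gives a letter $a_1$ with $wa_1$ right special, then a letter $a_2$ with $wa_1a_2$ right special, and so on; thus there are arbitrarily long right special words, and in fact right special words $w^{(1)} \sqsubseteq w^{(2)} \sqsubseteq \cdots$ forming an infinite increasing chain, whose limit is a right-infinite word $\infw{w}$ all of whose prefixes (past some point) are right special. The key observation is that along this chain we can locate $k$ positions $0 \le q_1 < q_2 < \dots < q_k$ such that the prefix of $\infw{w}$ of length $q_j{+}1$ is right special for each $j$: indeed every long enough prefix is right special, so we may simply take $q_1 < \dots < q_k$ to be any $k$ large consecutive (or spread-out) indices beyond the threshold of (i). Each such right special prefix $u_j := \infw{w}[0,q_j]$ has two extensions $u_j 0$ and $u_j 1$ in $\Lang[X]{}$.

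Next I would assemble these into one strategy tree. Fix a large $N > q_k$ and consider the factor $\infw{w}[0,N]$. Build a partial binary tree of depth $N{+}1$ as follows: along the ``spine'' follow $\infw{w}$, but at each depth $q_j$ create a second child; from that second child, we need a length-$(N-q_j)$ continuation forming, together with the spine prefix, a word of $\Lang[X]{}$. Since $u_j 1$ (the letter differing from the spine) is a factor of $X$ and $\Lang[X]{}$ is factorial and extendable, $u_j 1$ extends to some word of length $N{+}1$ in $\Lang[X]{}$; use that as the side branch. This yields a tree with exactly $k$ branchings, at depths $q_1,\dots,q_k$, all root-to-leaf paths in $\Lang[X]{}$ (of common length $N{+}1$), hence a word $\infw{y} \in W(X \cap \{0,1\}^{N+1})$ with $\sum \infw{y} = k$. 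Wait --- here I have not yet used hypothesis (ii); (ii) is what guarantees the threshold of (i) is actually attained by genuine right special factors occurring densely, so that the chain in the previous paragraph exists; more carefully, (ii) ensures there are arbitrarily long right special words at all (otherwise (i) is vacuous and there might be none), so both hypotheses are needed to start the induction. Since $k$ was arbitrary, $W(X)$ has infinite coding dimension.

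The main obstacle is the passage from ``long right special words exist and extend'' to ``right special \emph{prefixes} of a single infinite word exist at $k$ prescribed depths'': a priori (i) only says \emph{some} extension stays right special, not that all prefixes of the limit word are right special at the depths we want. The clean fix is to work with the increasing chain $w^{(1)} \sqsubseteq w^{(2)} \sqsubseteq \cdots$ produced by (i) itself: by construction every $w^{(i)}$ is right special and they are nested, so in the limit word $\infw{w} = \lim w^{(i)}$ the prefixes of lengths $|w^{(i)}|$ are all right special, and there are infinitely many such lengths; choosing $k$ of them as $q_1{+}1,\dots,q_k{+}1$ finishes the argument. Care is also needed that the side branches have the correct length and do not accidentally introduce extra branchings --- but once a word of $\Lang[X]{}$ of the right length is fixed for each branch, the tree is determined and its branching set is exactly $\{q_1,\dots,q_k\}$ by construction, so this is a routine verification rather than a real difficulty.
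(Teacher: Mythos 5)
There is a genuine gap, and it lies in the tree you build. A choice sequence with $1$'s at positions $q_1 < \dots < q_k$ belongs to $W(X)$ only if Alice can offer \emph{two} letters at round $q_j$ no matter what Bob has played before: in strategy-tree terms, \emph{every} node at depth $q_j$ must have out-degree $2$, so a witnessing tree has $2^k$ leaves, and at the last branching depth one needs $2^{k-1}$ right special words of a \emph{common} length arranged compatibly (this is exactly the picture in the proof of Proposition~\ref{prp:case_111} and in Figure~\ref{fig:thue-morse_strategy}). Your construction branches only along the spine $\infw{w}$: after Bob deviates at depth $q_1$, your tree has out-degree $1$ at depth $q_2$ on that side branch, so it is not a strategy tree for the choice sequence with $k$ ones, and the conclusion $\sum\infw{y}=k$ with $\infw{y}\in W(X)$ does not follow. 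Relatedly, your use of hypothesis (ii) only to ``get started'' is symptomatic of the same issue: if one only needed right special prefixes of a single infinite word, (ii) would indeed be nearly superfluous, which should have been a warning sign.

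The paper's proof handles precisely this synchronization problem, and it does so inductively on the number of branchings rather than all at once. Given a word $u0^\omega\in W(X)$ with strategy tree leaves $w_0,\dots,w_{k-1}\in\mathcal{L}(X)$, condition (ii) (density of right special words) is used to extend \emph{each} leaf word $w_i$ to the right into a right special factor, and condition (i) is then used to push each of these forward to a right special factor of one common length $\ell$; since each of these length-$\ell$ words extends in two ways, all branches of the tree can be made to branch simultaneously at depth $\ell$, yielding $u0^{\ell-|u|}10^\omega\in W(X)$. Iterating gives configurations of arbitrarily large sum. To repair your argument you would need to replace the single chain of right special prefixes by this simultaneous extension of all current leaves to right special words of a common length — at which point you have essentially reproduced the paper's proof, and both hypotheses are used at every step, not just the first.
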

\begin{proof}
  Suppose that $\infw{x} = u0^\omega$ and $\infw{x} \in W(X)$. The condition (ii) implies that there are arbitrarily
  long right special words in $\Lang[X]{}$, so we may assume that $\sum u > 0$. Consider the strategy tree
  corresponding to $u$. Let $w_0$, $\ldots$, $w_{k-1}$ be the words in $\Lang[X]{}$ that correspond to paths from the
  root to the leaves. The condition (ii) guarantees that each $w_i$ can be extended to a right special factor $u_i$.
  Moreover, the condition (i) implies that each $u_i$ can be extended to a right special factor of length $\ell$ where
  $\ell$ does not depend on $i$. Consider the words of length $|v_i| + 1$ obtained from the words $v_i$ by extending
  each $v_i$ in at least two ways. By considering the tree corresponding to these words, we find that it is a strategy
  tree for $u 0^{\ell - |u|} 1$. In other words, the word $u 0^{\ell - |u|} 10^\omega$ belongs to $W(X)$. By repeating
  the argument on this word, we find words in $W(X)$ with arbitrarily large sums. The claim follows.
\end{proof}

It is easy to show that $\infw{z} = \lim_k p_k$ where $p_1 = a$ and $p_{k+1} = p_k b^k p_k$. From this, it is
straightforward to deduce the following.

\begin{lemma}\label{lem:rs_characterization}
  The right special factors in $\Lang[Z]{}$ ending with $ab^k$ are exactly the suffixes of $b^k p_k b^k$ for
  $k \geq 1$.
\end{lemma}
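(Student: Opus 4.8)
The plan is to extract everything from the recursion $p_{k+1}=p_kb^kp_k$ together with $\infw{z}=\lim_k p_k$, via a description of the maximal runs of $b$'s in $\infw{z}$. First I would record the following facts, all by straightforward induction on $k$: each $p_k$ begins and ends with $a$ and contains no factor $aa$; the maximal $b$-runs of $p_k$ have lengths exactly $1,\dots,k-1$, each flanked by $a$'s; and, unfolding $p_{j+1}=p_jb^jp_j$, for $j\geq k+1$ the word $p_j$ is a concatenation of copies of $p_{k+1}$ separated by runs $b^m$ with $k+1\leq m\leq j-1$, while $p_j$ is a prefix of $p_{j+1}$. Passing to the limit, $\infw{z}=p_{k+1}b^{m_1}p_{k+1}b^{m_2}p_{k+1}\cdots$ with every $m_i\geq k+1$; these separators are precisely the maximal $b$-runs of $\infw{z}$ of length $\geq k+1$, whereas a maximal $b$-run of length exactly $k$ must be the central run of some copy of $p_{k+1}=p_kb^kp_k$ (the unique $b$-run of length $k$ inside $p_{k+1}$). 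From the same facts every maximal $b$-run of $\infw{z}$ is both preceded and followed by $a$, and hence a right special factor must end in $b$; writing $b^k$ ($k\geq 1$) for its maximal terminal run, a right special factor of length exceeding $k$ has the letter $a$ before this run and so ends with $ab^k$. Thus the lemma describes exactly the right special factors with maximal terminal run $b^k$ other than $b^k$ itself.

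Now fix $k\geq 1$ and a factor $w$ ending with $ab^k$. In any occurrence of $w$ in $\infw{z}$, the terminal run $b^k$ of $w$ is an initial segment of a maximal $b$-run $R$ of $\infw{z}$, and $R$ begins exactly where this run of $w$ begins (the letter preceding it inside $w$ being $a$); put $\ell=|R|\geq k$. If $\ell=k$ --- call this a \emph{type-I} occurrence --- then $R$ is the central run of a copy of $p_{k+1}$, the word of $\infw{z}$ immediately preceding $R$ has the first $p_k$ of that copy as a suffix, and the letter immediately following $w$ is the first letter of the second $p_k$, namely $a$. If $\ell>k$ --- a \emph{type-II} occurrence --- then $w$ is immediately followed by $b$, and the word of $\infw{z}$ immediately preceding $R$ has $p_\ell$ as a suffix, hence ends with $p_kb^kp_k$ (since $p_\ell$ ends with $p_{k+1}=p_kb^kp_k$ for $\ell\geq k+1$). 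It follows that $wa\in\Lang[Z]{}$ exactly when $w$ has a type-I occurrence, and $wb\in\Lang[Z]{}$ exactly when $w$ has a type-II occurrence; so $w$ is right special if and only if it admits occurrences of both kinds, and it remains to show that this happens precisely when $w$ is a suffix of $b^kp_kb^k$.

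I would then argue by the length of $w$. If $|w|\leq|p_k|+k$, then $w=sb^k$ for a nonempty suffix $s$ of $p_k$, which necessarily ends in $a$; so $w$ is a suffix of $b^kp_kb^k$, it is a factor of $\infw{z}$ (as $p_kb^k$ is a prefix of $\infw{z}$), and it admits both kinds of occurrence because $p_k$ is a suffix both of the word preceding any central $b^k$-run and of $p_\ell$ for every $\ell>k$; hence every such $w$ is right special. If $|p_k|+k<|w|\leq|b^kp_kb^k|=|p_k|+2k$, then a type-I occurrence --- which cannot be at the initial copy of $p_{k+1}$ in $\infw{z}$, since $|w|>|p_k|+k$ --- forces $w$, read from its end, to consist of $b^k$, then all of $p_k$, then a block of at most $k$ further $b$'s taken from the separator before that copy (the bound $|w|\leq|p_k|+2k$ preventing $w$ from reaching past the separator); thus $w=b^ip_kb^k$ with $1\leq i\leq k$, again a suffix of $b^kp_kb^k$. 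Conversely each $w=b^ip_kb^k$ with $1\leq i\leq k$ is a factor (a suffix of $p_{k+1}b^k$, itself a prefix of $\infw{z}$), has a type-I occurrence (take a separator of length at least $k+1$), and has a type-II occurrence (since $p_kb^kp_k$, hence $b^kp_kb^k$, is a suffix of $p_\ell$ for every $\ell>k$), so it is right special. Together the two cases give exactly the suffixes of $b^kp_kb^k$ ending with $ab^k$.

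Finally, and this is the delicate step, I would rule out $|w|>|b^kp_kb^k|=|p_k|+2k$. Suppose such a $w$ is right special; then it has both a type-I and a type-II occurrence, and I would reach a contradiction by examining the letter of $w$ in position $|p_k|+2k+1$ counted from its end, which exists precisely because $|w|>|p_k|+2k$. From a type-I occurrence, $w$ ends with $p_kb^k$ and the run of $b$'s of $w$ immediately preceding this suffix is part of a separator, hence of length at least $k+1$, so that letter is $b$. From a type-II occurrence, $w$ again ends with $p_kb^k$, but the run of $b$'s of $w$ immediately preceding this suffix is the internal run $b^k$ of the copy of $p_{k+1}=p_kb^kp_k$ at the tail of $p_\ell$, a maximal run of $p_\ell$ of length exactly $k$; since $|w|>|p_k|+2k$ this run lies wholly inside $w$ and is followed, toward the start of $w$, by the last letter of a $p_k$, namely $a$, so that same letter is $a$. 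The contradiction shows no factor longer than $b^kp_kb^k$ and ending with $ab^k$ is right special, completing the proof. The only genuine work is in this length comparison --- tracking which run of $w$ each type of occurrence pins down and confirming the clash for every admissible length; all the rest is immediate bookkeeping with $p_{k+1}=p_kb^kp_k$.
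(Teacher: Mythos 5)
Your proof is correct, and it takes the route the paper intends: the paper gives no written proof of this lemma (it is asserted to follow readily from $\infw{z}=\lim_k p_k$ and $p_{k+1}=p_k b^k p_k$), and your argument---classifying occurrences of $w$ by whether its terminal $b^k$ sits in a maximal run of $\infw{z}$ of length exactly $k$ or longer, then splitting on $|w|$ relative to $|p_k|+k$ and $|p_k|+2k$---is a complete and careful elaboration of exactly that. One cosmetic slip: $b^k p_k b^k$ is not a suffix of $p_\ell$ (which ends in $a$); you mean it is a suffix of $p_\ell b^k$, i.e.\ it occurs at the junction of the terminal $p_{k+1}$ of $p_\ell$ with the following run, which is what your type-II construction actually uses.
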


\begin{proof}[Proof of \autoref{prp:z_icd}]
  It suffices to verify the conditions of \autoref{lem:rs_conditions}. Let $w$ be a right special factor in
  $\Lang[Z]{}$. For the condition (i), it suffices to prove that $wb$ is also right special in $\Lang[Z]{}$. If
  $w = b^i$, then it is clear that $w$ and $wb$ are right special. Otherwise there exists $k \geq 1$ such that the word
  $w$ is a suffix of $b^k p_k b^k$ by \autoref{lem:rs_characterization}. The word $b^{k+1} p_{k+1} b^{k+1}$ is right
  special according to \autoref{lem:rs_characterization}, and it has suffix $b^k p_k b^{k+1}$ meaning that $wb$ is
  right special.

  The condition (ii) is immediate as $Z = \OC{\infw{z}}$ and $\infw{z}$ has arbitrarily long right special prefixes by
  \autoref{lem:rs_characterization}.
\end{proof}

The sums grow slowly, the shortest factor in $W(Z)$ with sum at least $4$ is $1010^410^{197}1$.

Let us then describe an ANS $\ans{Z}$ associated with $\sigma$ and $\infw{z}$. We set $\rep[S](0) = \varepsilon$. Let
$n$ be a positive integer and $\ell$ be the smallest integer such that
\begin{equation*}
  |\sigma^\ell(a)| \leq n < |\sigma^{\ell+1}(a)| = |\sigma^\ell(aba)| + 1.
\end{equation*}
Define
\begin{align*}
  m &= \begin{cases}
         |\sigma^\ell(a)|, &\text{if $n = |\sigma^\ell(a)|$}, \\
         |\sigma^\ell(ab)|    , &\text{if $|\sigma^\ell(ab)| \leq n < |\sigma^\ell(aba)|$}, \\
         |\sigma^\ell(aba)|        , &\text{if $n = |\sigma^\ell(aba)|$}
       \end{cases}
  \quad \text{and} \\
  \delta &= \begin{cases}
              1, &\text{if $n = |\sigma^\ell(a)|$}, \\
              2    , &\text{if $|\sigma^\ell(ab)| \leq n < |\sigma^\ell(aba)|$}, \\
              3        , &\text{if $n = |\sigma^\ell(aba)|$}.
           \end{cases}
\end{align*}
We define recursively $\rep[S](n) = \delta \alpha$ where $\alpha$ is the word $\rep[S](n - m)$ with enough initial
zeros to make $\rep[S](n)$ a word of length $\ell + 1$. What happens here is that we find which of the words
$\sigma^\ell(a)$, $\sigma^\ell(ab)$, $\sigma^\ell(aba)$ is the longest prefix of the prefix $p$ of $\infw{z}$ of length
$n$, write a symbol in $\{1,2,3\}$ according to the word, and repeat the procedure for the remaining suffix of $p$. The
remaining suffix is a prefix of $\infw{z}$ because the square of the word $\sigma^\ell(a)$ is always a prefix of
$\infw{z}$. The representations of the integers $1$ to $22$ are given in \autoref{tbl:rep}.

\begin{table}
\centering  
\begin{tabular}{| c | c | c || c | c | c |}
  \hline
  $n$  & $\rep(n)$ & \text{prefix}   & $n$ & $\rep(n)$ & \text{prefix} \\ \hline
  $1$  & $1$       & $a$             & $12$ & $201$    & $\sigma^2(ab)a$ \\
  $2$  & $2$       & $ab$            & $13$ & $202$    & $\sigma^2(ab)ab$ \\
  $3$  & $3$       & $aba$           & $14$ & $203$    & $\sigma^2(ab)aba$ \\
  $4$  & $10$      & $\sigma(a)$     & $15$ & $210$    & $\sigma^2(ab)\sigma(a)$ \\
  $5$  & $20$      & $\sigma(ab)$    & $16$ & $220$    & $\sigma^2(ab)\sigma(ab)$ \\
  $6$  & $21$      & $\sigma(ab)a$   & $17$ & $221$    & $\sigma^2(ab)\sigma(ab)a$ \\
  $7$  & $22$      & $\sigma(ab)ab$  & $18$ & $222$    & $\sigma^2(ab)\sigma(ab)ab$ \\
  $8$  & $23$      & $\sigma(ab)aba$ & $19$ & $223$    & $\sigma^2(ab)\sigma(ab)aba$ \\
  $9$  & $30$      & $\sigma(aba)$   & $20$ & $230$    & $\sigma^2(ab)\sigma(aba)$ \\
  $10$ & $100$     & $\sigma^2(a)$   & $21$ & $300$    & $\sigma^2(aba)$ \\
  $11$ & $200$     & $\sigma^2(ab)$  & $22$ & $1000$   & $\sigma^3(a)$ \\
  \hline
\end{tabular}
\caption{Representations of the numbers $1$, $\ldots$, $22$ in the ANS associated with the substitution $\sigma$.}\label{tbl:rep}
\end{table}

The numeration system $\ans{Z}$ is the Dumont-Thomas numeration system associated with the substitution $\sigma$. See
\cite[Sect.~2.1]{2014:formal_languages_automata_and_numeration_systems_2} for more details about these numeration
systems. Notice that the language $\rep[S](\N)$ is regular as it has the regular expression
\begin{equation*}
  2(0 + 2)^* + 2(0 + 2)^*(1 + 3) 0^* + 3 0^* + 1 0^*. 
\end{equation*}
Moreover, the order relation $\prec$ is given by the radix order based on the total order $0 \prec 1 \prec 2 \prec 3$
on the alphabet.

The fixed point $\infw{z}$ is a $\ans{Z}$-automatic word. If we use the convention that the first letter of $\infw{z}$
equals the output of an automaton fed with $\rep[S](1)$, this is easy to see by consulting \autoref{tbl:rep}. Indeed,
the words $\sigma(a)$ and $\sigma(b)$ both end with $b$, so $a$ needs to be outputted exactly when the representation
ends with $1$ or $3$. The word $\infw{z}$ is still $\ans{Z}$-automatic if we index from $0$. While a DFAO could be
written for $\infw{z}$ in this case, it is easiest to make the conclusion using well-known closure properties of
$\ans{Z}$-automatic words; see \cite[Prop.~14]{2002:more_on_generalized_automatic_sequences}. The point is that the
successor function maps regular sets to regular sets; see the proof of
\cite[Prop.~16]{2001:numeration_systems_on_a_regular_language}.

Combining the above with \autoref{prp:z_icd}, we obtain the following result.

\begin{proposition}
  There exists a comparable ANS $\ans{S}$ and an $\ans{S}$-automatic word $\infw{x}$ such that $W(\OC{\infw{x}})$ has
  infinite coding dimension.
\end{proposition}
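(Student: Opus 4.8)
The plan is to set $\ans{S} = \ans{Z}$ and $\infw{x} = \infw{z}$, and then to assemble three facts that have essentially already been established: that $\infw{z}$ is $\ans{Z}$-automatic (when indexed from $0$), that $W(\OC{\infw{z}}) = W(Z)$ has infinite coding dimension, and that $\ans{Z}$ is comparable. The first two are immediate from the discussion preceding the statement: $\infw{z}$ was seen to be $\ans{Z}$-automatic by inspecting \autoref{tbl:rep}, together with the closure of $\ans{Z}$-automatic words under a shift of the indexing convention; and \autoref{prp:z_icd} already asserts that $W(Z)$ has infinite coding dimension. So the only point that still needs an argument is the comparability of $\ans{Z}$.

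For that, I would use that $\ans{Z}$ is a radix-order ANS whose language $L = \rep[\ans{S}](\N)$ is regular, as witnessed by the explicit regular expression $2(0+2)^* + 2(0+2)^*(1+3)0^* + 30^* + 10^*$ noted above. It then suffices to exhibit a finite automaton recognizing $\rep[\ans{S}](\{(x,y) : x \leq y\})$ over the padded alphabet $\{0,1,2,3,\#\}^2$. The automaton reads the two padded representations in parallel; since left-padding by $\#$ does not change the number of non-pad symbols, it can detect which of $\rep[\ans{S}](x)$, $\rep[\ans{S}](y)$ is the shorter word by comparing the initial blocks of $\#$'s (more leading $\#$'s means a shorter and hence $\prec$-smaller word), and on equal lengths it falls back to a lexicographic comparison with respect to $0 \prec 1 \prec 2 \prec 3$ by remembering the first coordinate at which the two words disagree. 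This is a routine finite-state computation, so $\{(x,y) : x \leq y\}$ is $\ans{Z}$-recognizable, i.e., $\ans{Z}$ is comparable. Alternatively, one may simply quote the general fact that radix-order ANS with regular language are always comparable; see, e.g., \cite{2001:numeration_systems_on_a_regular_language}.

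Combining these pieces proves the proposition: $\ans{Z}$ is a comparable ANS, $\infw{z}$ is $\ans{Z}$-automatic, and by \autoref{prp:z_icd} the winning shift $W(\OC{\infw{z}})$ has infinite coding dimension. I do not expect a genuine obstacle here, since the statement is really a repackaging of earlier results; the only step meriting care is checking that the padding convention does not interfere with the radix comparison, which is why I would spell out the automaton sketch rather than leave it implicit.
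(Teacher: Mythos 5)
Your proposal is correct and matches the paper's own (implicit) argument: the proposition is obtained exactly by combining the $\ans{Z}$-automaticity of $\infw{z}$, \autoref{prp:z_icd}, and the fact that $\ans{Z}$ is a radix-order ANS with regular language $\rep(\N)$, hence comparable. Your explicit automaton sketch for the radix comparison on padded pairs is a fine way to spell out the comparability step that the paper leaves to the standard literature.
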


Let us next prove that the ANS $\ans{Z}$ is not addable.

\begin{proposition}
  The ANS $\ans{Z}$ is not addable.
\end{proposition}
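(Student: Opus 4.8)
The plan is to show that $\ans{Z}$ is not addable by exhibiting a sequence of pairs whose sums force the putative addition automaton into arbitrarily many distinct states, contradicting finiteness. The natural handle is the fact, established in \autoref{prp:z_icd} via \autoref{lem:rs_conditions}, that $\Lang[Z]{}$ has an abundance of right special factors of a very controlled shape (suffixes of $b^k p_k b^k$, by \autoref{lem:rs_characterization}), which is closely tied to the superlinear (indeed $\Theta(n^2)$) factor complexity of $\infw{z}$. If $\ans{Z}$ were addable, it would be comparable and regular, and moreover the whole first-order machinery behind \autoref{thm:formula} and \autoref{thm:ws_weakly_codable} would apply, forcing $W(Z)$ to be weakly $\ans{Z}$-codable. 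But \autoref{prp:z_icd} says $W(Z)$ has infinite coding dimension, so $W(Z)$ is not $\ans{Z}$-codable; this alone is consistent with weak codability, so I cannot stop here — I need a contradiction with weak codability itself, or a direct combinatorial obstruction to addition.

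**The cleaner route** is a direct pumping-style argument on the representation language. Recall $\rep[\ans{Z}](n)$ has length $\ell+1$ where $|\sigma^\ell(a)| \le n < |\sigma^{\ell+1}(a)|$, and $|\sigma^\ell(a)|$ grows exponentially in $\ell$ (with ratio tending to the dominant eigenvalue $2$ of the incidence matrix of $\sigma$). So representations are logarithmic in $n$, just as in a base-$2$-like system. The key pathology of Dumont–Thomas systems for non-Pisot-like substitutions is that carries do not propagate boundedly. Concretely, I would compare $\rep(|\sigma^\ell(a)|) = 10^\ell$ with representations of nearby integers: since the prefix structure is governed by where $\sigma^\ell(a)$, $\sigma^\ell(ab)$, $\sigma^\ell(aba)$ fall, and $|\sigma^\ell(ab)| - |\sigma^\ell(a)| = |\sigma^\ell(b)| = 1$ while $|\sigma^\ell(aba)| - |\sigma^\ell(ab)| = |\sigma^\ell(a)|$, the digit lengths behave erratically relative to magnitude. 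I would pick $x_\ell = |\sigma^\ell(a)|$ and $y = $ a small constant (like $1$, $2$, or $3$) and track how $\rep(x_\ell + y)$ relates to $\rep(x_\ell)$ and $\rep(y)$: the claim is that the ``amount of interaction'' needed to compute the sum — i.e. the number of positions at which the output word differs from what a finite-state local transducer reading the two inputs synchronized at the most-significant end could produce — is unbounded in $\ell$. If addition were rational, reading $\rep(x_\ell)$ and $\rep(y)^{\#}$ padded to equal length from left to right, a finite automaton would have to commit to the leading digits of $\rep(x_\ell+y)$ after reading only leading digits of the inputs; but those leading digits of the sum depend on the entire low-order part (because a single unit added at the bottom can cascade all the way up through the $\sigma^\ell(aba)$-versus-$\sigma^\ell(ab)$ branchings). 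Formalizing this as ``two inputs with the same length-$t$ prefix but different sums' length-$t$ prefixes, for every $t$'' yields the Myhill–Nerode / fooling-set contradiction.

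**The hard part** will be pinning down the exact arithmetic of the Dumont–Thomas representation of $\sigma\colon a \mapsto abab$, $b \mapsto b$ well enough to name an explicit fooling family — i.e. to write down pairs $(x, y)$ and $(x', y')$ with $\rep(x), \rep(x')$ agreeing on a long prefix, $\rep(y) = \rep(y')$ (or agreeing on a prefix), yet $\rep(x+y)$ and $\rep(x'+y')$ disagreeing already within that prefix, with the discrepancy growing. Because $|\sigma^\ell(b)| = 1$ for all $\ell$, the letter $b$ contributes almost nothing to length, so the growth is driven entirely by the $a$'s, and the sequence $(|\sigma^\ell(a)|)$ satisfies $|\sigma^{\ell+1}(a)| = 2|\sigma^\ell(a)| + 2$ — an explicitly solvable recurrence, $|\sigma^\ell(a)| = 4 \cdot 2^\ell - 2$. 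That explicit formula should make the carry analysis tractable: I expect the obstruction to boil down to the fact that the ``digit values'' $m$ used in the recursion ($|\sigma^\ell(a)|$, $|\sigma^\ell(ab)| = |\sigma^\ell(a)|+1$, $|\sigma^\ell(aba)| = 2|\sigma^\ell(a)|+1$) are not in a bounded ratio to the ``place values,'' so adding within one block can overflow into needing to rewrite an unbounded number of higher blocks. I would assemble the final argument as: assume a rational addition relation, hence a finite automaton $M$ recognizing $\rep(\{(x,y,x+y)\})$; use the explicit recurrence to construct, for each $n$, inputs forcing $M$ to distinguish $2^{\Omega(n)}$ continuations from strings of length $O(n)$; conclude $M$ cannot be finite. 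If a fully explicit fooling family proves too fiddly, the fallback is the top-down logical argument: addability $\Rightarrow$ (via \autoref{thm:ws_weakly_codable}) $W(Z)$ weakly $\ans{Z}$-codable $\Rightarrow$ each $P_v(W(Z))$ is $\ans{Z}$-recognizable $\Rightarrow$ (combined with the infinite coding dimension and the explicit slow-growing witnesses like $1010^410^{197}1$, whose structure is dictated by \autoref{lem:rs_characterization}) a length/counting contradiction with regularity of $\rep(P_v(W(Z)))$.
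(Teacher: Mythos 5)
Your overall plan (assume a finite automaton accepts the padded triples of the addition relation and force it, via a pigeonhole/fooling-set argument, to remember unboundedly much information) is the same general technique the paper uses, but as written there is a genuine gap: you never exhibit a fooling family that works, and the concrete phenomena you point to do not by themselves contradict rationality. Long carry cascades -- a single unit added at the bottom rewriting the leading digits, as in $30^m + 1 = 10^{m+1}$ -- occur equally in base $k$, whose addition relation \emph{is} rational, so ``the leading digits of the sum depend on the entire low-order part'' cannot be the obstruction. Your proposed family $x_\ell = |\sigma^\ell(a)|$ (whose value is $3\cdot 2^\ell - 2$, not $4\cdot 2^\ell - 2$) with a bounded $y \in \{1,2,3\}$ is harmless: $\rep(x_\ell + 1) = 20^\ell$, and adding further bounded amounts only edits a suffix of bounded length, which a finite automaton handles easily. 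Moreover, the fooling shape you describe (two input pairs whose representations agree on a long prefix while the sums already disagree inside that prefix) is not the right shape for a synchronous relation: the sum is itself one track of the input word, so if the sums disagree on the prefix the two triples are simply different words there, and no state-collision/cut-and-paste argument applies. Finally, your fallback route cannot close the argument either: as you note yourself, infinite coding dimension of $W(Z)$ (\autoref{prp:z_icd}) is perfectly compatible with weak $\ans{Z}$-codability, and you give no actual proof that some $P_v(W(Z))$ would fail to be $\ans{Z}$-recognizable.

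What is missing, and what the paper's proof isolates, is a family in which the \emph{low-order} digits of the sum encode an unbounded amount of information carried only by the \emph{high-order} digits of one addend. The key identity is that each leading $2$ lowers the value by exactly one: $\rep(\val(2^n30^{m-n}0^\ell)+1) = 2^{n-1}30^{m-n+1}0^\ell$, hence $\val(2^n30^{m-n}0^\ell) = \val(30^m 0^\ell) - n$. Consequently, for a fixed $w$ with $\val(w) > m+1$ and $|w| \leq \ell$, the sum of $2^n30^{m-n}0^\ell$ and $w$ has representation $200^m0^{\ell-|u|}u$ with $u = \rep(\val(w)-n-2)$: its first $m+2$ digits are $200^m$ independently of $n$, while its suffix $u$ determines $n$. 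Choosing $m$ larger than the number of states and applying the pigeonhole principle to the states reached after the length-$(m+2)$ prefixes of these triples for $n = 0, \ldots, m$ (the suffix $0^\ell$ of the first track and the entire $w$-track are the same for all $n$) produces two accepting runs that can be crossed into an accepted triple which is not an addition triple -- the contradiction. Without identifying this (or an equivalent) witness family, the proposal remains a plausible plan rather than a proof.
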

\begin{proof}
  To simplify notation, we write $u + v$ in place of $\rep(\val(u) + \val(v))$ when $u, v \in \rep(\N)$. Notice that if
  $n > 0$, then
  \begin{equation*}
    2^n30^{m-n} 0^{\ell} + 1 = 2^{n-1}30^{m-n+1} 0^\ell
  \end{equation*}
  since there is no word in $\rep(\N)$ strictly between the two in lexicographic order. Now consider $w$ in $\rep(\N)$
  such that $|w| \leq \ell$ and $\val(w) > n+1$. Then by induction we have
  \begin{align*}
    2^n30^{m-n} 0^{\ell} + w &= 30^m 0^{\ell} + \rep(\val(w) - n) \\
    &= 100^m 0^{\ell} + \rep(\val(w) - n - 1) \\
    &= 200^m 0^{\ell} + \rep(\val(w) - n - 2).
  \end{align*}
  By the special form of the language, the left quotient $200^m \setminus \rep(\N)$ is equal to $\rep(\N)$, from which
  we obtain that if $u = \rep(\val(w) - n - 2)$ then
  \begin{equation*}
    200^m 0^{\ell} + u = 200^m 0^{\ell - |u|} u.
  \end{equation*}

  Assume for a contradiction that $\ans{Z}$ is addable, that is, assume that the addition relation $R$, which is a
  subset of $\rep(\N)^3$, is regular and accepted by an automaton $\mathcal{A}$ with $Q$ states. Pick $m > Q$,
  $\val(w) > m + 1$, and $\ell \geq |w|$. Then by the above argument, $R$ contains the triple
  \begin{equation*}
    \begin{bmatrix}
      2^n 3 0^{m-n} 0^\ell \\
      w \\
      200^m 0^{\ell - |\rep(\val(w) - n - 2)|} \rep(\val(w) - n - 2)
    \end{bmatrix}^{\#}
  \end{equation*}
  for each $n$ such that $0 \leq n \leq m$.

  By the pigeonhole principle, there exist $n_1$ and $n_2$ such that $0 \leq n_1 < n_2 \leq m$ and in some accepting runs for
  \begin{equation*}
    t_1 = \begin{bmatrix}
            2^{n_1} 3 0^{m-n_1} 0^{\ell} \\
            w \\
            200^m 0^{\ell - |\rep(\val(w) - n_1 - 3)|} \rep(\val(w) - n_1 - 2)) 
          \end{bmatrix}^{\#}
  \end{equation*}
  and
  \begin{equation*}
    t_2 = \begin{bmatrix}
            2^{n_2} 3 0^{m-n_2} 0^{\ell} \\
            w \\
            200^m 0^{\ell - |\rep(\val(w) - n_2 - 3)|} \rep(\val(w) - n_2 - 2)
          \end{bmatrix}^{\#}
  \end{equation*}
  the automaton $\mathcal{A}$ is in the same state $q$ after reading the prefix of length $m+2$. This means that also
  \begin{equation*}
    t = \begin{bmatrix}
          2^{n_1} 3 0^{m-n_1} 0^{\ell} \\
          w \\
          200^m 0^{\ell - |\rep(\val(w) - n_2 - 3)|} \rep(\val(w) - n_2 - 2)
        \end{bmatrix}^{\#}
  \end{equation*}
  is in $R$ since we can use the accepting run for $t_1$ for $m+2$ steps and then use the accepting run for $t_2$ for
  the remaining steps. However, $t \notin R$, since
  \begin{align*}
    2^{n_1} 3 0^{m-n_1} 0^{\ell} + w &= 200^m 0^{\ell - |\rep(\val(w) - n_1 - 3)|} \rep(\val(w) - n_1 - 3)) \\
                                     &\neq 200^m 0^{\ell - |\rep(\val(w) - n_2 - 3)|} \rep(\val(w) - n_2 - 3)).
  \end{align*}
  This is a contradiction.
\end{proof}

\section{Automata for the Winning Shifts of Certain Automatic Words}\label{sec:automata}
\autoref{thm:main} is constructive: it is in principle possible to find an automaton accepting a coding of $W(X)$.
There are software packages like Walnut \cite{2016:automatic_theorem_proving_in_walnut} that can do this automatically.
In our attempt to directly input the formula of \autoref{thm:formula}, Walnut quickly ran out of memory even in the
case of one of the simplest automatic words, the Thue-Morse word. Using a logically equivalent form of the formulas,
the computation becomes manageable.

Let $\infw{x}$ be a binary $\ans{S}$-automatic word for an addable ANS $\ans{S}$ and $X$ the subshift generated by
$\infw{x}$. Let $\operatorname{factorEq}(i,n,m)$ be a predicate that is true if and only if
$\infw{x}[n,n+i-1] = \infw{x}[m,m+i-1]$. Let $\operatorname{isRS}(i,n)$ be a predicate that is true if and only if the
factor $\infw{x}[n,n+i-1]$ is right special. Consider a predicate $\operatorname{extRS2}(i,j,n)$ defined by the formula
\begin{alignat*}{4}
  \operatorname{extRS2}(i,j,n) & & = i < j \land \exists m_1,m_2\colon &(\operatorname{isRS}(j,m_1) \land \operatorname{isRS}(j,m_2) \land \\
                               & & & \operatorname{factorEq}(i,m_1,m_2) \land \operatorname{factorEq}(i,n,m_1) \land \\
                               & & & \infw{x}[m_1+i] \neq \infw{x}[m_2+i]).
\end{alignat*}
The predicate is true if and only if at position $n$ of $\infw{x}$ there is a right special factor $w$ of length $i$
such that both $w0$ and $w1$ can be extended to a right special factor of length $j$. In other words, the predicate
$\exists n \operatorname{extRS2}(i,j,n)$ is true if and only if there is a strategy tree with branchings at positions
$i$ and $j$. The latter is equivalent with $0^i 1 0^{j-i-1} 1 0^\omega \in W(X)$.

The formula
\begin{alignat*}{3}
  \operatorname{extRS3}(i,j,k,n_1,n_2) & & & = i < j \land \operatorname{extRS2}(j,k,n_1) \land \operatorname{extRS2}(j,k,n_2) \land \\
                                       & & & \hspace{1.2em} \operatorname{factorEq}(i,n_1,n_2) \land \infw{x}[n_1+i] \neq \infw{x}[n_2+i]
\end{alignat*}
encodes strategy trees with three branchings. Here $n_1$ and $n_2$ are positional variables that indicate starting
positions of right special factors in $\infw{x}$. There are two of them because for three branchings there are two
right special factors of a common length whose extensions can all be extended to right special factors of a common
length. The following formula encodes strategy trees with four branchings:
\begin{alignat*}{3}
  \operatorname{extRS4}(i,j,k,\ell,n_1,n_2,n_3,n_4) & & & = i < j \land \operatorname{extRS3}(j,k,\ell,n_1,n_2) \land \operatorname{extRS3}(j,k,\ell,n_3,n_4) \land \\
                                                    & & & \hspace{1.2em} \operatorname{factorEq}(i,n_1,n_2) \land \operatorname{factorEq}(i,n_2,n_3) \land \operatorname{factorEq}(i,n_3,n_4) \land \\
                                                    & & & \hspace{1.2em} \infw{x}[n_1+i] = \infw{x}[n_2+i] \land \infw{x}[n_2+i] \neq \infw{x}[n_3+i] \land \\
                                                    & & & \hspace{1.2em} \infw{x}[n_3+i] = \infw{x}[n_4+i].
\end{alignat*}
The advantage of this formulation is that the number of positional free variables is cut to one quarter. It also helps
that we can directly use previously generated automata for $\operatorname{extRS1}$, $\operatorname{extRS2}$, $\ldots$
instead of encoding all previous steps into a single formula.

\begin{figure}
  \centering
  \includegraphics[width=0.7\textwidth]{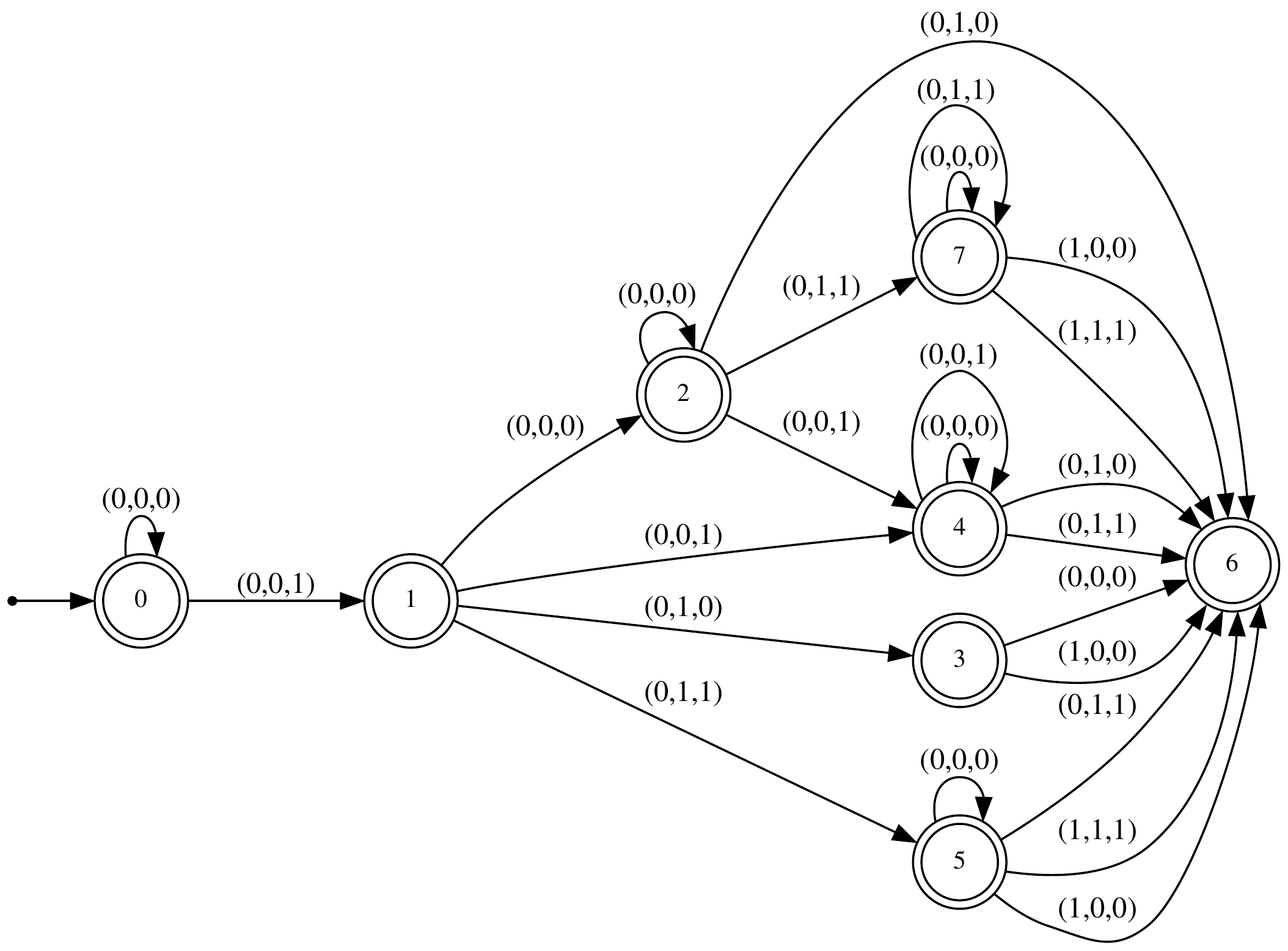}
  \caption{An automaton accepting the winning shift of the Thue-Morse word.}\label{fig:tm}
\end{figure}

Let now $\infw{x}$ be the Thue-Morse word, the fixed point $\mu^\omega(0)$ of the substitution
$\mu\colon 0 \mapsto 01, 1 \mapsto 10$. It is straightforward to prove that the $n$th letter of $\infw{t}$ equals the
number of $1$'s in the binary representation of $n$ modulo $2$, so $\infw{x}$ is a $2$-automatic word. The following
expresses the above predicates in Walnut's syntax for $\infw{x}$.
\begin{verbatim}
def factorEq "Ai (0 <= i & i < k) => T[n+i] = T[m+i]":
def isRS "Em1,m2 $factorEq(k,n,m1) & $factorEq(k,n,m2) & T[m1+k] != T[m2+k]":

def extRS1 "En $isRS(i,n)":
def extRS2 "i < j & Em1,m2 $isRS(j,m1) & $isRS(j,m2) &
                           $factorEq(i,m1,m2) & $factorEq(i,n,m1) &
                           T[m1+i] != T[m2+i]":
def extRS3 "i < j & $extRS2(j,k,n1) & $extRS2(j,k,n2) & $factorEq(i,n1,n2) &
                    T[n1+i] != T[n2+i]":
def extRS4 "i < j & $extRS3(j,k,l,n1,n2) & $extRS3(j,k,l,n3,n4) &
                    $factorEq(i,n1,n2) & $factorEq(i,n2,n3) &
                    $factorEq(i,n3,n4) &
                    T[n1+i] = T[n2+i] & T[n2+i] != T[n3+i] & T[n3+i] = T[n4+i]":
\end{verbatim}
When we input
\begin{verbatim}
def L4 "En1,n2,n3,n4 $extRS4(i,j,k,l,n1,n2,n3,n4)":
\end{verbatim}
to Walnut, we obtain an automaton that accepts $\rep(i,j,k,\ell)$ for those $4$-tuples $(i,j,k,\ell)$ for which the
predicate $\exists n_1, n_2, n_3, n_4 \operatorname{extRS4}(i,j,k,\ell,n_1,n_2,n_3,n_4)$ is true. This automaton in
fact rejects its all inputs, so there is no strategy tree with four branchings in $X$. Analogous computation for
strategy trees with three branchings produces an automaton that accepts some $3$-tuples $(i,j,k)$. This means that the
coding dimension of $X$ is $3$. Both computations finished in a few seconds.

\begin{figure}
  \centering
  \includegraphics[width=0.7\textwidth]{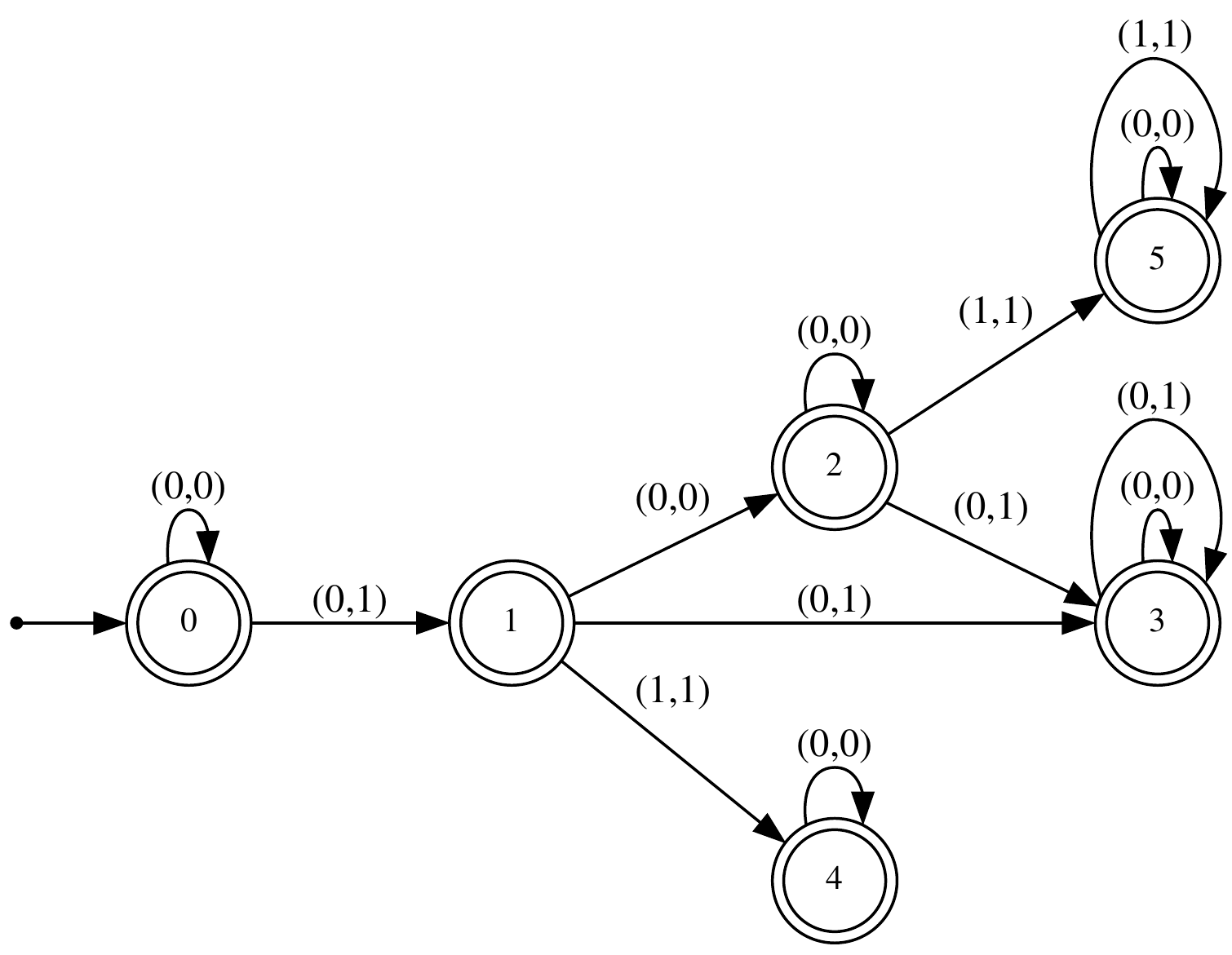}
  \caption{An automaton accepting the winning shift of the period-doubling word.}\label{fig:pd}
\end{figure}

\begin{figure}
  \centering
  \includegraphics[width=\textwidth]{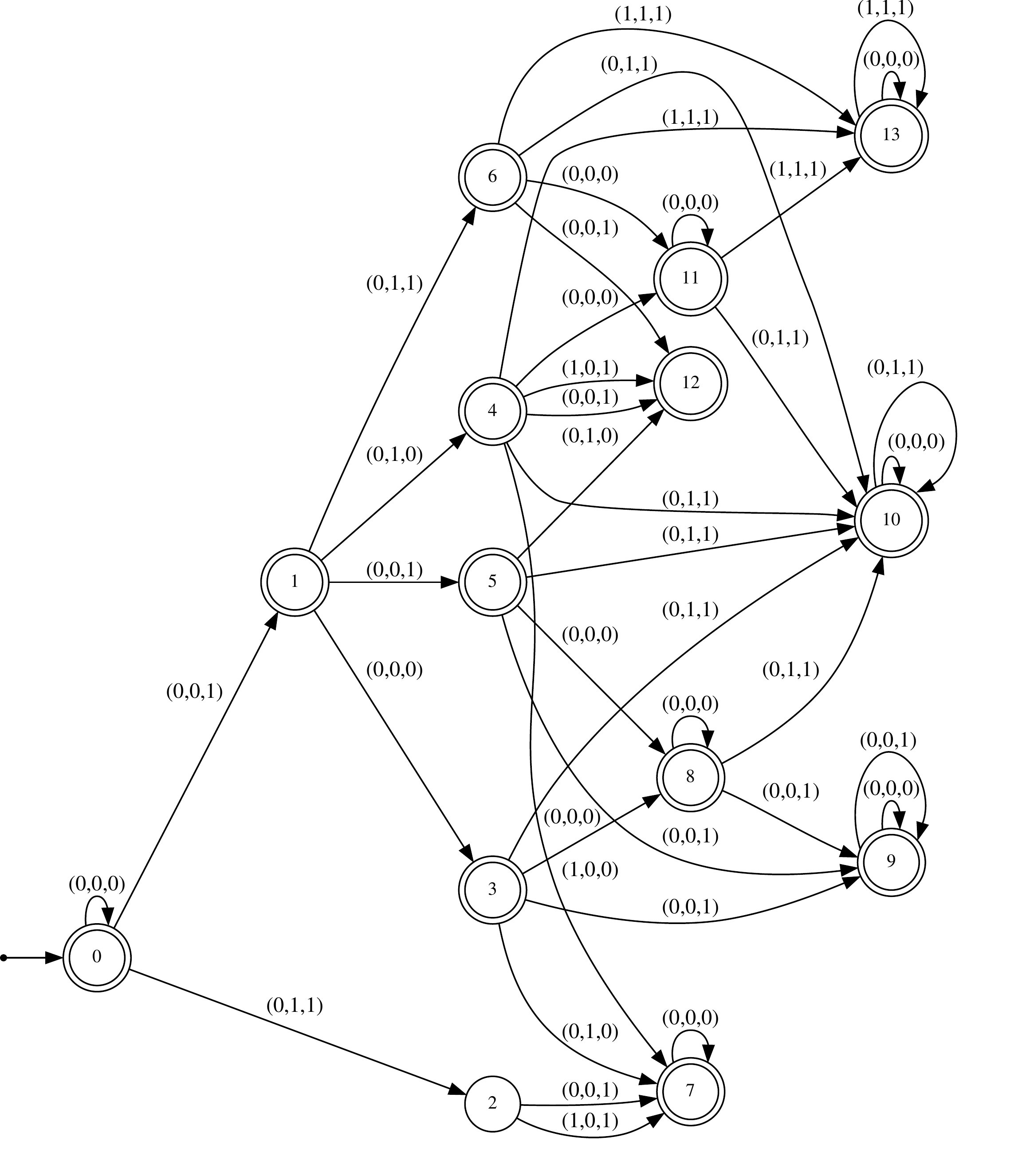}
  \caption{An automaton accepting the winning shift of the regular paperfolding word.}\label{fig:paper}
\end{figure}

\begin{figure}
  \centering
  \includegraphics[width=0.8\textwidth]{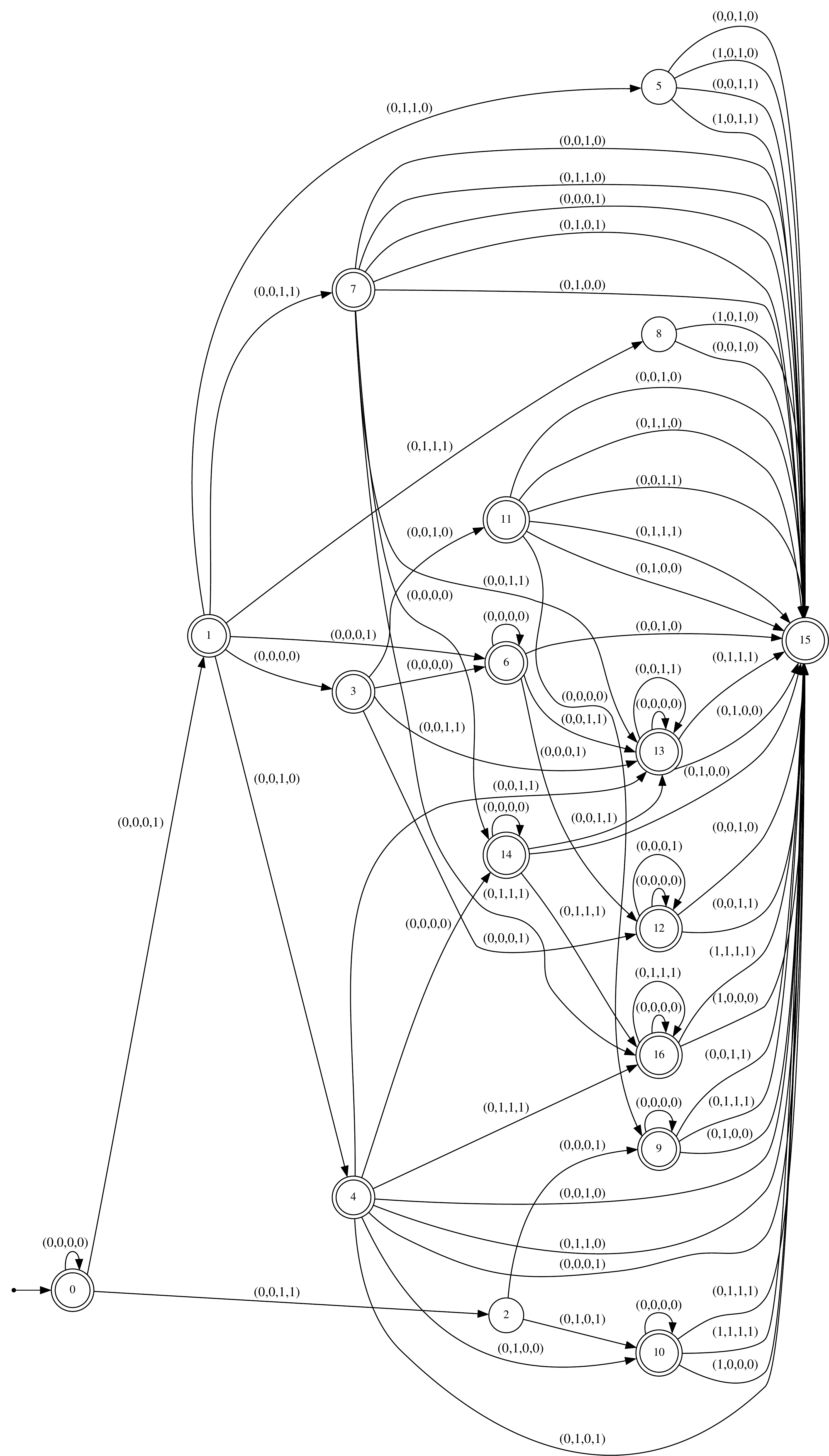}
  \caption{An automaton accepting the winning shift of the Rudin-Shapiro word.}\label{fig:rs}
\end{figure}

Let us then design an automaton accepting the encodings of all words in $W(X)$. We deviate slightly from
\autoref{sec:s_codable} and encode the occurrences of the letter $1$ in a word of $W(X)$ as a triple $(a, b, c)$ as
follows. We use $1$-based indexing and reserve the value $0$ to indicate that an occurrence is missing. For example,
$(0, b, c)$ with $b, c \neq 0$ means that a word has exactly two occurrences of $1$ in positions $b$ and $c$. We
require that $a \leq b \leq c$ and equality occurs only if both values are $0$. For example the words
$10001000000010^\omega$ and $001000000010^\omega$, both of which are in $W(X)$, are respectively encoded as
$(1, 5, 13)$ and $(0, 3, 11)$. The representations of the $3$-tuples in base $2$ are obtained by representing each
component in base $2$ and padding with $0$ to achieve common length. The following Walnut code builds an automaton
accepting the encodings $(a,b,c)$ of words of $W(X)$ as described.
\begin{verbatim}
def tm "(a = 0 & b = 0 & c = 0) |
        (a = 0 & b = 0 & c > 0 & En $isRS(c-1,n)) |
        (a = 0 & b > 0 & c > 0 & En $extRS2(b-1,c-1,n)) |
        (a > 0 & b > 0 & c > 0 & En1,n2 $extRS3(a-1,b-1,c-1,n1,n2))":
\end{verbatim}
The automaton produced by Walnut is depicted in \autoref{fig:tm}. We saw above that the coding dimension of $W(X)$ is
$3$. By a careful study of the automaton, we recover the following characterization of $W(X)$ described in
\cite[Sect.~2]{2019:on_winning_shifts_of_marked_uniform_substitutions}.
\begin{itemize}
  \item The coding dimension of $W(X)$ is $3$.
  \item If $\infw{x}$ in $W(X)$ contains three occurrences of $1$ at positions $a$, $b$, $c$ with $a < b < c$, then
        $a = 1$, $c - b = 2^k$ for some $k \geq 1$, and $b - 1 \leq 2^{k-1}$.
  \item If $\infw{x}$ in $W(X)$ contains exactly two occurrences of $1$ at positions $b$, $c$ with $b < c$, then either
        $b = 1$ or $c - b = 2^k$ for some $k \geq 1$ and $b - 1 \leq 2^{k-1}$.
  \item If $\infw{x}$ in $W(x)$ contains exactly one occurrence of $1$, then this occurrence can be at any position.
\end{itemize}

The above procedure can be repeated for other automatic words. We did this for the regular paperfolding word, the
Rudin-Shapiro word \cite[Sect.~5.1]{2003:automatic_sequences}, and the period-doubling word
\cite[Ex.~6.3.4]{2003:automatic_sequences} . The winning shift associated with the regular paperfolding word has coding
dimension $3$, and the automaton of \autoref{fig:paper} accepts it. The winning shift for the Rudin-Shapiro word has
coding dimension $4$ and automaton of \autoref{fig:rs}. Finally, the winning shift for the period-doubling word has
coding dimension $2$ and automaton of \autoref{fig:pd}. This automaton yields the following characterization for the
winning shift $W(X)$ of the period-doubling word:
\begin{itemize}
  \item The coding dimension of $W(X)$ is $2$.
  \item If $\infw{x}$ in $W(X)$ contains exactly two occurrences of $1$ at positions $a$, $b$ with $a < b$, then
        $b - a = 2^k$ for some $k \geq 1$ and $a - 1 \leq 2^{k-1}$.
  \item If $\infw{x}$ in $W(x)$ contains exactly one occurrence of $1$, then this occurrence can be at any position.
\end{itemize}

\section{Winning Shifts of Sofic Shifts and \texorpdfstring{$\omega$}{omega}-regular Languages}
Let us first provide a proof that $W(X)$ is regular whenever $X$ is. This follows from
\cite[Prop.~3.8]{2014:playing_with_subshifts} in the case of a binary alphabet. We provide a general proof for
completeness. It is a straightforward generalization of the arguments of \cite[Prop.~3.8]{2014:playing_with_subshifts}.
See \cite[Fig.~2]{2014:playing_with_subshifts} for example automata. For the basics on sofic shifts, we refer the
reader to \cite{1995:an_introduction_to_symbolic_dynamics_and_coding}. Observe that the sofic shifts considered here
are \emph{one-sided}.

\begin{proposition}\label{prop:WRegular}
  If $X$ is a regular language, then $W(X)$ is a regular language. In particular if $X$ is a sofic shift, then $W(X)$
  is a sofic shift.
\end{proposition}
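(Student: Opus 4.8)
The plan is to show directly that the winning set $W(X)$ of a regular language $X$ is regular by tracking, for each prefix $a_1 \cdots a_j$ of a choice sequence, the set of ``live'' states of an automaton for $X$ — that is, the set of words that Alice can still force the play into. Concretely, fix a DFA $\mathcal{M}$ for $X$ with state set $Q$, and for a word $w \in A^*$ let $\delta(w) \in Q$ denote the state reached. I would work instead with \emph{sets} of states: a strategy for Alice against a choice sequence prefix determines, at each stage, a set $S \subseteq Q$ of states such that from every state in $S$ Alice can continue to guarantee a word in $X$. The key combinatorial observation is the one-step recursion: a set $S$ survives a round with choice value $\alpha$ (Alice picks a subset of size $\alpha+1$, Bob picks a letter) if and only if there is a map assigning to each $q \in S$ a size-$(\alpha+1)$ subset $A_q \subseteq A$ so that for every $q$ and every $a \in A_q$ the successor state $\delta(q,a)$ lies in the ``next'' surviving set. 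This is exactly a subset-construction-style transition, so I would build an automaton whose states are elements of $2^Q$ (together with some bookkeeping, see below), whose transition on reading choice symbol $\alpha$ sends $S$ to the appropriate set of successor states, and whose accepting condition records that all live states are final states of $\mathcal{M}$.

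The steps, in order: (1) Reduce to the finite-length case by recalling $W(X) = \bigcup_n W(X \cap A^n)$ and noting that regularity of $X$ gives, for the infinite game, an analogous formulation — for $\omega$-regular / sofic $X$ one works with the transition structure directly rather than with lengths. (2) Define, for a subset $S \subseteq Q$ and a choice value $\alpha \in \{0,\dots,|A|-1\}$, the collection of possible successor sets: $T$ is an $\alpha$-successor of $S$ if there is $f\colon S \to \binom{A}{\alpha+1}$ with $\delta(q,a) \in T$ for all $q\in S$, $a \in f(q)$; since we want the smallest obligations, take $T = \bigcup_{q\in S}\bigcup_{a\in f(q)}\delta(q,a)$ over all such $f$, giving finitely many successors. (3) Make this deterministic by the usual powerset trick one more level up, or — cleaner — observe that Alice wins from $S$ with a given choice sequence iff \emph{some} branch of this nondeterministic automaton accepts, so a single further subset construction on $2^{2^Q}$ yields a DFA; its accepting states are those families all of whose member sets consist solely of $\mathcal{M}$-final states. (4) Verify by induction on the length of the choice sequence that the constructed automaton accepts $\alpha_1\cdots\alpha_n$ iff Alice has a winning strategy, i.e.\ iff $\alpha_1\cdots\alpha_n \in W(X\cap A^n)$; the induction hypothesis is precisely the ``live set'' characterization above. (5) For the infinite game and the sofic-shift statement, note that $W(X)$ is a subshift (quoted from the introduction / \cite{2014:playing_with_subshifts}) and that a subshift whose language is regular is sofic; alternatively run the same construction with a Büchi-type acceptance condition to handle $\omega$-words directly, then extract the language of finite factors.

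The main obstacle I expect is getting the \emph{determinization} bookkeeping right without the state space or the argument ballooning: the naive ``set of states'' automaton is nondeterministic because Alice's choice of the subsets $A_q$ is a genuine choice, and one must be careful that passing to $2^{2^Q}$ really captures ``Alice can win'' (an $\exists$ over branches) rather than accidentally encoding ``Alice wins along all branches.'' A secondary subtlety is the hereditary structure: one should check that the accepting condition is monotone in $\alpha$ (decreasing some $\alpha_j$ only shrinks Alice's obligations), which reconfirms that the accepted language is hereditary as it must be, and serves as a sanity check on the construction. Once the finite-length automaton is in hand, the sofic conclusion is essentially immediate from the standard fact that a subshift is sofic iff its language is regular, so I would not belabor that final step.
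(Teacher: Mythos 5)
Your core idea is sound and is, in essence, the paper's own proof carried out explicitly. The paper builds a boolean (alternating) automaton on the same state set $Q$ with transitions $\tau(q,a)=\bigvee_{|C|=a+1}\bigwedge_{c\in C}\delta(q,c)$, proves by induction the recursion $aw\in W(L_q)\iff\exists C,\,|C|=a+1,\,\forall c\in C\colon w\in W(L_{\delta(q,c)})$, and then cites the fact that boolean automata accept exactly the regular languages; your ``live set'' NFA on $2^Q$, where the set records Alice's universal obligations and the nondeterminism records her existential choice of offered subsets, is precisely the standard simulation of that alternating automaton, and your induction in step (4) is the same recursion. One small point worth a sentence: a priori Alice may offer different subsets to different histories that reach the same state $q$, so the reduction to per-state maps $f\colon S\to\binom{A}{\alpha+1}$ needs the (easy) observation that only the residual language $L_q$ matters for the remainder of the game; the paper's formulation in terms of $W(L_q)$ builds this in automatically. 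Your step (5) matches the paper's brief handling of the sofic claim.

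There is, however, one concrete error in step (3): after determinizing to $2^{2^Q}$ you declare accepting ``those families all of whose member sets consist solely of $\mathcal{M}$-final states.'' This encodes exactly the $\forall$-over-branches reading that you warn against in your obstacle paragraph: it accepts a choice sequence only if \emph{every} admissible sequence of subset-offers wins for Alice, which is in general strictly smaller than $W(X)$ (for $X=\{00,01,10\}$ the sequence $00$ lies in $W(X)$ but would be rejected, since offering $\{1\}$ twice loses). The correct condition is that \emph{some} member of the family is contained in $F$: an NFA state $S$ is accepting iff $S\subseteq F$, and subset-construction acceptance is existential over reachable NFA states. Simpler still, drop the determinization entirely --- the NFA on $2^Q$ already witnesses regularity. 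With that repair your argument is correct.
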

\begin{proof}
  Denote by $\mathcal{A}$ a deterministic finite automaton $(Q, \Sigma, \delta, q_0, F)$ accepting a regular language
  $X$. Based on $\mathcal{A}$, we build a boolean automaton $\mathcal{B}$ (sometimes called an alternating automaton)
  accepting $W(X)$. Since boolean automata accept exactly the regular languages, we see that $W(X)$ is regular. For the
  proof of this fact and rigorous definition of boolean automata, see
  \cite{1980:on_equations_for_regular_languages_finite_automata} for example.

  Let $Q = \{q_1, \ldots, q_n\}$. The set of states of $\mathcal{B}$ is $Q$, and its set of final states is $F$. Let
  $\tau\colon Q \times \Sigma \to \mathbb{B}^Q$ be the transition function mapping a state and a letter to a boolean
  function. We define
  \begin{equation*}
    \tau(q, a) = \bigvee_{\substack{\{c_1, \ldots, c_{a+1}\} \subseteq \Sigma \\ |\{c_1, \ldots, c_{a+1}\}| = a+1}} \bigwedge_{i=1}^{a+1} \delta(q, c_i).
  \end{equation*}
  For example, if $\Sigma = \{0,1\}$ and $a = 1$, then $\tau(q, a) = s_0 \land s_1$ where $s_0 = \delta(q, 0)$ and
  $s_1 = \delta(q, 1)$. The interpretation is that the formula is true and the input $aw$ is accepted if and only if
  the computations on two copies of $\mathcal{A}$ with respective initial states $s_0$ and $s_1$ and input $w$ are both
  accepted. In other words, we substitute $s_0$ with $1$ if $\mathcal{A}$ accepts $w$ from $s_0$ and $0$ otherwise. We
  do the same for $s_1$ and then evaluate $s_0 \land s_1$.

  The function $\tau$ extends to $Q \times A^*$ recursively by setting $\tau(q, aw)$ to be the boolean function
  $f_{q,a}(\tau(q_1, w), \ldots, \tau(q_n, w))$ where $f_{q,a} = \tau(q, a)$ and $\tau(q, \varepsilon) = q$.
  We accept a word $w$ from state $q$ if and only if $\tau(q, w)$ evaluates to $1$.

  Let $L_q^{\mathcal{X}}$ be the language an automaton $\mathcal{X}$ accepts from state $q$. First of all, we have
  $\varepsilon \in W(L_q^\mathcal{A})$ if and only if $\varepsilon \in L_q^\mathcal{B}$ because the final states in
  $\mathcal{A}$ and $\mathcal{B}$ are the same. Let $a$ be a letter and $w$ a word. Then
  \begin{align*}
    aw \in W(L_q^\mathcal{A})
    &\quad \Longleftrightarrow \quad
    \exists C \subseteq \Sigma\colon |C| = a+1 \land \forall c \in C\colon w \in W\left(L_{\delta(q,c)}^\mathcal{A}\right) \\
    &\quad \Longleftrightarrow \quad
    \exists C \subseteq \Sigma\colon |C| = a+1 \land \forall c \in C\colon w \in L_{\delta(q,c)}^\mathcal{B} \\
    &\quad \Longleftrightarrow \quad
    \bigvee_{\substack{\{c_1, \ldots, c_{a+1}\} \subseteq \Sigma \\ |\{c_1, \ldots, c_{a+1}\}| = a+1}} \bigwedge_{i=1}^{a+1} \tau(\delta(q,c_i), w) = 1 \\
    &\quad \Longleftrightarrow \quad
    \tau(q, aw) = 1 \\
    &\quad \Longleftrightarrow \quad
    aw \in L_q^\mathcal{B}
  \end{align*}
  where the second equivalence follows from the induction hypothesis. The first claim follows.
  
  The latter claim follows since $W(X)$ is shift-invariant and closed whenever $X$ is. (This is easy to show; see
  \cite[Section~3]{2014:playing_with_subshifts} for details.)
\end{proof}

If the automaton $\mathcal{A}$ in the preceding proof is a B{\"u}chi automaton accepting a $\omega$-language $X$, then
the constructed automaton $\mathcal{B}$ is a boolean B{\"u}chi automaton accepting $W(X)$. Since there is an equivalent
B{\"u}chi automaton for each boolean B{\"u}chi automaton \cite{1984:alternating_finite_automata_on_omega-words}, we
obtain the following result.

\begin{proposition}
\label{prop:WOmegaRegular}
  Let $X$ be a subset of $A^\N$. If $X$ is $\omega$-regular, then $W(X)$ is $\omega$-regular.
\end{proposition}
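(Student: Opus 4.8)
The plan is to lift the proof of \autoref{prop:WRegular} from finite words to $\omega$-words, as the discussion preceding the statement indicates: one replaces ``finite automaton'' by ``B\"uchi automaton'' and ``boolean automaton'' by ``boolean B\"uchi automaton'' throughout, and then appeals to the equivalence between boolean B\"uchi automata and ordinary B\"uchi automata.

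In detail, I would proceed as follows. First, fix a B\"uchi automaton $\mathcal{A} = (Q, \Sigma, \delta, q_0, F)$ recognising $X$; for the key verification below to run smoothly I would take $\mathcal{A}$ deterministic, which is harmless provided we permit a general $\omega$-regular (say, parity) acceptance condition in place of B\"uchi acceptance --- when $X$ is deterministic-B\"uchi recognisable one may keep $\mathcal{A}$ B\"uchi, which is the setting of the remark preceding the statement. Second, build the boolean automaton $\mathcal{B}$ exactly as in \autoref{prop:WRegular}: state set $Q$, final states $F$, transition function $\tau(q,a) = \bigvee \bigwedge_{i=1}^{a+1}\delta(q,c_i)$ with the disjunction over the $(a+1)$-element subsets $\{c_1,\ldots,c_{a+1}\}$ of $\Sigma$; but now let $\mathcal{B}$ read infinite inputs, and call a run of $\mathcal{B}$ on $\alpha \in A^\N$ accepting when every infinite branch, read as a run of $\mathcal{A}$, satisfies $\mathcal{A}$'s acceptance condition. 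Third, show $L(\mathcal{B}) = W(X)$: a run of $\mathcal{B}$ on $\alpha$ is nothing but a strategy tree for Alice against the target $X$, where at each node Alice offers the subset $\{c_1,\ldots,c_{a+1}\}$ witnessing the chosen disjunct of $\tau$, the node reached along a play prefix $u$ carries the state $\mathcal{A}$ reaches on reading $u$, and an infinite play $\beta$ produced by Bob lies in $X$ exactly when the corresponding branch of the run is accepting; hence the run is accepting iff the strategy is winning, so $\alpha \in L(\mathcal{B})$ iff $\alpha \in W(X)$ (using that $W(X) = W(X \cap A^\N)$ since $X \subseteq A^\N$). Fourth, conclude: $\mathcal{B}$ is a boolean B\"uchi automaton (or a boolean parity automaton in the general case), so by \cite{1984:alternating_finite_automata_on_omega-words}, respectively by the standard simulation of alternating parity automata by nondeterministic B\"uchi automata, there is an ordinary B\"uchi automaton recognising $L(\mathcal{B}) = W(X)$. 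Thus $W(X)$ is $\omega$-regular.

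The step I expect to be the real obstacle is the third one, $L(\mathcal{B}) = W(X)$. The inclusion that an accepting run of $\mathcal{B}$ gives a winning strategy is a straight transcription of the equivalence chain in the proof of \autoref{prop:WRegular}, with the base case $\varepsilon \in W(L_q^{\mathcal{A}}) \iff \varepsilon \in L_q^{\mathcal{B}}$ replaced by the limit statement that a branch belongs to $X$ iff it satisfies $\mathcal{A}$'s acceptance condition. The converse --- that every winning strategy tree lifts to an accepting run of $\mathcal{B}$ --- is where taking $\mathcal{A}$ deterministic pays off: determinism makes the state labels along all branches automatically consistent, so the strategy tree lifts verbatim to a run of $\mathcal{B}$. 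With a nondeterministic acceptor one would additionally have to choose $\mathcal{A}$'s transitions coherently along all branches of the (possibly uncountable) strategy tree at once, and it is cleanest to sidestep this by determinising $X$ at the outset.
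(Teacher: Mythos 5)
Your proposal is correct and follows essentially the same route as the paper: you reinterpret the boolean (alternating) automaton of \autoref{prop:WRegular} over $\omega$-words so that its run trees are exactly Alice's strategy trees, and then conclude by alternation elimination for $\omega$-automata. The two treatments diverge only in where the care is spent. The paper keeps the acceptor a B\"uchi automaton and concentrates on a definitional mismatch with \cite{1984:alternating_finite_automata_on_omega-words} (only a single conjunction or disjunction per state, no $\varepsilon$-moves), resolved by the substitution $s \mapsto s\#$; your appeal to standard alternation removal for alternating parity automata absorbs that point. Conversely, you concentrate on determinism, first replacing the acceptor by a deterministic parity automaton. This is a legitimate refinement rather than pedantry: exactly as in the finite-word case, the construction needs $\delta$ to be a function, because with a nondeterministic acceptor the accepting runs chosen branchwise need not be prefix-consistent along the strategy tree (e.g.\ for $X=\{ba^\omega, bc^\omega\}$ with the obvious two-branch nondeterministic acceptor, the choice sequence $010^\omega$ lies in $W(X)$ but is rejected by the naively built alternating automaton), and since deterministic B\"uchi automata do not capture all $\omega$-regular languages, passing to a deterministic parity (or Muller/Rabin) condition as you do is the clean way to cover the general statement --- a point the paper's sketch, which keeps the automaton of \autoref{prop:WRegular} and simply declares it B\"uchi, leaves implicit. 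Both versions are correct modulo the standard results they invoke.
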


The natural definition of $W(X)$ in this case is through branching structures of trees
\cite{2021:trees_in_positive_entropy_subshifts}, and this is exactly analogous to how acceptance is defined in boolean
$\omega$-automata, so the proof is exactly analogous. There is a small technical detail: Unlike in
\cite{1980:on_equations_for_regular_languages_finite_automata}, the definition of a boolean automaton in
\cite{1984:alternating_finite_automata_on_omega-words} does not allow an arbitrary boolean combination but only a
single conjunction or disjunction at each state (and $\varepsilon$-transitions are not allowed either). However, since
$\omega$-regular languages are closed under inverse substitutions, it is enough to show that the image of $W(X)$ under
the substitution $s \mapsto s\#$ is $\omega$-regular, allowing the simulation of an expression of the form
$\bigvee \bigwedge$. That is, we use two transitions per one letter and switch back and forth between an automaton with
conjunctions and an automaton with disjunctions.

Notice that when $W(X)$ is $\ans{S}$-codable, it is not true that $W(X)$ is necessarily $\omega$-regular. Consider for
example the winning shift $W(X)$ of the Thue-Morse word studied in \autoref{sec:automata}. Suppose for a
contradiction that there exists a B{\"u}chi automaton $\mathcal{A}$ recognizing $W(X)$. We first observe that if in a
successful run a final state is visited twice, then the path corresponding to this cycle has label in $0^*$. Otherwise
by pumping we could produce infinitely many occurrences of $1$ to a word in $W(X)$ and this is impossible as the coding
dimension of $W(X)$ is $3$. Since we may assume that each final state is visited at least twice in some successful run,
it follows that $\mathcal{A}$, when consider as a finite automaton, accepts $L 0^*$ where
\begin{equation*}
  L = \{w \in \{0,1\}^* 1 : w0^\omega \in W(X)\}.
\end{equation*}
Hence $L$ is regular. Let $L'$ be the regular language of words belonging to $L$ and having exactly three occurrences
of $1$. By the characterization provided in \autoref{sec:automata}, we have
\begin{equation*}
  L' = \{10^n 10^m1 : \text{$m - 1 = 2^k$ for some $k \geq 1$ and $n \leq 2^{k-1}$}\}.
\end{equation*}
A standard argument using the pumping lemma shows that $L'$ is not regular; a contradiction.

\begin{definition}\label{def:codable_general}
  Let $\ans{S}$ be an ANS and $Y$ be a subset of $A^\N$. We say that $Y$ is (weakly) $\ans{S}$-codable if there exists
  a bijection $\pi\colon A \to \{0, 1, \ldots, |A| - 1\}$ such that $0 \in \pi(A)$ and $\pi(Y)$ is (weakly)
  $\ans{S}$-codable. We call $\pi^{-1}(0)$ the \emph{zero symbol} of $Y$ and $\pi^{-1}(0^\omega)$ the \emph{zero point}
  associated with $Y$. All other symbols are \emph{nonzero symbols}.
\end{definition}

\begin{definition}
  We say that a set is (weakly) \emph{$1$-codable} if it is (weakly) $\ans{S}$-codable for the ANS $\ans{S}$ with
  language $0^*$ and radix order. When we say that a set is $1$-recognizable, we simply mean that it is
  $\ans{S}$-recognizable.
\end{definition}

We show in \autoref{thm:CountableSoficFCD} that for any sofic shift $X$, the winning shift $W(X)$ is weakly
$1$-codable, and in the countable case it is $1$-codable.

\begin{lemma}\label{lem:UniquePeriodicPoint}
  If $X$ is a sofic shift with a unique periodic point, then $X$ is countable and has finite coding dimension.
\end{lemma}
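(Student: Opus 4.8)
The plan is to use a finite labeled-graph presentation of the sofic shift $X$ and to show that, outside of boundedly many coordinates, every point of $X$ must repeat the constant letter of its unique periodic point.

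First I would pin down the unique periodic point. If $p$ is a periodic point of $X$, then every shift $\sigma^i(p)$ is again a periodic point lying in $X$, so by uniqueness $\sigma^i(p)=p$ for all $i$; in particular $\sigma(p)=p$, which forces $p=a^\omega$ for the single letter $a=p_0$. Next, since $X$ is sofic, I would fix a presentation of $X$ by a finite labeled directed graph $G=(V,E)$ with edge labels in $A$ together with a set $I\subseteq V$ of initial vertices, so that $X$ is exactly the set of labels of right-infinite walks in $G$ starting in $I$; after deleting vertices not reachable from $I$ and then iteratively deleting vertices with no outgoing edge, I may assume every vertex of $G$ is reachable from $I$ and lies on a right-infinite walk, without changing $X$.

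The heart of the argument is the claim that the label of every cycle in $G$ lies in $a^*$. Let $c=c_0c_1\cdots c_{m-1}$ be the label of a cycle at a vertex $v$, and let $u$ be the label of a finite walk from $I$ to $v$ (which exists by reachability). Then $u\,c^\omega$ is the label of a right-infinite walk from $I$, hence $u\,c^\omega\in X$, and by shift-invariance $c^\omega=\sigma^{|u|}(u\,c^\omega)\in X$. The point $c^\omega$ is periodic, and so is each of its cyclic shifts $(c_ic_{i+1}\cdots c_{m-1}c_0\cdots c_{i-1})^\omega\in X$. If the letters of $c$ were not all equal, then these shifts could not all coincide (a point fixed by $\sigma$ is constant), so $X$ would contain at least two distinct periodic points, contradicting uniqueness. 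Hence $c=b^m$ for a single letter $b$, and then $b^\omega=c^\omega\in X$ is periodic, so $b=a$; thus $c\in a^*$.

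Finally I would cash this in. An edge $(s,t)$ lying on a cycle contributes one of the letters of that cycle's label, so by the claim its label is $a$. Consequently any edge whose label differs from $a$ lies on no cycle, hence is \emph{transient}: it is traversed at most once by any walk, since a second traversal would produce a walk from $t$ back to $s$ and thus a cycle through it. Therefore the label of any right-infinite walk from $I$ contains at most $N:=|E|$ letters different from $a$, so every $x\in X$ satisfies $|\{i:x_i\neq a\}|\le N$. This gives countability, because $A^\N$ has only countably many points with at most $N$ non-$a$ coordinates, and finite coding dimension with $a$ as the zero symbol, since the corresponding image of $X$ in $\N^\N$ consists of points whose entries sum to at most $(|A|-1)N$. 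I expect the routine-looking but essential point to be the graph bookkeeping — isolating the claim about cycle labels and then noticing that non-$a$ edges must be transient; the cyclic-shift trick and the deduction of countability are then immediate.
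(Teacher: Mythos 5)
Your proof is correct and follows essentially the same route as the paper: fix a finite labeled-graph presentation, observe that a cycle carrying a letter other than the constant letter of the unique (necessarily fixed) periodic point would produce a second periodic point, and conclude a uniform bound on the number of non-$a$ symbols in any point, hence finite coding dimension and countability. The only difference is cosmetic: the paper argues by contradiction via pigeonhole on a word of large sum, while you directly show non-$a$ edges are transient and bound their number by $|E|$.
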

\begin{proof}
  Say $X$ is a sofic shift. We may assume that $X \subseteq \N^\N$. If $X$ has a periodic point with period greater
  than $1$, then it does not have unique periodic point. We may thus assume that the unique periodic point is
  $0^\omega$, and we take $0$ as the zero symbol. Suppose for a contradiction that there are words in $X$ with
  arbitrarily large sums. Recall that by the definition of soficity, $X$ is the set of edge labels of right-infinite
  paths in some finite graph. Fix such a graph defining $X$, and let $M$ be the number of vertices in this graph.
  
  Let $w_n$ be a word in the language of $X$ with sum at least $n$. In particular, because the alphabet of $X$ is
  finite, the support of $w_n$ tends to infinity in $n$. It is easy to see that if the support of $w_n$ is strictly
  larger than $M$, then the path corresponding to $w_n$ contains a cycle with a nonzero label (apply the pigeonhole
  principle to positions with a nonzero label).  This cycle can be used to construct a periodic point which is not
  equal to $0^\omega$, and thus there are at least two periodic points, a contradiction.
  
  If $X$ has finite coding dimension, then $X$ is clearly countable.
\end{proof}

In the proof, one can also use Ogden's lemma \cite{1968:a_helpful_result_for_proving_inherent_ambiguity} from formal
language theory.

We recall the connection between sofic shifts and $\omega$-regular languages as this is needed in the following
sections. (This would allow one to deduce the latter claim of \autoref{prop:WRegular} from
\autoref{prop:WOmegaRegular}.)

\begin{lemma}\label{lem:OmegaRegularSofic}
  A subshift is sofic if and only if it is $\omega$-regular.
\end{lemma}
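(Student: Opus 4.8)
The plan is to prove the two implications separately, using the standard characterization (valid for one-sided subshifts) that a subshift $X$ is sofic if and only if its language $\mathcal{L}(X)$ is a regular language; this fact may be quoted from \cite{1995:an_introduction_to_symbolic_dynamics_and_coding}.

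For the direction ``sofic $\Rightarrow$ $\omega$-regular'', I would start from a presentation of $X$ as the set of edge labels of right-infinite paths in a finite labeled graph $G$. Iteratively discarding all vertices with no outgoing edge does not change the presented shift and leaves a graph in which every vertex is the start of some infinite path. Viewing $G$ as a B\"uchi automaton by declaring every vertex both initial and accepting, an $\omega$-word is accepted precisely when it labels an infinite path of $G$, that is, precisely when it lies in $X$. Hence $X$ is $\omega$-regular.

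For ``$\omega$-regular $\Rightarrow$ sofic'', I would first use shift-invariance to write $\mathcal{L}(X) = \{w \in A^* : w \text{ is a prefix of some } x \in X\}$: a factor of $x \in X$ occurs as a prefix of a suitable shift of $x$, which again lies in $X$. Fix a B\"uchi automaton $\mathcal{A} = (Q, A, \delta, q_0, F)$ accepting $X$, and let $G \subseteq Q$ be the set of states from which some accepting run exists; $G$ is effectively computable, as it consists of the states that can reach some state of $F$ lying on a cycle through itself. Then a finite word $w$ is a prefix of a word of $X$ if and only if some run of $\mathcal{A}$ on $w$ ends in $G$, so $\mathcal{L}(X)$ is exactly the language of the ordinary finite automaton $(Q, A, \delta, q_0, G)$ and is therefore regular. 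Since $X$ is closed and shift-invariant, it is determined by $\mathcal{L}(X)$ — a point all of whose factors belong to $\mathcal{L}(X)$ is a limit of elements of $X$ — and a subshift with a regular language is sofic.

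I do not expect a genuine obstacle, since the statement is classical; the only points requiring care are the reduction to an essential graph presentation in the first direction, so that ``infinite path'' and ``accepting run'' coincide, and, in the second direction, the observation that the prefix closure of an $\omega$-regular language is regular, i.e.\ that ``from here an accepting run exists'' is a reachability-type, hence regular, condition on states.
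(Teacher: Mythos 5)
Your proof is correct, and the forward direction is essentially the paper's: view the graph presentation as a B\"uchi automaton in which every vertex is initial and accepting (your preliminary pruning of vertices with no outgoing edge is harmless but not actually needed, since such vertices carry no infinite paths). For the converse you take a genuinely different route. The paper operates directly on the B\"uchi automaton: after arranging that every state can reach a final state, it makes all states final (legitimate because $X$ is topologically closed) and all states initial (legitimate because $X$ is shift-invariant), and what remains is literally an edge-labeled graph whose right-infinite path labels form $X$, i.e.\ a sofic presentation. You instead pass through the factor language: you identify $\mathcal{L}(X)$ with the prefix language of $X$ using shift-invariance, prove this prefix language regular by taking as accepting states the set $G$ of states from which an accepting run exists (the standard lasso criterion for B\"uchi non-emptiness), use closedness to see that $X$ is exactly the set of points all of whose factors lie in $\mathcal{L}(X)$, and then invoke the classical characterization that a subshift with regular language is sofic. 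Both arguments spend closedness and shift-invariance, just at different points; the paper's version is more self-contained, producing a graph presentation directly from the automaton, while yours is more modular but leans on the quoted Lind--Marcus fact, which you should be careful to cite in its one-sided form since the sofic shifts in this paper are one-sided (the follower-set construction gives it verbatim). Either way the argument goes through.
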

\begin{proof}
  A subshift $X$ is sofic if and only if it is the set of labels of right-infinite paths in a finite graph. If $X$ is
  sofic, we can directly use this graph as a deterministic B{\"u}chi automaton for $X$ as an $\omega$-language.
  Conversely, suppose $X$ is $\omega$-regular, so that it is accepted by a B{\"u}chi automaton. Without loss of
  generality, we may assume that from every state of the automaton there is a path to a final state. Then it is clear
  that making all the states final does not change the language (since $X$ is topologically closed) and making all
  states initial does not change the language either (since $X$ is shift-invariant). What is left is nothing but a
  finite edge-labeled graph, and the labels of its right-infinite paths form precisely $X$.
\end{proof}

\begin{proposition}\label{prop:UPP1Codable}
  A subshift is sofic with a unique periodic point if and only if it is $1$-codable.
\end{proposition}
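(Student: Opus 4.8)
The plan is to prove both implications through the language $L = \{w \in A^* : w 0^\omega \in X\}$ (with respect to a suitable zero symbol), exploiting the fact that once $X$ has finite coding dimension every configuration of $X$ is eventually the zero point, so that $X = L\{0\}^\omega$. Using \autoref{def:codable_general} I normalize first: choosing the bijection $\pi$, I assume $X \subseteq \N^\N$ with zero symbol $0$, and since a subshift has a finite alphabet I may write $X \subseteq \{0,1,\ldots,r\}^\N$. I also assume, as is standard, that $X$ is nonempty. The argument combines three earlier results: \autoref{lem:OmegaRegularSofic} (a subshift is sofic iff it is $\omega$-regular), \autoref{lem:UniquePeriodicPoint} (a sofic shift with a unique periodic point is countable with finite coding dimension), and \autoref{lem:equivalence} (a set is weakly $\ans{S}$-codable iff all of its $P_v$ are $\ans{S}$-recognizable).

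For the direction ``$X$ sofic with a unique periodic point implies $X$ is $1$-codable'': the unique periodic point is necessarily $c^\omega$ for a single letter $c$ (a nontrivial shift of it would be a second periodic point), so I pick $\pi$ with $\pi(c) = 0$ and the unique periodic point becomes $0^\omega$. By \autoref{lem:UniquePeriodicPoint} the coding dimension $d$ of $X$ is finite, so it remains to establish weak $1$-codability, which by \autoref{lem:equivalence} amounts to showing each $P_v(X)$ is $1$-recognizable (trivially so unless $v$ lies in the finite set $\{1,\ldots,r\}^{\le d}$). Fixing a Büchi automaton for $X$, the set $L$ is regular, because $w0^\omega \in X$ iff $w$ drives the automaton to a state from which $0^\omega$ is accepted, and the set of such states is effectively computable; moreover $P_v(X) = \{s(w) : w \in L,\ e(w) = v\}$ since every configuration of $X$ has finite support. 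Finally, reading a word $w$ with $e(w) = v$ and outputting, column by column over $(\{0,\#\})^{|v|}$, the padded unary representation $\rep(s(w))$ in reverse is realized by a finite-state transducer: emit an all-zero column for each leading zero of $w$, a column switching one more track to $\#$ for each nonzero letter, the current $0$-free columns for each zero in between, and nothing for the last nonzero letter and the trailing zeros. Since regular languages are closed under reversal and rational transductions, $\rep(P_v(X))$ is regular, so $X$ is weakly $1$-codable, and with finite coding dimension it is $1$-codable.

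For the converse, assume $X$ is $1$-codable. Finite coding dimension $d$ forces $\sum \infw{y} \le d$, hence at most $d$ nonzero entries, for every $\infw{y} \in X$, so every configuration has finite support; consequently every periodic point of $X$ equals $0^\omega$, and $0^\omega \in X$ (shift any configuration of the nonempty $X$ far enough), giving a unique periodic point. By \autoref{lem:OmegaRegularSofic} it remains to show $X$ is $\omega$-regular, and since $X = L\{0\}^\omega$ it suffices that $L$ is regular. Now $L = \bigcup_v L_v$ over the finitely many $v \in \{1,\ldots,r\}^{\le d}$, where $L_v = \{w : e(w) = v,\ s(w) \in P_v(X)\}$. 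By \autoref{lem:equivalence} and weak $1$-codability, $\rep(P_v(X))$ is regular, hence so is its reversal; using the transducer from the previous paragraph, an automaton can read $w$, build the reversed representation of $s(w)$ on the fly, and feed it to a DFA for that reversal, which shows $L_v$ is regular. Thus $L$ is regular, $X = L\{0\}^\omega$ is $\omega$-regular, and $X$ is sofic.

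The step I expect to be the main obstacle is making the correspondence between a word $w$ and the (left-)padded unary representation of its support $s(w)$ genuinely rational: one must verify that, read in the correct direction, this representation decomposes into blocks whose lengths are exactly the runs of zeros of $w$, with the nonzero letters of $w$ absorbing the unit shifts coming from each track switch, so that a single finite automaton can translate between the two encodings. Everything else is routine bookkeeping with closure properties of regular and $\omega$-regular languages together with the cited lemmas.
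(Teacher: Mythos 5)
Your proof is correct and takes essentially the same route as the paper's: normalize the zero symbol, obtain finite coding dimension from \autoref{lem:UniquePeriodicPoint}, pass between soficity and $\omega$-regularity via \autoref{lem:OmegaRegularSofic}, and translate between configurations $0^{n_0} v_0 0^{n_1-n_0-1} v_1 \dotsm v_{\ell-1} 0^\omega$ and their padded unary codings by a finite-state transduction combined with reversal. The only cosmetic difference is that you route the $\omega$-regularity through the finite-word language $L$ with $X = L\,0^\omega$ instead of arguing directly that each $Q_v(X)$ is $\omega$-regular, and the transducer step you single out as the main obstacle is precisely the (bijective) transduction the paper's proof uses.
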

\begin{proof}
  Let $X$ be a subshift. We may assume that $X \subseteq \N^\N$ and $X$ has zero symbol $0$. Say $X$ is
  $1$-codable. Since $X$ has finite coding dimension, it must contain a unique periodic point. By $1$-codability, the
  sets $P_v(X)$ are $1$-recognizable for all $v \in \N_{>0}^*$. Let $v$ in $\N_{>0}^*$ be fixed and write
  $v = v_0 \dotsm v_{\ell-1}$ with $\ell = |v|$. We have a finite automaton accepting the tuples
  \begin{equation*}
    (\#^{\ell-1-n_0} 1^{n_0}, \#^{\ell-1-n_1} 1^{n_1}, \ldots, 1^{\ell-1})
  \end{equation*}
  such that $0^{n_0} v_0 0^{n_1-n_0-1} v_1 \dotsm v_{\ell-1} 0^\omega \in Q_v(X)$ (recall that
  $Q_v(X) = e^{-1}(v) \cap X$). Because regular languages are closed under reversal, we also have an automaton
  $\mathcal{A}$ accepting the reversals
  \begin{equation*}
    (1^{n_0} \#^{\ell-1-n_0}, 1^{n_1} \#^{\ell-1-n_1}, \ldots, 1^{\ell-1})
  \end{equation*}
  of such word tuples. Clearly there is a bijective transducer that reduces $\omega$-words of the form
  $0^{n_0} v_0 0^{n_1-n_0-1} v_1 \dotsm v_{\ell-1} 0^\omega$ to words of the form
  $(1^{n_0} \#^\omega, 1^{n_1} \#^\omega, \ldots, 1^{\ell-1} \#^\omega)$, and the automaton $\mathcal{A}$ yields an
  automaton accepting these $\omega$-words. It follows that $Q_v(X)$ is $\omega$-regular. Then $X$ is also
  $\omega$-regular as a finite union of such $Q_v(X)$. Since $X$ is a subshift, it is sofic by
  \autoref{lem:OmegaRegularSofic}.

  The other direction is exactly analogous. If $X$ is sofic, it is $\omega$-regular, and from its automaton we obtain
  another automaton accepting finite words $w = 0^{n_0} v_0 0^{n_1-n_0-1} v_1 \dotsm v_{\ell-1}$ such that
  $w 0^\omega \in Q_v(X)$ by additionally keeping track of the word formed by the nonzero symbols. After a
  transduction, we can accept the codings $(1^{n_0} \#^\omega, 1^{n_1} \#^\omega, \ldots, 1^{\ell-1} \#^\omega)$ of
  such words, and after a reversal we see that $X$ is weakly $1$-codable. It follows from
  \autoref{lem:UniquePeriodicPoint} that $X$ has finite coding dimension. Therefore $X$ is $1$-codable.
\end{proof}

Of course, the proof really shows that a set of $1$-codable if and only if it is $\omega$-regular and has finite coding
dimension, but our interest in this paper is in closed sets.

\begin{theorem}\label{thm:CountableSoficFCD}
  If $X$ is a countable sofic shift, then $W(X)$ is $1$-codable. More generally, $W(X)$ is weakly $1$-codable for every
  sofic shift $X$.
\end{theorem}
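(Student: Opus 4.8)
The idea is to reduce everything to \autoref{prop:WRegular} (soficity is transferred to winning shifts) together with the characterization of $1$-codability in \autoref{prop:UPP1Codable} and \autoref{lem:UniquePeriodicPoint}. Throughout, $W(X)$ is hereditary, so if $X\neq\emptyset$ then $0^\omega\in W(X)$; hence $0$ is a genuine zero symbol for $W(X)$ and we may apply the machinery of \autoref{sec:s_codable} to $W(X)$ with the identity relabeling $\pi$. (The case $X=\emptyset$ is trivial, as then $W(X)=\emptyset$.) By \autoref{prop:WRegular}, $W(X)$ is a sofic shift whenever $X$ is.

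\textbf{The general statement.} I would observe that the reverse direction of the proof of \autoref{prop:UPP1Codable} establishes, without using the unique periodic point, the implication ``sofic subshift with zero symbol $0$ $\Rightarrow$ weakly $1$-codable'': from a B\"uchi automaton for $W(X)$ one builds an automaton that additionally records the word of nonzero symbols seen so far, accepting the words $0^{n_0}v_0 0^{n_1-n_0-1}v_1\dotsm v_{\ell-1}$ with $w0^\omega\in Q_v(W(X))$, and a transduction plus a reversal then yields that every $P_v(W(X))$ is $1$-recognizable. By \autoref{lem:equivalence} this means $W(X)$ is weakly $1$-codable. Since $W(X)$ is sofic by \autoref{prop:WRegular} and has zero symbol $0$, the general statement follows immediately.

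\textbf{The countable case.} By \autoref{prop:UPP1Codable} (applied to the sofic shift $W(X)$), it suffices to show that $W(X)$ has a unique periodic point, which must then be $0^\omega$; \autoref{lem:UniquePeriodicPoint} will in addition give that $W(X)$ has finite coding dimension, as required by the definition of $1$-codability. The key claim is that \emph{no} configuration of $W(X)$ has infinitely many nonzero symbols. Indeed, suppose $c\in W(X)$ has nonzero symbols at positions $n_1<n_2<\cdots$. A winning strategy for Alice on the choice sequence $c$ is a strategy tree in which every node at depth $n_j$ has out-degree at least $2$ while every branch is labelled by a word of $X$. Picking, at each branching level $j$, one of two distinct children, one obtains an injection of $\{0,1\}^\N$ into the set of branches: if two choice functions first differ at index $j$, the corresponding branches reach the same node at depth $n_j$ and then follow children labelled by distinct letters, so the resulting words of $X$ differ at position $n_j$. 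Hence $X$ is uncountable, contradicting the hypothesis. Consequently $W(X)$ has no configuration of infinite sum; in particular its only periodic point is $0^\omega$, and \autoref{prop:UPP1Codable} gives that $W(X)$ is $1$-codable.

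\textbf{Main obstacle.} No step is genuinely deep. The only real content is (i) extracting from the proof of \autoref{prop:UPP1Codable} the clean statement ``sofic with zero symbol $\Rightarrow$ weakly $1$-codable'' and checking the zero-symbol hypothesis for $W(X)$, and (ii) the counting argument above; the point requiring care there is the bookkeeping that distinct choices in a strategy tree produce branches labelled by distinct words of $X$, which is what turns ``infinitely many branchings'' into ``uncountably many points of $X$''.
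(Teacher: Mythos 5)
Your proof is correct, but it takes a genuinely different route from the paper's, most substantially in the countable case. For the general statement the paper does not re-open the proof of \autoref{prop:UPP1Codable} as you do: it restricts $W(X)$ (sofic by \autoref{prop:WRegular}) to the points of sum at most $d$, observes that each such restriction is a sofic shift whose unique periodic point is $0^\omega$, and applies \autoref{prop:UPP1Codable} as a black box; your extraction of ``sofic $\Rightarrow$ weakly $1$-codable'' from the reverse direction of that proof is nevertheless legitimate, and the paper itself notes afterwards that the proof really shows $1$-codability is equivalent to being $\omega$-regular with finite coding dimension. In the countable case the paper argues via entropy: a sofic shift is countable if and only if it has zero entropy, $W(X)$ is a hereditary sofic shift of zero entropy (quoting the complexity/entropy preservation result from the 2014 paper), and a periodic point other than $0^\omega$ in a hereditary shift would force positive entropy by downgrading its nonzero letters independently. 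You instead show directly that a configuration of $W(X)$ with infinite support gives an infinitely branching strategy tree, hence an injection of $\{0,1\}^{\N}$ into $X$ (branches that first split at a branching depth $n_j$ carry distinct letters there), contradicting countability; this argument is more elementary and self-contained, needs neither the entropy-preservation citation nor the equivalence ``countable sofic $\iff$ zero entropy'', and in fact uses no soficity at all for the unique-periodic-point step, while the paper's version is shorter on the page because it leans on previously established entropy facts. Both routes then obtain finite coding dimension and $1$-codability from \autoref{lem:UniquePeriodicPoint} and \autoref{prop:UPP1Codable}, so the endpoints agree.
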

\begin{proof}
  We may again assume that $X \subseteq \N^\N$ and $X$ has zero symbol $0$. A sofic shift is countable if and only if
  it has zero entropy. Thus, when $X$ is a countable sofic shift, $W(X)$ is a hereditary sofic shift with zero entropy
  \cite[Prop.~5.7]{2014:playing_with_subshifts}. Say $W(X)$ has a periodic point other than $0^\omega$. Suppose that
  the word corresponding to the minimum period $p$ of this periodic point contains $\ell$ occurrences of nonzero
  letters. By downgrading each of these nonzero letters independently, we find that there are at least $2^{k\ell}$
  factors of length $kp$ in words of $X$ for all $k$. It follows that $W(X)$ has positive entropy; a contradiction.
  Thus the only periodic point of $W(X)$ is $0^\omega$. It follows from \autoref{lem:UniquePeriodicPoint} that $X$ has
  finite coding dimension.

  For a general sofic shift $X$, the winning shift $W(X)$ is sofic by \autoref{prop:WRegular}. To see that $W(X)$ is
  weakly $1$-codable, it is enough to show that its restriction to points with sum at most $d$ is $1$-codable for all
  $d$. This restriction is clearly a sofic shift with a unique periodic point, and \autoref{prop:UPP1Codable} applies.
\end{proof}

By the results of \cite{2012:multidimensional_sets_recognizable_in_all_abstract_numeration}, this means that the
winning shift $W(X)$ of a countable sofic shift $X$ is $\ans{S}$-codable for every regular ANS $\ans{S}$ with radix
order.

\section{Robustness}\label{sec:Robust}
\autoref{thm:main} is only interesting if one agrees that the class of $\ans{S}$-codable subshifts is a natural one. In
this section, we prove some closure properties for the class of (weakly) $\ans{S}$-codable sets for addable $\ans{S}$,
which imply that this class is relatively robust. First, while $\ans{S}$-codability is defined combinatorially, it is a
property of the abstract dynamical system in the sense that it is preserved under conjugacy.

\begin{proposition}\label{prop:Conjugacy}
  Let $\ans{S}$ be an addable ANS, and consider a (weakly) $\ans{S}$-codable subshift $X$ such that
  $X \subseteq \N^\N$. If $Y$ is conjugate to $X$ and the conjugating map maps the zero point associated with $Y$ to
  the zero point $0^\omega$, then $Y$ is (weakly) $\ans{S}$-codable.
\end{proposition}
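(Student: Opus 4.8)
The plan is to invoke the Curtis--Hedlund--Lyndon theorem to present the conjugacy and its inverse as sliding block codes, use the hypothesis on the zero point to see that these block codes move finitely supported configurations only within a bounded window, and then convert membership in $Y$ into an $\ans{S}$-recognizable condition through \autoref{lem:equivalence}.

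First I would fix notation. After applying the relabelings from \autoref{def:codable_general}, we may assume $X, Y \subseteq \N^\N$ with zero symbol $0$ for both, and that the conjugacy $\phi\colon Y \to X$ satisfies $\phi(0^\omega) = 0^\omega$ (when $X$ is $\ans{S}$-codable this is automatic, as $0^\omega$ is then the unique shift-fixed point of $X$; when $X$ is only weakly $\ans{S}$-codable it is exactly the hypothesis). By the Curtis--Hedlund--Lyndon theorem both $\phi$ and $\phi^{-1}$ are sliding block codes, and after enlarging radii we may take a common radius $r$, so that $\phi(\infw{y})[i]$ depends only on $\infw{y}[i, i+r]$ and $\phi^{-1}(\infw{x})[i]$ only on $\infw{x}[i, i+r]$. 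Extend the two local rules arbitrarily to block maps $\hat\phi$, $\hat g$ on the full shifts over the (finite) alphabets. Since the window $0^{r+1}$ occurs in both $X$ and $Y$ and its images are read off from $\phi(0^\omega) = 0^\omega = \phi^{-1}(0^\omega)$, the maps $\hat\phi$ and $\hat g$ send $0^{r+1}$ to $0$; hence $\supp(\hat\phi(\infw{y})) \subseteq \bigcup_{j\in\supp(\infw{y})}[j-r,j]$ and $\supp(\hat g(\infw{x})) \subseteq \bigcup_{j\in\supp(\infw{x})}[j-r,j]$.

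This locality handles the coding dimension at once: if $X$ has coding dimension $d$, then every configuration in $X$ has at most $d$ nonzero entries, so every configuration in $Y$ has at most $(r+1)d$ nonzero entries, and since the alphabet of $Y$ is finite this bounds the sum uniformly; thus $Y$ has finite coding dimension. For the weak codability part, by \autoref{lem:equivalence} it is enough to show that $P_v(Y)$ is $\ans{S}$-recognizable for each $v = v_1 \dotsm v_\ell \in \N_{>0}^*$. Given $(m_1 < \dotsm < m_\ell)$, let $\infw{y}$ be the configuration with $\infw{y}[m_t] = v_t$ for each $t$ and zeros elsewhere; then $(m_1, \dotsm, m_\ell) \in P_v(Y)$ if and only if $\infw{y} \in Y$, which in turn holds if and only if $\infw{x}^\ast := \hat\phi(\infw{y}) \in X$ and $\hat g(\infw{x}^\ast) = \infw{y}$ (the nonobvious implication uses that $\hat g$ agrees with $\phi^{-1}$ on $X$, so $\hat g(\infw{x}^\ast) = \phi^{-1}(\infw{x}^\ast) \in Y$). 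Now $\infw{x}^\ast$ has support inside $\bigcup_t [m_t - r, m_t]$ and bounded sum, and the support positions and nonzero values of both $\infw{x}^\ast$ and $\hat g(\infw{x}^\ast)$ depend on $(m_1, \dotsm, m_\ell)$ only through finitely many patterns: which of the gaps $m_{t+1} - m_t$ exceed $2r+1$, together with the exact values of the short gaps; within each pattern these positions are fixed integer translates of the $m_t$. Hence ``$\infw{x}^\ast \in X$'' unfolds into a finite disjunction over patterns of statements ``$s(\infw{x}^\ast) \in P_{v'}(X)$'' with $v'$ and $s(\infw{x}^\ast)$ determined by the pattern (each such $P_{v'}(X)$ is $\ans{S}$-recognizable because $X$ is weakly $\ans{S}$-codable), and ``$\hat g(\infw{x}^\ast) = \infw{y}$'' unfolds into a pattern-wise system of equalities among translates of the $m_t$. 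As $\ans{S}$ is addable, comparison, addition, and translation by constants are $\ans{S}$-recognizable and the class of $\ans{S}$-recognizable relations is closed under first-order operations (recalled after \autoref{thm:formula}); therefore $P_v(Y)$ is $\ans{S}$-recognizable and $Y$ is weakly $\ans{S}$-codable.

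The hard part is the bookkeeping in the last step: one must verify that the composite effect of $\hat g \circ \hat\phi$ on a finitely supported configuration of $Y$ is controlled by finitely many local interaction patterns near the clusters of nonzero entries, and that each pattern is an $\ans{S}$-recognizable constraint. This becomes manageable once one observes that for a fixed $v$ (more generally, among configurations with sum at most $k$) the number of nonzero entries is bounded, so clusters have bounded diameter and only finitely many patterns occur. The hypothesis on the zero point is used precisely here: it is what forces supports to stay finite and to grow only within radius $r$ under the two block codes, without which finitely supported configurations of $Y$ need not correspond to finitely supported configurations of $X$ at all.
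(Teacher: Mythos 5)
Your argument is correct, but it follows a genuinely different route from the paper. The paper does not argue directly with the conjugacy at all: it first proves an abstract transfer result (\autoref{prp:abstract}) stating that if $R$ is a sofic relation with $Y = XR$ and the induced relation $R^\Sigma$ on sums is locally finite, then weak $\ans{S}$-codability passes from $X$ to $Y$; the conjugacy case is then a two-line application, taking $R$ to be (the block-code graph of) the conjugacy and checking local finiteness via a biradius $r$ exactly as in your coding-dimension estimate, with the finite-coding-dimension transfer handled by the same "bounded sums are preserved" observation you make. All the automaton work (running the automaton for $R$ across the long blocks of zeros between support positions, using addability to track states modulo loop lengths) lives in \autoref{prp:abstract}, which the paper reuses for \autoref{prop:HereditaryClosure} and \autoref{prop:CountableSofic} as well. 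You instead give a self-contained proof: Curtis--Hedlund--Lyndon plus the zero-point hypothesis gives support locality for both block codes, the equivalence $\infw{y} \in Y \iff \hat\phi(\infw{y}) \in X \wedge \hat g(\hat\phi(\infw{y})) = \infw{y}$ reduces membership in $Y$ to membership in $X$ plus a local consistency check, and a finite case analysis over gap patterns turns $P_v(Y)$ into a first-order combination of the recognizable sets $P_{v'}(X)$, translations by constants and comparisons, which is recognizable by the closure properties recalled after \autoref{thm:formula}. What the paper's route buys is reusability of a single technical proposition across several robustness results; what yours buys is a direct argument that makes the role of the zero-point hypothesis transparent and avoids the run-over-zeros automaton construction, at the cost of the pattern bookkeeping. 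The only point you should tighten is that bookkeeping near the origin: when $m_1 \leq 2r$, the clusters' images are truncated at position $0$, so the "pattern" data must also include the exact value of $m_1$ in that bounded range (finitely many extra cases, so nothing breaks); with that amendment the claim that supports of $\hat\phi(\infw{y})$ and $\hat g(\hat\phi(\infw{y}))$ are fixed translates of the $m_t$ within each pattern is accurate.
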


Our main examples of $\ans{S}$-codable subshifts are winning shifts, which are always hereditary, so the following
seems natural to prove.

\begin{proposition}\label{prop:HereditaryClosure}
  Let $\ans{S}$ be an addable ANS and $Y$ be a subset of $\N^\N$. If $Y$ is $\ans{S}$-codable, then the hereditary
  closure of $Y$ is $\ans{S}$-codable.
\end{proposition}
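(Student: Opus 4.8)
The plan is to reduce the statement to a manipulation of the sets $P_v$ at the level of regular languages, using the characterization of (weak) $\ans{S}$-codability given in \autoref{lem:equivalence} together with the closure of $\ans{S}$-recognizability under the first-order operations available because $\ans{S}$ is addable. Write $\overline{Y}$ for the hereditary closure of $Y$. First I would observe the combinatorial description of $\overline{Y}$: an element $\infw{z} \in \N^\N$ lies in $\overline{Y}$ if and only if there exists $\infw{y} \in Y$ with $\infw{z} \leq \infw{y}$ coordinatewise. (One should check that the coordinatewise-downward closure of a subshift is again a subshift, but this is exactly the hereditary-closure construction already recalled in the Preliminaries, so I would simply cite it.) In particular $\sum \infw{z} \leq \sum \infw{y}$, so $\overline{Y}$ has the same coding dimension as $Y$; this immediately reduces the $\ans{S}$-codable case to the weakly $\ans{S}$-codable case, since finiteness of the coding dimension is preserved.

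For weak $\ans{S}$-codability, by \autoref{lem:equivalence} it suffices to show that $P_v(\overline{Y})$ is $\ans{S}$-recognizable for every $v \in \N_{>0}^*$. The key step is to express $P_v(\overline{Y})$ in terms of the sets $P_w(Y)$ for words $w$ that "dominate" $v$ in the appropriate sense. Concretely, fix $v = v_0 \dotsm v_{\ell-1}$ and a tuple $(n_0 < \dotsm < n_{\ell-1})$; this tuple lies in $P_v(\overline{Y})$ iff there is $\infw{y} \in Y$ such that, at each position $n_j$, the letter of $\infw{y}$ is $\geq v_j$, and at all other positions the letter of $\infw{y}$ is arbitrary (it will simply be downgraded to $0$). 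Writing $e(\infw{y}) = w$, this means $w$ is obtained by reading off the nonzero letters of $\infw{y}$ at some increasing sequence of positions that \emph{includes} $n_0, \dotsm, n_{\ell-1}$, with the letter of $w$ sitting at position $n_j$ being $\geq v_j$. Since $\sum \infw{y} \leq k$ for some bound $k$ (finitely many possibilities for $w$ once the coding dimension is known, but in the weak case one works with one value of $k$ at a time as in \autoref{lem:equivalence}), there are only finitely many such $w$, and for each one $P_w(Y)$ is $\ans{S}$-recognizable by hypothesis. Then $P_v(\overline{Y})$ is obtained from the finite union, over all admissible $w$ and all ways of choosing which coordinates of a tuple in $P_w(Y)$ are the distinguished positions $n_0, \dotsm, n_{\ell-1}$, of the corresponding projections and selections; these are exactly the "project to some coordinates / impose equalities and inequalities" operations used in the proof of \autoref{lem:equivalence} and \autoref{cor:words}, all of which preserve $\ans{S}$-recognizability because $\ans{S}$ is addable (hence comparable, so $\prec$ is rational). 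Taking the finite union completes the argument.

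The main obstacle I anticipate is purely bookkeeping: setting up the correspondence between $P_v(\overline{Y})$ and the various $P_w(Y)$ precisely — in particular enumerating the admissible dominating words $w$ and, for each, specifying the (finitely many) monotone injections $\{0,\dotsc,\ell-1\} \hookrightarrow \{0,\dotsc,|w|-1\}$ that tell which coordinates of a tuple in $P_w(Y)$ play the role of the distinguished positions, and which inequalities $w_{\sigma(j)} \geq v_j$ must hold on the letters. Once that correspondence is written down, the $\ans{S}$-recognizability is automatic from the closure properties. I would keep the exposition light here, noting that in the hereditary setting one never needs to "add" nonzero letters, only to delete or decrease them, so the passage from $Y$ to $\overline{Y}$ only ever \emph{merges and forgets} coordinates — which is why projections and a finite union suffice and no genuinely new automaton construction is required.
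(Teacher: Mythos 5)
Your proof is correct, but it takes a genuinely different route from the paper. The paper deduces \autoref{prop:HereditaryClosure} from the abstract \autoref{prp:abstract}: it takes $R$ to be the coordinatewise domination relation $\{(\infw{x}, \infw{y}) : x_i \geq y_i \text{ for all } i\}$, observes that this is a sofic shift, notes that $R \cap (Y \times \tilde{Y})$ is locally finite because $Y$ has finite coding dimension, and then lets the general machinery of \autoref{prp:abstract} (simulating an $\omega$-automaton for $R$ across blocks of zeros, which is where addability is really used) do all the work. You instead argue directly at the level of the sets $P_v$: a tuple $(n_0 < \dotsm < n_{\ell-1})$ lies in $P_v(\tilde{Y})$ if and only if some $\infw{y} \in Y$ has letters $\geq v_j$ at the positions $n_j$, so $P_v(\tilde{Y})$ is a finite union, over the finitely many $w$ with $P_w(Y) \neq \emptyset$ (finiteness coming from the coding dimension) and the finitely many monotone embeddings $\sigma$ with $w_{\sigma(j)} \geq v_j$, of the coordinate projections $\pi_\sigma(P_w(Y))$; the finite coding dimension of $\tilde{Y}$ is inherited from $Y$. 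This is valid and in fact more elementary: it needs only closure of regular languages under projection and finite union (plus routine re-padding), so it does not actually use addability at all --- the inequalities $w_{\sigma(j)} \geq v_j$ compare fixed letters and are checked offline, not by the automaton --- and it applies verbatim to arbitrary subsets $Y \subseteq \N^\N$, exactly matching the statement. What the paper's route buys is reusability: \autoref{prp:abstract} handles arbitrary locally finite sofic relations and also yields \autoref{prop:Conjugacy} and \autoref{prop:CountableSofic}, whereas your argument exploits the special feature, which you correctly identify, that domination never creates or moves nonzero letters but only deletes or decreases them. One small caveat: your parenthetical suggesting that in the merely weakly codable case one could work with one value of $k$ at a time is misleading, since there the union over dominating words $w$ is genuinely infinite (and the paper shows the proposition fails under weak codability); this does not affect your proof of the statement as given, because you invoke finiteness only under the codability hypothesis.
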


Finally, our class contains a subclass of sofic shifts generalizing one direction of \autoref{prop:UPP1Codable}.

\begin{proposition}\label{prop:CountableSofic}
  Let $\ans{S}$ be an addable ANS. Every sofic shift with a unique periodic point is $\ans{S}$-codable. More generally,
  every sofic shift is weakly $\ans{S}$-codable with respect to any zero symbol.
\end{proposition}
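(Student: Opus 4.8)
The plan is to combine \autoref{prop:UPP1Codable} with two facts: first, that every $1$-recognizable subset of $\N^d$ is Presburger-definable, i.e., definable by a first-order formula over $(\N,+,<)$; and second, that every Presburger-definable set is $\ans{S}$-recognizable whenever $\ans{S}$ is addable. Granting these, the first assertion follows at once: if $X$ is a sofic shift with a unique periodic point, then \autoref{prop:UPP1Codable} gives that $X$ is $1$-codable, so each $P_v(X)$ is $1$-recognizable, hence Presburger-definable, hence $\ans{S}$-recognizable; moreover $X$ has finite coding dimension by \autoref{lem:UniquePeriodicPoint}, so $P_v(X)$ is nonempty for only finitely many $v\in\N_{>0}^*$. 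By \autoref{lem:equivalence} and the observation following it, $X$ is $\ans{S}$-codable.

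The second fact is the abstract-numeration-system analogue of the B\"uchi--Bruy\`ere theorem already alluded to after \autoref{thm:formula}. Addability yields $\ans{S}$-recognizability of the graph of $+$ and, via \autoref{lem:addable_implications}, of $\leq$; every singleton $\{c\}$ is trivially $\ans{S}$-recognizable since $\rep[\ans{S}](c)$ is a single word; and regular languages over a Cartesian-product alphabet are closed under boolean operations and projection (existential quantification), while $\rep[\ans{S}](\N^d)$ is regular because $L$ is. Composing these exactly as in the proof of \autoref{cor:words} (and as sketched in \cite[Sect.~7.3]{2022:regular_sequences_and_synchronized_sequences_in_abstract}) shows that any Presburger-definable subset of $\N^d$ is $\ans{S}$-recognizable.

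For the first fact, I would analyse a DFA $\mathcal{A}$ accepting the padded unary encodings of a $1$-recognizable set $Z\subseteq\N^d$. Fixing one of the finitely many weak orderings of the coordinates, the unary encoding of a tuple $(n_1,\dots,n_d)$ breaks into at most $d$ consecutive blocks, the $j$-th consisting of a single fixed letter of $(\{1,\#\})^d$ repeated $n_{(j)}-n_{(j+1)}$ times, where $n_{(1)}\geq\cdots\geq n_{(d)}\geq n_{(d+1)}:=0$ are the sorted coordinates. Inside a block $\mathcal{A}$ merely iterates a fixed transition, so the block lengths carrying it from one state to another form an ultimately periodic set; hence the set of tuples accepted with this ordering is a finite union (over orderings and over the choices of the states of $\mathcal{A}$ between blocks) of sets cut out by the ordering inequalities together with ultimately periodic constraints on the consecutive differences, and every such set is Presburger-definable. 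Alternatively one may quote the characterization of sets recognizable in all abstract numeration systems from \cite{2012:multidimensional_sets_recognizable_in_all_abstract_numeration}.

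For the general statement, fix any zero symbol and, as in the proof of \autoref{thm:CountableSoficFCD}, put $X_d=\{x\in X:\sum x\leq d\}$. This set is closed and shift-invariant, and it is $\omega$-regular (intersect a B\"uchi automaton for $X$ with one that accumulates the nonzero mass up to $d$ and then demands the zero symbol forever), hence sofic by \autoref{lem:OmegaRegularSofic}; it is either empty, in which case there is nothing to prove, or it has the zero point as its unique periodic point, so by the first part it is $\ans{S}$-codable. Since $e(x)=v$ forces $\sum x=\sum v$, we have $P_v(X)=P_v(X_{\sum v})$ for every $v\in\N_{>0}^*$, so each $P_v(X)$ is $\ans{S}$-recognizable, and \autoref{lem:equivalence} gives that $X$ is weakly $\ans{S}$-codable. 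The main obstacle is the first fact: one must argue carefully that an automaton reading the highly constrained unary padded words can only count modularly within each block, so that the accepted set is semilinear; the rest is bookkeeping with standard closure properties and results already established in the paper.
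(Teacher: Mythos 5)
Your argument is correct, but it takes a genuinely different route from the paper. The paper deduces this proposition from the abstract transfer result \autoref{prp:abstract}: it takes the sofic relation $R=\{0^\omega\}\times\pi(Y)$, notes that finite coding dimension (\autoref{lem:UniquePeriodicPoint}) makes $R$ locally finite, and lets the automaton-simulation machinery of \autoref{prp:abstract} (which uses addability only to track the length of zero-blocks modulo a constant) produce weak $\ans{S}$-codability; the bounded-sum restriction trick for the general case is the same in both proofs. You instead factor through unary recognizability: \autoref{prop:UPP1Codable} gives $1$-codability, your block decomposition of padded unary tuples shows every $1$-recognizable subset of $\N^d$ is semilinear (the block letter is constant, so the automaton can only act ultimately periodically in each block length, and the sorted-coordinate differences are the block lengths), and the B\"uchi--Bruy\`ere-type closure for addable ANS already invoked after \autoref{thm:formula} converts Presburger definability into $\ans{S}$-recognizability; \autoref{lem:equivalence} and \autoref{lem:UniquePeriodicPoint} then finish both claims exactly as in the paper. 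Your route is close in spirit to the remark after \autoref{thm:CountableSoficFCD}, but it is stronger than quoting \cite{2012:multidimensional_sets_recognizable_in_all_abstract_numeration}, since addable ANS in this paper need not use the radix order; the price is that you must supply the semilinearity lemma yourself (it is the one ingredient not in the paper, and your block argument for it is sound, including the treatment of ties via weak orderings and of the intermediate states between blocks) and that you consume full addability in the Presburger-to-$\ans{S}$ step. The paper's approach buys more: \autoref{prp:abstract} simultaneously yields the conjugacy and hereditary-closure robustness results, and, as noted in \autoref{rem:Addability}, it only needs mod-constant computations, which is why the authors expect it to extend to comparable ANS, whereas your derivation is tied to addability. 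Minor quibbles only: in the general case, demanding the zero symbol forever after total mass $d$ is accumulated is unnecessary (the sum bound alone is an $\omega$-regular safety condition), and you should state explicitly, as the paper does, that all of this happens after relabelling the chosen zero symbol to $0$.
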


Note that the latter claim is not as weak as it may seem: if $(Y, \sigma)$ is sofic, so is $(Y, \sigma^n)$ for any
$n \geq 1$ (here $\sigma$ is the shift map), and thus these are also weakly $\ans{S}$-codable. Since eventually
periodic points are dense in a sofic shift, in a sense the weak codability of these subshifts codes all of its points.
To have a more useful statement, one might want to strengthen weak codability with uniformity conditions, but this is
beyond the scope of the present paper.

We deduce the above three propositions from a more abstract \autoref{prp:abstract}.

Let us use infix notation for relations. For sets $X \subseteq A^\N, Y \subseteq B^\N$ and a relation
$R \subseteq A^\N \times B^\N$, define
\begin{equation*}
  xR = \{y \in B^\N : (x,y) \in R\} \subseteq B^\N \quad \text{and} \quad XR = \bigcup_{x \in X} xR.
\end{equation*}
Symmetrically we define $Ry \subseteq A^\N$ for $y \in Y$ and $RY$. To such $R$ we also
associate a relation $R^\Sigma$ in $\N \times \N$ by setting
\begin{equation*}
  m R^\Sigma n \iff \exists (x,y) \in R\colon \sum x = m \wedge \sum y = n.
\end{equation*}
A relation $R' \subseteq \N \times \N$ is \emph{locally finite} if $|mR'| < \infty$ and $|R'n| < \infty$ for all
$m, n \in \N$. We say $R \subseteq A^\N \times B^\N$ is locally finite if $R^\Sigma$ is locally finite.
 
\begin{proposition}\label{prp:abstract}
  Let $X \subseteq A^\N, Y \subseteq B^\N$ be subshifts and $R \subseteq A^\N \times B^\N$ a sofic shift such that
  $R \cap X \times Y$ is locally finite. If $Y = XR$ and $X$ is weakly $\ans{S}$-codable, then $Y$ is weakly
  $\ans{S}$-codable.
\end{proposition}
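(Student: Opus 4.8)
The plan is to reduce the statement, after fixing the zero‑symbol conventions (\autoref{def:codable_general}) and applying \autoref{lem:equivalence}, to showing that $P_v(Y)$ is $\ans{S}$-recognizable for every word $v$ over the nonzero letters of $B$; one then reads this off from a sofic presentation of $R$ combined with weak $\ans{S}$-codability of $X$ and the addability of $\ans{S}$. So fix such a $v$, of length $\ell$, and let $k_v$ be the sum of its letters. Since $R$ is sofic (\autoref{lem:OmegaRegularSofic}), present it by a finite edge-labelled graph $G$; let $G_0$ be the subgraph of edges whose $B$-label is the zero symbol of $B$ and $G_{00}$ the further subgraph of edges labelled by the pair of zero symbols. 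If $\infw{y} \in Y$ has support $(n_1, \ldots, n_\ell)$ with letters prescribed by $v$, then for any witness $(\infw{x}, \infw{y}) \in R$ with $\infw{x} \in X$ a labelling right-infinite path of $G$ runs inside $G_0$ off the $\ell$ edges at positions $n_1, \ldots, n_\ell$; recording the finitely many possible states of the path just before and after these positions turns the path into a concatenation of $G_0$-segments of prescribed lengths, $\ell$ distinguished edges, and a terminal right-infinite $G_0$-tail. The only genuinely global constraint is that the concatenated first coordinate lie in $X$.

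I would first treat the tuples $(n_1, \ldots, n_\ell) \in P_v(Y)$ that admit a witness $\infw{x}$ of finite support. By local finiteness of $R \cap X \times Y$ there is a bound $M$, depending only on $v$, with $\sum \infw{x} \le M$; hence the support of $\infw{x}$ consists of at most $M$ positions, and $\nu(\infw{x})$ ranges over the $\ans{S}$-recognizable set $\{\nu(\infw{x}') : \infw{x}' \in X,\ \sum \infw{x}' \le M\}$ supplied by weak $\ans{S}$-codability of $X$. Between two consecutive ``features'' (a nonzero position of $\infw{x}$, a nonzero position of $\infw{y}$, or the origin) the constraint is only that the relevant subgraph ($G_{00}$, or $G_0$, or an adjacent single edge) admit a path of the intervening length between two prescribed states; the set of admissible lengths is ultimately periodic, hence $\ans{S}$-recognizable because $\ans{S}$ is addable (so that addition, and therefore every ultimately periodic set, is $\ans{S}$-recognizable; cf.\ \autoref{lem:addable_implications}). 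One can therefore describe $\{(n_1, \ldots, n_\ell) \in P_v(Y) : \text{some witness has finite support}\}$ by a first-order formula over $\ans{S}$-recognizable predicates, existentially quantifying the bounded data—the $\le M$ positions and letters of $\infw{x}$, the recorded states, and the segment lengths—and the B\"uchi--Bruy\`ere-type theorem for addable ANS discussed after \autoref{thm:formula} then makes it $\ans{S}$-recognizable.

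The main obstacle is the complementary case, where every witness $\infw{x}$ has infinitely many nonzero letters; the extra support then lies wholly in the terminal $G_0$-tail, about which weak $\ans{S}$-codability of $X$ says nothing directly. Here the plan is a compactness (K\"onig's lemma) argument: writing $\infw{y} = u0^\omega$ with $0$ the zero symbol of $B$, membership $\infw{y} \in Y$ is equivalent to the assertion that for every length $L$ there is a word of $\Lang[X]{}$ that is simultaneously a first-coordinate label of a path of $G$ reading $u$ into some state $q$ and then continuing, for a further $L-|u|$ steps, along $G_0$ from $q$. For each $L$ this is a regular condition on $u$, so it combines with the $\ans{S}$-recognizable predicates that weak $\ans{S}$-codability of $X$ yields for the prefixes with few nonzero letters, and with the ultimately periodic path-length predicates of $G_0$, to give $\ans{S}$-recognizable constraints on $(n_1,\ldots,n_\ell)$; the delicate point—where most of the work lies—is to show that the ``first coordinate in $X$'' requirement really does decouple into such pieces once the witness support is unbounded, using that $X$ is a subshift and $G_0$ finite so that only the eventual behaviour of $\infw{x}$ along $G_0$-tails matters (the points $\infw{y}$ so obtained staying, in any case, inside the sofic $B$-coordinate projection of $R$, which is weakly $\ans{S}$-codable by \autoref{thm:CountableSoficFCD} since an addable ANS is regular). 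Finally $P_v(Y)$ is the union of the $\ans{S}$-recognizable sets obtained in the two cases together with a finite disjunction over the recorded states, hence $\ans{S}$-recognizable; as $v$ was arbitrary, \autoref{lem:equivalence} gives that $Y$ is weakly $\ans{S}$-codable.
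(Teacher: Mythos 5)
Your first case is, in substance, the paper's own proof. The paper likewise fixes $v$, invokes local finiteness so that the relevant first coordinates $\infw{x}$ have $e(\infw{x})=u$ for only finitely many patterns $u$, merges and sorts the two supports, converts positions to successive differences using addability, and recognizes ``$(\infw{x},\infw{y})\in R$'' by simulating the automaton for the sofic shift $R$ across the zero blocks with a modulo-$K$ lookup, while ``$\infw{x}\in X$'' is a condition on the support tuple alone, handled through weak $\ans{S}$-codability of $X$ via \autoref{lem:equivalence}. Your phrasing of this via first-order formulas over $\ans{S}$-recognizable predicates instead of explicit transducers is an equivalent packaging, as the discussion after \autoref{thm:formula} already notes.

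The genuine gap is your second case. You explicitly leave unproved the step where, in your own words, most of the work lies (decoupling the requirement $\infw{x}\in X$ when every witness has infinite support), and the tools you cite cannot supply it: weak $\ans{S}$-codability of $X$ constrains only the configurations of $X$ with finite sum and says nothing about points with infinitely many nonzero letters, so no amount of K\"onig's-lemma bookkeeping along $G_0$-tails turns ``$\infw{x}\in X$'' into $\ans{S}$-recognizable predicates on $(n_1,\ldots,n_\ell)$; moreover the parenthetical appeal to \autoref{thm:CountableSoficFCD} is a non sequitur (it concerns weak $1$-codability of a sofic projection, not of $Y=XR$, and its $\ans{S}$-analogue, \autoref{prop:CountableSofic}, is itself deduced from \autoref{prp:abstract}, so invoking it here would be circular). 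The paper never meets this case at all: it uses local finiteness to write $Q_v(Y)$ as a finite union of sets $Q_u(X)R\cap Q_v(B^\N)$ with $u\in A^*$, i.e.\ it works throughout with witnesses of bounded finite support, and its entire argument lives inside your first case. You have in fact put your finger on a point the paper treats lightly (whether an $\infw{y}\in Q_v(Y)$ must admit a finite-support witness, which the stated definition of $R^\Sigma$ does not literally force), but your proposal neither resolves it nor adopts the reading that makes it vacuous; as written it is incomplete exactly where you flag it, so either justify that finite-support witnesses can always be chosen (as holds in all the paper's applications) or restrict to that reading and delete the second case.
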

\begin{proof}
  We may assume that $A, B \subseteq \N$ and $0$ is the zero symbol in both $X$ and $Y$. It suffices to show that
  $P_v(Y)$ is $\ans{S}$-recognizable for each $v \in \N_{>0}^*$. Fix $v \in \N_{>0}^*$. By the assumption of local
  finiteness, $Q_v(Y)$ is a union of finitely many sets of the form $Q_u(X) R \cap Q_v(B^\N)$ where $u \in A^*$, so it
  suffices to show that each $Q_u(X) R \cap Q_v(B^\N)$ is $\ans{S}$-codable. Since
  $P_v(Q_u(X) R \cap Q_v(B^\N)) = P_v(Q_u(X) R)$, it suffices to prove that $P_v(Q_u(X) R)$ is $\ans{S}$-recognizable
  for all $u \in A^*$.

  Suppose $|u| = k, |v| = \ell$, and define
  \begin{align*}
    T = \{(m_1, \ldots, m_k, n_1, \ldots, n_\ell) : \exists \infw{x} \in X, \infw{y} \in Y\colon e(\infw{x}) &= u, e(\infw{y}) = v, \\
                                                                                                 s(\infw{x}) &= (m_1,\ldots,m_k), s(\infw{y}) = (n_1,\ldots,n_\ell) \}.
  \end{align*}
  Clearly it suffices to show that $T$ is $\ans{S}$-recognizable since $P_v(Q_u(X) R)$ is just the projection of $T$ to
  the last $\ell$ coordinates.

  To accept $\rep(T)$, we fix an ordering for the components. Given an element $(b_1, \ldots, b_{k+\ell})$ of $T$,
  there exists a permutation $P$ that transforms this element to $(c_1, \ldots, c_{k+\ell})$ that $c_i = b_{P(i)}$ for
  all $i$ and $c_1 \leq \ldots \leq c_{k+\ell}$. Moreover, if a component $b_i$ of $(b_1, \ldots b_k)$ equals a
  component $b_j$ of $(b_{k+1}, \ldots, b_{k+\ell})$, we may require that $P^{-1}(i) < P^{-1}(j)$. This makes $P$
  unique since by definition $m_1 < \ldots < m_k$ and $n_1 < \ldots < n_\ell$. Let $T_P$ the set of elements of $T$
  with a common permutation $P$. The set $T_P$ is $\ans{S}$-recognizable if and only if $P(T_P)$ is
  $\ans{S}$-recognizable as a simple transducer can reorder components when $P$ is fixed. As $T$ is a union of the sets
  $T_P$ and there are only finitely many permutations $P$, it suffices to prove that $P(T_P)$ is $\ans{S}$-recognizable
  for a fixed $P$.

  Define $L = \rep(P(T_P))$ so that $L$ has elements $(w_1, \ldots, w_{k+\ell})^\#$ with $\val(w_i) = c_i$ for all $i$.
  Let us perform a transduction to transform $(w_1, \ldots, w_{k+\ell})^\#$ to $(w_1, t_2, \ldots, t_{k+\ell})^\#$
  where $\val(w_i) + \val(t_{i+1}) = \val(w_{i+1})$ for all $i$. Such a transduction is possible because $\ans{S}$ is
  addable. Since regular languages are closed under inverse transductions, the language $L$ is regular if and only if
  its transduction $L'$ is regular.

  The tuple $(w_1, t_2, ..., t_{k + \ell})^\#$ in $L'$ represents the support of an infinite word over the alphabet
  $A \times B$, and the first nonzero symbol appears at position $\val(w_1)$, the second at $\val(w_1)+\val(t_2)$, the
  third at $\val(w_1)+\val(t_2)+\val(t_3)$, and so on. The permutation $P$ tells us which of these positions have a
  symbol on the first or second component of the alphabet $A \times B$. An important interpretation detail is that
  elements of $(A \setminus \{0\}) \times (B \setminus \{0\})$ are represented by having $t_{i+1} = 0$ and having
  exactly one of the values $P(i), P(i+1)$ be in $\{1, \ldots, k\}$. This follows from the definition of $P$. In a
  sense, the permutation $P$ represents a shuffle of the words $u$ and $v$.

  We now proceed to run the finite-state automaton defining the sofic shift $R$ on the encoded input. From
  \autoref{lem:OmegaRegularSofic}, we see that sofic shifts are $\omega$-regular, so let $\mathcal{A}$ be a
  deterministic automaton for $R$ with any standard acceptance condition. Observe that a run of $\mathcal{A}$ on an
  infinite word with support of size at most $k+\ell$ is determined entirely by its states corresponding to at most
  $k+\ell$ nonzero letters.

  Given input $\rep(n)$, we can compute the state $q'$ the automaton $\mathcal{A}$ is in after reading input $0^n$ from
  any state $q$. If $n$ is small enough that $\mathcal{A}$ does not enter a loop, then we can use a lookup table.
  Suppose $n$ is not small. Let $K$ be the least common multiple of the lengths of the loops $\mathcal{A}$ may enter.
  Simulate $\mathcal{A}$ for $k$ steps until a loop is reached, compute in parallel $n - k$ modulo $K$ (possible since
  $\ans{S}$ is addable), and use again a lookup table.

  We describe an NFA $\mathcal{B}$ as follows. On each component of the input $(w_1, t_2, \ldots, t_{k+\ell})^\#$, the
  automaton $\mathcal{B}$ simulates a computation of $\mathcal{A}$ in parallel. On the $i$th component, $i \geq 2$, we
  guess a state $q'_i$ of $\mathcal{A}$ and find a state $q$ such that $\mathcal{A}$ is at state $q$ after reading
  input $0^{\val(t_i)}$ like in the previous paragraph. Let $x$ be the letter over $A \times B$ such that the infinite
  word encoded by the input has letter $x$ at position $\val(w_1) + \val(t_1) + \dotsm + \val(t_i)$. Based on $P$, $u$,
  and $v$, we can check if there is a transition to state $q_i$ with letter $x$ (if not, we reject the computation). On
  the first component, we do the same, but we read $0^{\val(w_1)}$ and start from the initial state of $\mathcal{A}$.
  When the computations have finished, we can verify if $q'_i = q_{i-1}$ for $i = 2, \ldots, k+\ell$. If this is the
  case, there is a path from the initial state of $\mathcal{A}$ to state $q_{k+\ell}$ whose label corresponds to the
  prefix of length $k + \ell + \val(w_1) + \val(t_1) + \dotsm + \val(t_{k+\ell})$ of the encoded infinite word.
  Finally, we check if $q_{k+\ell}$ belongs to the finite set of states of $\mathcal{A}$ which accept when only zeros
  are read and accept if and only this is so. (Note that in all the standard acceptance conditions, acceptance is a
  shift-invariant tail event, even if the set of valid runs need not be shift-invariant in general.)

  By construction, the automaton $\mathcal{B}$ accepts, with respect to $P$, $u$, and $v$, the encodings of infinite
  words in $R$ such that the support of the first component spells the word $u$ and the support of the second component
  the word $v$. In other words, the automaton accepts the language $L'$. The claim follows.
\end{proof}

We can now prove our claims.

\begin{proof}[Proof of \autoref{prop:Conjugacy}]
  Consider first weak $\ans{S}$-codability. Suppose that $\phi\colon Y \to X$ is a conjugacy and $\phi$ maps the zero
  point associated with $Y$ to $0^\omega$. We select $R$ to be the graph of the conjugacy. By \autoref{prp:abstract},
  it suffices to verify that $R$ is locally finite. 

  Let $r$ be a biradius of $\phi$ (a radius common to both $\phi$ and $\phi^{-1}$). Since the zero point associated
  with $Y$ maps to $0^\omega$, the mapping $\phi$ can produce at most $2r + 1$ new nonzero letters for each nonzero
  letter in a word of $Y$. Therefore if $\infw{y}$ in $Y$ contains $k$ nonzero letters, then $\phi(\infw{y})$ contains
  at most $(2r+1)k$ nonzero letters. Therefore $|m R^\Sigma| \leq (2r + 1)m$ for all $m$. An analogous argument shows
  that $|R^\Sigma m| \leq (2r+1)m$ for all $m$, so it follows that $R^\Sigma$ is locally finite.
  
  For the complete claim, simply observe, again, that if $\phi\colon X \to Y$ is a conjugacy, $X$ has bounded sums if
  and only if $Y$ does.
\end{proof}

\begin{proof}[Proof of \autoref{prop:CountableSofic}]  
  Let $Y$ be a sofic shift with a unique periodic point. It is clear that this periodic point must be $a^\omega$ for a
  letter $a$ for otherwise we would have more than one periodic point. Select a permutation $\pi$ such that
  $\pi(a) = 0$. By \autoref{lem:UniquePeriodicPoint}, $\pi(Y)$ has finite coding dimension. For the first claim, it now
  suffices to to show that $\pi(Y)$ is weakly $\ans{S}$-codable. Select $R = \{0^\omega\} \times \pi(Y)$. Since
  $\pi(Y)$ has finite coding dimension, it is plain that $|0 R^\Sigma|$ and $|R^\Sigma 0|$ are finite, so $R$ is
  locally finite. The claim follows from \autoref{prp:abstract} as $\{0^\omega\}$ is clearly a weakly $\ans{S}$-codable
  sofic shift.
  
  For the general case, say $Y$ is sofic and $a^\omega \in Y$ for a letter $a$. Select a permutation $\pi$ such that
  $\pi(a) = 0$. The restriction $Z$ of $\pi(Y)$ to words with sum at most $d$ is also sofic for all $d$. By the above,
  $Z$ is $\ans{S}$-codable, and it follows that $Y$ is weakly $\ans{S}$-codable.
\end{proof}


\begin{proof}[Proof of \autoref{prop:HereditaryClosure}]
  Suppose that $Y$ is $\ans{S}$-codable, and let $\tilde{Y}$ be its hereditary closure. Define
  \begin{equation*}
    R = \{(\infw{x}, \infw{y}) \in \N^\N \times \N^\N : \text{$\infw{x} = x_0 x_1 \dotsm$, $\infw{y} = y_0 y_1 \dotsm$, and $x_i \geq y_i$ for all $i$}\}.
  \end{equation*}
  The set $R$ is clearly a sofic shift. Moreover, the relation $R \cap (Y \times \tilde{Y})$ is locally finite since
  the coding dimension of $Y$ is finite. The claim follows from \autoref{prp:abstract}.
\end{proof}

\autoref{prop:HereditaryClosure} is not true under weak $\ans{S}$-codability. One can obtain examples by assuming that
$Y$ has no words with finite support. It is possible to imagine slightly modified definitions that disallow this, so we
give a more interesting example.

\begin{example}
  Let $N$ be a subset of $2\N$, and let $X_N$ be the smallest subshift containing the infinite words
  \begin{equation*}
    0^m 10^{2n} \prod_{i = 1}^n 10^{2n-2i+1} \cdot 0^\omega
  \end{equation*}
  where $m \in \N$, $2n \in N$. Clearly the hereditary closure $\tilde{X}_N$ of $X_N$ contains the infinite word
  $10^{2n}10^{\omega}$ if and only if $2n \in N$. Let
  \begin{equation*}
    Z = \{\infw{x} \in \tilde{X}_N : \sum \infw{x} = 2\}.
  \end{equation*}
  Since there are uncountably many choices for $N$ and there are only countably many $\ans{S}$-codable subsets of
  $\{0, 1\}^\N$, we can choose $N$ so that $Z$ is not $\ans{S}$-codable. Since $Z$ is $\ans{S}$-codable if
  $\tilde{X}_N$ is weakly $\ans{S}$-codable, we find the desired counterexample if we show that $X_N$ is always weakly
  $\ans{S}$-codable.

  Let $Y_n$ be the set $\{\infw{x} \in X_n : \sum \infw{x} \leq n\}$. It suffices to show that $Y_n$ is
  $\ans{S}$-codable for all $n$. It is straightforward to see that there exist finitely many words $w$ such that the
  words of $Y_n$ are of the form $0^m w 0^\omega$ for $m \in \N$. The closure of the union of words of such form is
  clearly a sofic shift with exactly one periodic point. \autoref{prop:CountableSofic} implies that $Y_n$ is
  $\ans{S}$-codable when $\ans{S}$ is addable.
\end{example}

\begin{remark}\label{rem:Addability}
  We have used the assumption of addability of the ANS $\ans{S} = (L, \prec)$ in this section, since this is the
  natural generality for computing winning shifts. However, the proof of \autoref{prp:abstract} (and thus all the
  results in this section) really only need the property that given two words $u$ and $v$ in $L$, the difference
  $\val(u) - \val(v)$ can be computed modulo a constant by a finite-state machine, or equivalently the value $\val(u)$
  can be computed modulo a constant for a single word. It is known that this is true in the classical case where
  $\prec$ is the radix order \cite{2001:numeration_systems_on_a_regular_language}. We believe that this generalizes to
  every comparable ANS, but this is beyond the scope of this paper. Applied to the ANS on $0^*$ with radix order, this
  could be used to give another proof for the implication ``sofic with a unique periodic point $\implies$ $1$-codable''
  of \autoref{prop:UPP1Codable}.
\end{remark}

\section{Open problems}

\begin{question}
  Is there an addable ANS $\ans{S}$ such that some $\ans{S}$-automatic word has a winning shift with unbounded sums?
\end{question}

If the answer is negative, then one can drop the assumption of sublinear complexity in \autoref{thm:main} (the only
point of this property is that it is the most natural word-combinatorial assumption we know that implies bounded sums).
Cassaigne's example provides only a comparable ANS with this property.

For a general substitutive subshift we do not know how to describe even the bounded sum restrictions of winning shifts
in general. It seems natural to guess that the winning shift is somehow ``substitutive'' in this case too, and we
propose the following question:

\begin{problem}
  Devise an effective procedure that, given a fixed point $\infw{x}$ of a substitution $\tau$ and a word $v$, computes
  a finite description of $P_v(W(X))$ where $X$ is the subshift generated by $\infw{x}$.
\end{problem}

An obvious candidate for a finite description would be to show recognizability with respect to the corresponding
Dumont-Thomas numeration system for $P_v(W(X))$, but we do not see why this would hold.

When the winning shift of a subshift has infinite coding dimension, we do not obtain finite description of the winning
shift even in the addable case, only weak $\ans{S}$-codability.

\begin{problem}
  Find a finitary way to describe the winning shift in the weakly $\ans{S}$-recognizable case. In particular, describe
  $W(Z)$ for the subshift $Z$ from \autoref{prp:z_icd}.
\end{problem}

Besides these theoretical problems, it would be of interest to try to extend the practical computations in
\autoref{sec:Concrete} to examples where the winning shift has larger coding dimension. We expect that the methods
scale very badly, but this intuition has often turned out to be wrong in the setting of automatic theorem-proving; see
\cite[Remark~3]{2013:automatic_theorem-proving_in_combinatorics_on_words} and
\cite{2016:decision_algorithms_for_fibonacci-automatic_words_i} and its references.

We also mention two somewhat tangential problems arising from the considerations in \autoref{sec:Robust}. First, if $X$
and $Y$ are conjugate subshifts, it is not clear to us whether $W(X)$ and $W(Y)$ are somehow related to each other as
well (admittedly, it is not clear what a negative answer could be). Second, we conjecture that the results of
\autoref{sec:Robust} extend to arbitrary comparable ANS $\ans{S}$.

\section*{Acknowledgements}
The first author is grateful for his postdoc position in TCSMT which allowed him to freely focus on research in
2018--2021. The second author was supported by the Academy of Finland grant 2608073211.

\printbibliography

\end{document}